\documentclass[aps,pra,superscriptaddress,twocolumn,longbibliography,showpacs]{revtex4-2}

\usepackage{pdfpages}

\makeatletter
\AtBeginDocument{\let\LS@rot\@undefined}
\makeatother

\usepackage{soul}
\setstcolor{red}
\usepackage{comment}

\usepackage{lipsum}
\usepackage{caption}
\DeclareCaptionJustification{justified}{\leftskip=0pt \rightskip=0pt \parfillskip=0pt plus 1fil}

\usepackage{graphicx}% Include figure files
\usepackage{dcolumn}% Align table columns on decimal point
\usepackage{bm}% bold math
\usepackage{caption}
\usepackage{amsmath,amssymb,amsthm}
\usepackage{braket}
\usepackage{empheq}
\usepackage{subfig}
\usepackage{tikz}
\usepackage{url}
\usepackage{hyperref}
\usepackage{quantikz}
\usepackage{xcolor}
\usepackage{float}
\usepackage{mathrsfs}

\definecolor{blueviolet}{rgb}{0.2, 0.2, 0.6}
\definecolor{webgreen}{rgb}{0,.5,0}
\definecolor{webbrown}{rgb}{.6,0,0}

\hypersetup{
    breaklinks=true,
    colorlinks=true,       % false: boxed links; true: colored links
    linkcolor=blueviolet,          % color of internal links
    citecolor=webbrown,        % color of links to bibliography
    filecolor=magenta,      % color of file links
    urlcolor=webgreen,           % color of external links
    runcolor=cyan
}

\newtheorem{theorem}{Theorem}
\newtheorem*{theorem*}{Theorem}

\newtheorem{lemma}{Lemma}
\newtheorem{proposition}{Proposition}

\usepackage{etoolbox}
\makeatletter
\patchcmd{\@citex}{\unskip, et al.}{\unskip et al.}{}{}
\patchcmd{\bib@device}
  {\addcontentsline{toc}{section}{\protect\numberline{}\refname}}
  {}{}{}
\makeatother

\begin{document}

\title{Randomized product formulas beyond optimal deterministic scaling}

\author{Leeseok Kim}
\affiliation{Center for Quantum Information and Control, University of New Mexico, NM 87131, USA}

\author{Luis Pedro García-Pintos}
\affiliation{Quantum and Condensed Matter Physics Group (T-4), Theoretical Division, Los Alamos National Laboratory, NM 87545, USA}

\begin{abstract}
Product formulas, also known as Trotter formulas, are among the most widely used and practical methods for simulating quantum systems on quantum computers. Here we introduce two new classes of randomized product formulas for simulating Hamiltonians with separated energy scales, $H=A+\alpha B$, where $\alpha$ is small. In the standard access model, where one can implement exponentials of $A$ and $B$ separately, our randomized formulas achieve $\mathcal O(\alpha^2)$ error scaling at the cost of only doubling the gate depth of the corresponding deterministic formula. We further prove an $\Omega(\alpha)$ lower bound for deterministic product formulas. In a stronger access model, allowing exponentials of $A+\alpha B_\ell$ for $B = \sum_{\ell}B_\ell$, our randomized formula, based on Trotter Heuristic Resource Improved Formulas for Time-dynamics (THRIFT)~[J. L. Bosse et al., Nat. Commun. 16, 2673 (2025)], achieves $\mathcal O(\alpha^3)$ error scaling with only constant-factor expected gate overhead. We also establish an $\Omega(\alpha^2)$ lower bound for deterministic product formulas in this access model. Numerical simulations confirm gate-count reductions for simulating physically motivated systems.
\end{abstract}

\maketitle

\textbf{\emph{Introduction.---}} Feynman’s vision of simulating quantum dynamics using computers governed by quantum mechanics~\cite{feynman1982simulating} has become a central motivation for quantum computing. Although many Hamiltonian-simulation algorithms have since been developed~\cite{childs2010on,childs2012hamiltonian,berry2015hamiltonian,berry2015simulating,low2017optimal,low2019hamiltonian}, Suzuki--Trotter formulas~\cite{trotter1959on,suzuki1976generalized,suzuki1990fractal}, also known as \emph{product formulas}, remain among the most practical methods owing to their simplicity and their often better-than-worst-case performance in practice~\cite{childs2018toward,childs2021theory}. Consequently, they have already been used to simulate many-body dynamics on current quantum processors~\cite{lanyon2011universal,martinez2016realtime,arute2020observation,kim2023evidence,cochran2025visualizing}.

Beyond its applications to various quantum protocols~\cite{viola2005random,santos2006enhanced,boixo2009eigenpath,wallman2016noise,wan2022randomized,martyn2025halving,gosset2025multi,yi2026faster,kim2026randomized,gunther2026phase,harrow2026randomized}, clever use of classical randomness has been shown to substantially improve the performance of product formulas~\cite{campbell2019random,childs2019faster,ouyang2020compilation,faehrmann2022randomizing,cho2024doubling,chen2025randomized}. The key mechanism is that averaging over multiple circuit realizations can cancel error terms that would persist if one repeatedly used a single  realization. This raises the following question:
\begin{center}
\emph{Can randomized product formulas surpass the intrinsic barriers of deterministic ones?}
\end{center}

In this Letter, we answer this question affirmatively for Hamiltonians with separated energy scales. We propose two new classes of randomized product formulas that {provably surpass deterministic barriers}. Concretely, we consider simulating Hamiltonians of the form
\begin{align}
H = A + \alpha B,
\end{align}
where $A$ and $B$ have comparable norms and $\alpha \ll1$, so that $\alpha B$ is a weak perturbation relative to $A$. Such Hamiltonians arise in many settings, including perturbation theory, open quantum systems, and physical systems with weak many-body interactions alongside strong one-body terms. Accordingly, efficient algorithms for simulating them have recently attracted interest~\cite{low2018hamiltonian,berry2020time,an2022timedependent,sharma2024hamiltonian,bosse2025efficient,bagherimehrab2026faster}.

Table~\ref{tab:summary} summarizes our main results for simulating $H$ up to time $t$ with $r$ Trotter steps. We study two access models. \textbf{(1)} In the standard access model, as in conventional product formulas, the Hamiltonians $A$ and $B$ are assumed to be ``easy'', so that their exponentials can be implemented directly. In this setting, standard $2k$-th order product formulas incur an $\mathcal{O}(\alpha t^{2k+1}/r^{2k})$ error: their leading error is governed by nested commutators of length $2k+1$ built from $A$ and $\alpha B$~\cite{childs2021theory}, and every nonvanishing such commutator contains at least one factor of $\alpha B$. By contrast, our randomized product formula achieves an $\mathcal{O}(\alpha^2 t^{2k+1}/r^{2k})$ error while using only twice as many gates as the corresponding deterministic formula. We further prove a no-go result showing that no deterministic formula built from finitely many such exponentials can generally obtain this $\mathcal{O}(\alpha^2)$ scaling. \textbf{(2)} In a stronger access model, motivated by recent work~\cite{bosse2025efficient}, we assume access to exponentials of $A$ and $A+\alpha B_\ell$, where $B=\sum_\ell B_\ell$. While deterministic $2k$-th order THRIFT achieves an error $\mathcal{O}(\alpha^2t^{2k+1}/r^{2k})$~\cite{bosse2025efficient}, our randomized product formula achieves an $\mathcal{O}(\alpha^3t^{2k+1}/r^{2k})$ error. We also establish a no-go result that no deterministic formula comprising finitely many such exponentials can generally achieve $\mathcal O(\alpha^3)$ scaling. %Moreover, it only incurs a constant-factor gate overhead compared to THRIFT~\cite{bosse2025efficient}. 

\begin{table*}[t!]
\caption{Summary of deterministic and randomized product-formula scalings for simulating $A+\alpha B$ with $\alpha\ll1$ over time $t$ using $r$ Trotter steps. We consider two access models: the standard access model used in the usual product-formula setting, with exponentials of $A$ and $\alpha B$, and the stronger access model, with exponentials of $A$ and $A+\alpha B_\ell$ where $B=\sum_\ell B_\ell$, motivated by Ref.~\cite{bosse2025efficient}. The no-go lower bound column reports the unavoidable leading dependence on $\alpha$ for any finite deterministic formula. Gate overhead is measured as the ratio of the number of elementary exponentials, in the corresponding access model, used by the randomized formula relative to the deterministic counterpart. Our randomized product formulas go beyond the corresponding deterministic no-go lower bounds while incurring only constant gate overhead.}
\label{tab:summary}
\begin{center}
\renewcommand{\arraystretch}{1.35}
\resizebox{\textwidth}{!}{
\begin{tabular}{@{}ccccc@{}}
\hline\hline
Access model
&
Deterministic formula
&
No-go lower bound
&
Randomized formula
&
Gate overhead
\\
\hline
\begin{tabular}[c]{@{}c@{}}
Standard access\\
$e^{-i sA}$, $e^{-i s\alpha B}$
\end{tabular}
&
\begin{tabular}[c]{@{}c@{}}
Product formulas\\
$\mathcal O\left(\alpha t^{2k+1}/r^{2k}\right)$
\end{tabular}
&
\begin{tabular}[c]{@{}c@{}}
%No-go lower bound\\
(Proposition~\ref{prop:deterministic-barrier-standard})\\
$\Omega(\alpha)$
\end{tabular}
&
\begin{tabular}[c]{@{}c@{}}
Randomized product formula\\
(Theorem~\ref{thm:randomized-high-order})\\
$\mathcal O\left(\alpha^2 t^{2k+1}/r^{2k}\right)$
\end{tabular}
&
\begin{tabular}[c]{@{}c@{}}
$<2$
\end{tabular}
\\
\hline
\begin{tabular}[c]{@{}c@{}}
Stronger access\\
$e^{-i sA}, e^{-i s(A+\alpha B_\ell)}$
\end{tabular}
&
\begin{tabular}[c]{@{}c@{}}
THRIFT~\cite{bosse2025efficient}\\
$\mathcal O\left(\alpha^2t^{2k+1}/r^{2k}\right)$
\end{tabular}
&
\begin{tabular}[c]{@{}c@{}}
%No-go lower bound\\
(Proposition~\ref{prop:strong-access-barrier}) \\
$\Omega(\alpha^2)$
\end{tabular}
&
\begin{tabular}[c]{@{}c@{}}
Randomized THRIFT\\
(Theorem~\ref{thm:randomized-thrift})\\
$\mathcal O \left(\alpha^3t^{2k+1}/r^{2k}\right)$
\end{tabular}
&
\begin{tabular}[c]{@{}c@{}}
$\mathcal{O}(1)$
\end{tabular}
\\
\hline\hline
\end{tabular}%
}
\end{center}
\end{table*}

\textbf{\emph{Standard access model.---}} We first consider the usual product-formula access model: for any real $s$, we can implement $e^{-isA}$ and $e^{-is\alpha B}$.

\textbf{\emph{(i) Randomized first-order formula:}} For any real $\tau$ and $u\in[0,1]$, define the shifted first-order formula
\begin{align}
\label{eq:first-order-randomized-trotter-single-step}
\mathscr R_1(\tau,u) := e^{-i(1-u)\tau A}e^{-i\alpha\tau  B}e^{-iu\tau A}.
\end{align}
At $u=0$ and $u=1$, $\mathscr R_1$ recovers the two standard first-order product formulas, so $u$ continuously shifts the $B$-step between them.

\begin{figure}[b]
\centering
\includegraphics[width=8.6cm]{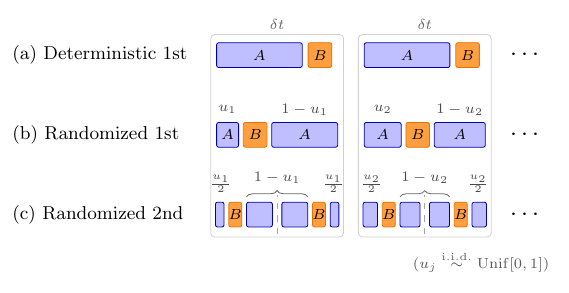}
\caption{Schematics of deterministic and randomized product formulas for simulating $A+\alpha B$ over repeated time steps $\delta t$ in the standard access model.  (a) The deterministic 1st-order formula, $(e^{-i\delta t \alpha B} e^{-i\delta t A})^r$. (b) One instance of the 1st-order randomized formula, Eq.~\eqref{eq:randomized-1st-order}. (c) One instance of the 2nd-order randomized formula, Eq.~\eqref{eq:randomized-2nd-order}, obtained by symmetrizing (b); higher-order formulas are recursively constructed from this via Eq.~\eqref{eq:randomized-high-order}.}
\label{fig:randomized-product-formulas}
\end{figure}

The randomized formula is obtained by sampling this shift independently at each step. Let $\delta t=t/r$ be a single Trotter step and $\mathbf u=(u_1,\ldots,u_r)$ with $u_j\sim\operatorname{Unif}[0,1]$ independently. Define 
\begin{align}
\label{eq:randomized-1st-order}
\mathscr R_1^{(r)}(t,\mathbf u) := \mathscr R_1(\delta t,u_r)\cdots \mathscr R_1(\delta t,u_1),
\end{align}
as shown schematically in Fig.~\ref{fig:randomized-product-formulas}. After averaging over the sampled shifts, the resulting evolution is described by the mixed-unitary channel
\begin{align}
\mathcal R_1^{(r)}(\rho) := \mathbb E_{\mathbf u}\left[\mathscr R_1^{(r)}(t,\mathbf u)\; \rho \; \mathscr R_1^{(r)}(t,\mathbf u)^\dagger\right].
\end{align}
Surprisingly, this simple randomization yields 
\begin{align}
\label{eq:randomized-1st-order-bound}
\left\|\mathcal R_1^{(r)}-\mathcal U_H(t) \right\|_\diamond = \mathcal O \left(\frac{\alpha^2t^3}{r^2}\right),
\end{align}
where $\mathcal U_H(t)(\rho) := e^{-itH}\rho e^{itH}$, which demonstrates the scaling improvement in \emph{both} $\alpha$ and $t$. (See the Supplemental Material (SM) for the derivation.) Here, $\|A\|:=\sup_{\|\ket{\psi}\|_2=1}\|A\ket{\psi}\|_2$ denotes the operator (spectral) norm, and $\|\Phi\|_\diamond := \sup_{\rho} \left\|(\Phi\otimes\mathcal I_d)(\rho)\right\|_1$ denotes the diamond norm~\cite{watrous2018theory}. By completely removing the $\mathcal O(\alpha)$ contribution, the randomized first-order formula has potential to outperform deterministic higher-order formulas that scale as $\mathcal{O}(\alpha t^{2k+1}/r^{2k})$, in sufficiently perturbative regimes. Moreover, the randomization comes at essentially \emph{no extra gate overhead}: after merging adjacent $A$-evolutions, each realization uses $r$ weak $B$-evolutions and $r+1$ $A$-evolutions, only one more $A$-block than the deterministic first-order formula.
%compared with $r$ weak $B$-evolutions and $r$ $A$-evolutions for the deterministic first-order product formula.

To see why the $\mathcal O(\alpha)$ term cancels, it suffices to analyze the error in a single Trotter step. (Independence of the samples then lets the estimate telescope over $r$ steps.) Let $B_I(s)=e^{isA}Be^{-isA}$. In the interaction picture with respect to $A$, the ideal evolution can be written as
\begin{align}
\label{eq:exact-evolution-alpha-exapnsion}
e^{-i\delta t H} = e^{-i\delta t A} \!\left(I \! - \!i\alpha\int_0^{\delta t} B_I(s) ds + \mathcal O(\alpha^2\delta t^2)\!\right),
\end{align}
while Eq.~\eqref{eq:first-order-randomized-trotter-single-step} satisfies
\begin{align}
\label{eq:first-order-randomized-expansion}
\mathscr R_1(\delta t,u) &= e^{-i\delta t A} \left(e^{iu\delta t A}e^{-i\alpha\delta t B}e^{-iu\delta t A}\right)  \nonumber \\
&= e^{-i\delta t A} \left(I-i\alpha\delta t B_I(u\delta t) + \mathcal O(\alpha^2\delta t^2)\right).
\end{align}
Averaging over $u\sim\operatorname{Unif}[0,1]$ and using $s=u\delta t$ gives
\begin{align}
\label{eq:first-order-randomized-expansion-match}
\delta t \mathbb E_u[B_I(u\delta t)] = \int_0^{\delta t} B_I(s) ds,
\end{align}
so Eq.~\eqref{eq:first-order-randomized-trotter-single-step} agrees with the ideal evolution through first order in $\alpha$ on average.

Finally, the cancellation of the $\mathcal O(\delta t^2)$ term can be viewed as a continuous analogue of the randomized forward/reverse ordering of Ref.~\cite{childs2019faster}, with the discrete ordering replaced by a uniform insertion point $u\in[0,1]$.

\textbf{\emph{(ii) Randomized high-order formula:}} Define the shifted second-order formula by symmetrizing Eq.~\eqref{eq:first-order-randomized-trotter-single-step}
\begin{align}
\label{eq:randomized-2nd-order}
\mathscr R_2(\tau,u) &:= \mathscr R_1(\tau/2,1-u)\mathscr R_1(\tau/2,u), %\nonumber \\
%&= e^{-i(1-u)\tau \frac{A}{2}}e^{-i\alpha\tau \frac{B}{2}} e^{-iu\tau A} e^{-i\alpha\tau \frac{B}{2}}e^{-i(1-u)\tau \frac{A}{2}}.
\end{align}
as illustrated in Fig.~\ref{fig:randomized-product-formulas}(c). Following Suzuki's recursive construction~\cite{suzuki1990fractal}, define
\begin{align}
\label{eq:randomized-high-order}
\mathscr R_{2k}(\tau,u) :={} & \mathscr R_{2k-2}(s_k\tau,u)^2 \mathscr R_{2k-2}((1-4s_k)\tau,u) \nonumber\\
&\times \mathscr R_{2k-2}(s_k\tau,u)^2 
\end{align}
for $k \geq 2$, where $s_k=1/(4-4^{1/(2k-1)})$. For simulating time $t$ with $\delta t = t/r$ and $\mathbf u=(u_1,\ldots,u_r)$, define
\begin{align}
\mathscr R_{2k}^{(r)}(t,\mathbf u) := \mathscr R_{2k}(\delta t,u_r)\cdots
\mathscr R_{2k}(\delta t,u_1).
\end{align}
Averaging over $u_j \sim \operatorname{Unif}[0,1]$ gives the mixed-unitary channel
\begin{align}
\label{eq:randomized-high-order-formula}
\mathcal R_{2k}^{(r)}(\rho) := \mathbb E_{\mathbf u}\left[\mathscr R_{2k}^{(r)}(t,\mathbf u) \; \rho \; \mathscr R_{2k}^{(r)}(t,\mathbf u)^\dagger \right].
\end{align}
This randomized protocol retains the $2k$-th order scaling in $\delta t$, while improving the dependence on $\alpha$ to $\mathcal O(\alpha^2)$.
\begin{theorem}
\label{thm:randomized-high-order}
For each fixed $k\ge1$, the randomized $2k$-th order formula given in Eq.~\eqref{eq:randomized-high-order-formula} yields
\begin{align}
\left\|\mathcal R_{2k}^{(r)}-\mathcal U_H(t) \right\|_\diamond = \mathcal O \left( \frac{\alpha^2 t^{2k+1}}{r^{2k}} \right).
\end{align}
\end{theorem}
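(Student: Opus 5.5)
The plan is to reduce the multi-step channel error to a single-step estimate and then handle that step by separating the error operator into its first-order-in-$\alpha$ part and a remainder. First, by telescoping and the fact that mixed-unitary channels are contractive in diamond norm, independence of the $u_j$ gives
\begin{align}
\left\|\mathcal R_{2k}^{(r)}-\mathcal U_H(t)\right\|_\diamond \le r\,\left\|\mathcal R_{2k}(\delta t)-\mathcal U_H(\delta t)\right\|_\diamond ,
\end{align}
where $\mathcal R_{2k}(\delta t)(\rho)=\mathbb E_u[\mathscr R_{2k}(\delta t,u)\rho\,\mathscr R_{2k}(\delta t,u)^\dagger]$. It therefore suffices to show that the single-step channel error is $\mathcal O(\alpha^2\delta t^{2k+1})$.

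Next, write $\mathscr R_{2k}(\delta t,u)=e^{-i\delta t H}(I+E(u))$. The channel difference, conjugated by $e^{-i\delta tH}$, is
\begin{align}
\rho\mapsto \mathbb E_u[E(u)]\rho+\rho\,\mathbb E_u[E(u)^\dagger]+\mathbb E_u[E(u)\rho E(u)^\dagger].
\end{align}
The bound on $E(u)$ itself follows from symmetry. One checks that $\mathscr R_2(\tau,u)=e^{-iu\tau A/2}e^{-i\alpha\tau B/2}e^{-i(1-u)\tau A}e^{-i\alpha\tau B/2}e^{-iu\tau A/2}$ is palindromic, hence symmetric ($\mathscr R_2(-\tau,u)=\mathscr R_2(\tau,u)^{-1}$). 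Suzuki's recursion then makes $\mathscr R_{2k}(\cdot,u)$ a genuine $2k$-th order formula for every fixed $u$. The standard commutator-scaling error analysis~\cite{childs2021theory} gives $\|E(u)\|=\mathcal O(\alpha\,\delta t^{2k+1})$ uniformly in $u\in[0,1]$, because every nonvanishing nested commutator contains a factor $\alpha B$. The quadratic term is thus $\mathcal O(\alpha^2\delta t^{4k+2})$, and it remains to show $\|\mathbb E_u[E(u)]\|=\mathcal O(\alpha^2\delta t^{2k+1})$.

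To this end, split $E(u)=\alpha E_1(u)+E_{\ge2}(u)$ by order in $\alpha$. The key identity is $\mathbb E_u[E_1(u)]=0$, i.e., the averaged formula reproduces the Dyson series through first order in $\alpha$. In the interaction picture with respect to $A$, the first-order term of $\mathscr R_{2k}(\delta t,u)$ is
\begin{align}
-i\alpha\sum_j c_j\,\delta t\, B_I(t_j(u)),
\end{align}
where $c_j\delta t$ are the (possibly negative) $B$-step durations and $t_j(u)$ is the accumulated $A$-time at which the $j$-th $B$-step sits. For $\mathscr R_2$ the two insertion points are uniform on $[0,\delta t/2]$ and $[\delta t/2,\delta t]$, so the average equals $\int_0^{\delta t}B_I(s)\,ds$, exactly as in Eq.~\eqref{eq:first-order-randomized-expansion-match}. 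Then proceed by induction on $k$, with the hypothesis that a block $\mathscr R_{2k-2}(c\tau,u)$ starting at $A$-time $a$ has averaged first-order term $\int_a^{a+c\tau}B_I(s)\,ds$ (a signed integral when $c<0$). This holds because each block is a rescaled copy, and the statement is linear in the insertion positions. Since the five blocks in Eq.~\eqref{eq:randomized-high-order} occupy consecutive signed intervals summing to $[0,\delta t]$, the integrals telescope. Sharing the same $u$ across blocks is harmless, because only linearity of expectation is used.

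The main obstacle is the remainder: showing $\|E_{\ge 2}(u)\|=\mathcal O(\alpha^2\delta t^{2k+1})$ uniformly in $u$, rather than merely $\mathcal O(\alpha^2\delta t^2)$. I would argue via the joint power series in $(\alpha,\delta t)$. The $2k$-th order condition holds for every $\alpha$, so in the expansion of $E(u)$ every coefficient of $\alpha^m\delta t^n$ with $n\le 2k$ vanishes. Consequently, $E_{\ge2}=E-\alpha\,\partial_\alpha E|_{\alpha=0}$ also starts at order $\delta t^{2k+1}$. To make this quantitative, apply the Taylor integral remainder in $\alpha$ to the commutator-form error representation of $E(u)$, i.e., take $\partial_\alpha^2$ of the nested-commutator integrals. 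This gives a bound of the form $\alpha^2\delta t^{2k+1}$ times a polynomial in $\|A\|,\|B\|$ with $\alpha$-independent constants, for $\alpha\delta t\|B\|$ and $\delta t\|A\|$ bounded. Combining the pieces gives a single-step error of $\mathcal O(\alpha^2\delta t^{2k+1})$, and multiplying by $r$ yields the stated bound.
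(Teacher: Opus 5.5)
Your proposal is correct and follows the paper's proof in its essential steps: Suzuki symmetry of $\mathscr R_2$ gives the $\delta t$-order, marginal uniformity of each insertion point makes the averaged linear-in-$\alpha$ term equal to $\int_0^{\delta t}B_I(s)\,ds$, and a joint Taylor/analyticity argument in $(\alpha,\delta t)$ combines the two. The only real difference is the reduction step: you telescope at the channel level and control the quadratic term $\mathbb E_u[E\rho E^\dagger]$ with a per-realization commutator-scaling bound $\|E(u)\|=\mathcal O(\alpha\,\delta t^{2k+1})$, whereas the paper uses the sharpened mixing lemma (Lemma~\ref{lem:mixing-lemma}), which makes that input unnecessary because unitarity of $I+E(u)$ gives $\mathbb E_u[E^\dagger E]=-\mathbb E_u[E+E^\dagger]$, so the quadratic term is already bounded by $2\|\mathbb E_u E\|$.
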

The proof, given in the Supplemental Material (SM), has two ingredients: the usual Suzuki recursion recovers the desired order $2k$, while the random shift cancels the term linear in $\alpha$ by the same interaction-picture mechanism as above. Crucially, this requires only \emph{twice the gate depth}: one $\mathscr R_{2k}$ step uses $2\times 5^{k-1}$ weak $B$-evolutions, compared with $5^{k-1}$ for the deterministic Suzuki formula of the same order, up to merging adjacent $A$-evolutions. Thus, the protocol merely doubles the circuit depth while achieving a quadratic improvement in the $\alpha$-scaling, substantially reducing the error for small $\alpha$.

\textbf{\emph{(iii) Deterministic barrier:}} We now show that the improvement above fundamentally relies on randomization. In particular, under the same standard access model, no finite deterministic product formula can universally eliminate the $\mathcal O(\alpha)$ error. 

\begin{proposition}
\label{prop:deterministic-barrier-standard}
Fix $t\neq 0$ and a finite $m\in\mathbb N$. For real coefficients $a_1,\ldots,a_{m+1}$ and $b_1,\ldots,b_m$, define
$\mathscr D_m(t,\alpha) := e^{-ia_{m+1}tA} e^{-ib_m\alpha tB} \cdots e^{-ib_1\alpha tB} e^{-ia_1tA}$. Let $\mathcal D_m(t,\alpha)(\rho) := \mathscr D_m(t,\alpha)\rho \mathscr D_m(t,\alpha)^\dagger$. For every choice of coefficients independent of $A$, $B$, and $\alpha$, there exist Hermitian matrices $A$ and $B$ such that
\begin{align}
\left\| \mathcal D_m(t,\alpha) - \mathcal U_H(t) \right\|_\diamond = \Omega(\alpha).
\end{align}
\end{proposition}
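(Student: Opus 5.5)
Expand both $\mathscr D_m(t,\alpha)$ and $e^{-itH}$ to first order in $\alpha$, with $t$ and the coefficients fixed. Then exhibit Hermitian $A,B$ for which the $\mathcal O(\alpha)$ coefficients differ as unitaries, up to a global phase. The diamond-norm distance is then bounded below by $c\,\alpha$ for some constant $c>0$ independent of $\alpha$ when $\alpha$ is small.

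\emph{Step 1: zeroth order.} At $\alpha=0$ we have $\mathscr D_m=e^{-iatA}$ with $a:=\sum_j a_j$, and the exact evolution is $e^{-itA}$. If $a\neq 1$, pick $A$ with two eigenvalues whose difference $\lambda$ satisfies $(a-1)t\lambda\notin 2\pi\mathbb Z$. The two channels then differ already at $\alpha=0$, giving an $\Omega(1)\supseteq\Omega(\alpha)$ error. So assume $a=1$ from now on.

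\emph{Step 2: first order.} Write $c_j:=\sum_{i\le j}a_i$ for the partial sums and $\beta:=\sum_j b_j$. In the interaction picture,
\begin{align}
\mathscr D_m=e^{-itA}\Big(I-i\alpha t\sum_{j=1}^m b_j B_I(c_j t)+\mathcal O(\alpha^2)\Big),
\end{align}
while $e^{-itH}=e^{-itA}\big(I-i\alpha\int_0^t B_I(s)\,ds+\mathcal O(\alpha^2)\big)$. The $\mathcal O(\alpha^2)$ remainders are bounded by $\alpha^2 t^2\|B\|^2$ times a constant depending only on the coefficients, so they are uniform in $\alpha$. Set
\begin{align}
X:=t\sum_j b_jB_I(c_jt)-\int_0^tB_I(s)\,ds .
\end{align}
For unitaries $U=e^{-itA}(I-i\alpha Y+\mathcal O(\alpha^2))$ and $V=e^{-itA}(I-i\alpha Z+\mathcal O(\alpha^2))$ with $Y,Z$ Hermitian, the diamond distance of the induced channels is at least the distance minimized over global phases, which is $\alpha\min_\phi\|X-\phi I\|+\mathcal O(\alpha^2)$ with $X=Y-Z$. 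So it suffices to find $A,B$ with $X$ not proportional to the identity. Pure-state input and output suffice here: take a state $\psi$ with $\langle\psi|X|\psi\rangle$ separated from another expectation, or directly use the standard bound $\|\mathcal U-\mathcal V\|_\diamond\ge\min_\phi\|U-e^{i\phi}V\|$ up to constants.

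\emph{Step 3: the choice of $A,B$ (the main step).} Take a qubit with $A=\frac{\omega}{2}Z$ and $B=X_{\rm Pauli}$. Then $B_I(s)=\cos(\omega s)X+\sin(\omega s)Y$ is traceless, and
\begin{align}
X=\mathrm{Re}\,F(\omega)\,X+\mathrm{Im}\,F(\omega)\,Y,\qquad F(\omega):=t\sum_j b_j e^{i\omega c_jt}-\frac{e^{i\omega t}-1}{i\omega}.
\end{align}
This $X$ is traceless, so it is proportional to the identity only if it vanishes. We therefore need some $\omega$ with $F(\omega)\neq 0$. As a function of $\omega$, the first term of $F$ is a finite exponential sum, an almost-periodic function bounded in magnitude. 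The second term, $(e^{i\omega t}-1)/(i\omega)$, decays like $1/\omega$ and is not itself a finite exponential sum, because it has a $1/\omega$ factor. The two cannot coincide for all $\omega$. Concretely, $F\equiv0$ would force $\sum_j b_je^{i\omega c_jt}\to0$ as $\omega\to\infty$. But a nonzero finite exponential sum does not tend to zero, so it would have to vanish identically, and then $(e^{i\omega t}-1)/(i\omega)\equiv0$, which is false. A simpler alternative is to compare the behavior near $\omega=0$ together with analyticity, but the large-$\omega$ argument is cleaner.

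Fixing any $\omega$ with $F(\omega)\ne0$ then gives $\|X\|=|F(\omega)|>0$ with $A,B$ independent of $\alpha$, and Step 2 yields the $\Omega(\alpha)$ lower bound.
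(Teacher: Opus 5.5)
Your proof is correct. Its skeleton matches the paper's: the reduction to $\sum_j a_j=1$, the first-order interaction-picture mismatch $t\sum_j b_jB_I(c_jt)-\int_0^tB_I(s)\,ds$, and a qubit witness with $A$ diagonal and $B$ a Pauli $X$. The two substantive steps, however, are done differently.

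\emph{Nonvanishing of the mismatch.} The paper works at $\lambda=\omega t=0$. If $\int_0^1e^{i\lambda x}dx=\sum_jb_je^{i\lambda c_j}$ held for all $\lambda$, differentiating would give $\int_0^1x^n\,dx=\sum_jb_jc_j^n$ for every $n$. That is a finite quadrature exact on all polynomials, which the nonnegative polynomial $q(x)=\prod_j(x-c_j)^2$ refutes. You instead send $\omega\to\infty$: the continuum term decays like $1/\omega$, while a finite exponential sum with real frequencies cannot decay unless it vanishes identically. Both arguments express that a finite atomic measure cannot equal Lebesgue measure on $[0,1]$, tested with polynomials in one case and high frequencies in the other. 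The paper's test is elementary and self-contained, whereas yours rests on an almost-periodicity fact you assert without proof. It is easy to supply: group coincident partial sums $c_j$ (they need not be distinct), then extract each total coefficient with the Bohr mean $\lim_{T\to\infty}T^{-1}\int_0^Tg(\omega)e^{-i\omega\nu}\,d\omega$, which tends to zero whenever $g\to0$ at $+\infty$.

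\emph{Diamond-norm bound.} The paper simply evaluates both channels on $\rho_1=|1\rangle\langle1|$ and reads off $\|-i\alpha[K,\rho_1]\|_1=2|\alpha|\,|K_{21}|$, so no global phase ever enters. Your route through $\|\mathcal U-\mathcal V\|_\diamond\ge\min_\phi\|U-e^{i\phi}V\|$ is valid. You still need a short estimate, uniform in $\phi$, to obtain $\alpha\min_{\mu\in\mathbb R}\|X-\mu I\|-\mathcal O(\alpha^2)$. In exchange, it shows that any mismatch not proportional to $I$ suffices, not only one with a nonzero off-diagonal entry.

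A minor slip: with $A=\frac{\omega}{2}Z$ one gets $B_I(s)=\cos(\omega s)X_{\rm Pauli}-\sin(\omega s)Y_{\rm Pauli}$. This is harmless, since your mismatch still has norm $|F(\omega)|$.
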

The proof is given in the SM. The intuition is simple: a deterministic formula samples the interaction-picture perturbation at only finitely many fixed times, and hence cannot reproduce the continuum average $\int_0^t B_I(s)ds$ for all $A$ and $B$. Randomization evades this obstruction by sampling the same average unbiasedly.

%\textbf{\emph{(iv) Robustness to noise:}}

\textbf{\emph{Stronger access model.---}} We next consider a stronger access model, motivated by Ref.~\cite{bosse2025efficient}, in which
\begin{align}
B=\sum_{\ell=1}^L B_\ell
\end{align}
and, for any real $s$, we can implement both $e^{-i sA}$ and $e^{-i s(A+\alpha B_\ell)}$ for each $\ell$. 

\textbf{\emph{(i) THRIFT:}} Under this access model, THRIFT~\cite{bosse2025efficient} implements a product formula directly in the interaction picture with respect to $A$. A single THRIFT step is
\begin{align}
\mathscr T(\tau) := e^{-i \tau A}\prod_{\ell=1}^{L} \left(e^{i \tau A}e^{-i \tau(A+\alpha B_{\ell})}\right),
\label{eq:thrift}
\end{align}
where $\prod_{\ell=1}^{L} X_\ell := X_L X_{L-1}\cdots X_1$.

The key point is that each factor inside the product is an interaction-picture evolution. Define $B_\ell^{(A)}(s):=e^{i sA}B_\ell e^{-i sA}$, and, for any interval $I\subset[0,\tau]$,
\begin{align}
W_\ell(I) := \mathcal T\exp\left(-i\alpha\int_I B_\ell^{(A)}(s) ds \right).
\end{align}
For an interval $[x,y]\subset[0,\tau]$, this is implementable as
\begin{align}
\label{eq:W-interval-implementation}
W_\ell([x,y]) = e^{iyA}e^{-i(y-x)(A+\alpha B_\ell)}e^{-ixA},
\end{align}
using only the gates allowed in the stronger access model. In particular, $e^{i\tau A}e^{-i\tau(A+\alpha B_\ell)}=W_\ell([0,\tau])$, and thus the THRIFT step in the interaction picture is
\begin{align}
S(\tau) := e^{i\tau A}\mathscr T(\tau) = \prod_{\ell=1}^{L} W_{\ell}([0,\tau]),
\end{align}
which matches the $\mathcal O(\alpha)$ term of the ideal interaction-picture evolution. Defining $\mathscr T^{(r)}(t):=\mathscr T(\delta t)^r$ with $\delta t=t/r$, one therefore obtains
\begin{align}
\|\mathscr T^{(r)}(t)-e^{-i tH}\| = \mathcal O\left(\frac{\alpha^2t^2}{r}\right).
\end{align}
Ref.~\cite{bosse2025efficient} extended this to achieve higher-order scaling in $t$ by constructing a second-order formula by symmetrizing Eq.~\eqref{eq:thrift} and applying Suzuki recursion.

\begin{figure}[t]
\centering
\includegraphics[width=0.82\columnwidth]{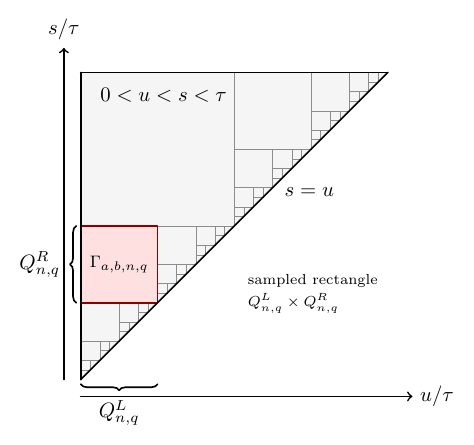}
\caption{Dyadic sampling for randomized THRIFT. Sampled rectangles in the time-ordering triangle specify the local corrections in Eq.~\eqref{eq:local-group-commutator}.}
\label{fig:dyadic-thrift-sampling}
\end{figure}

\textbf{\emph{(ii) Randomized THRIFT:}} The leading error of a THRIFT step, relative to the ideal interaction-picture evolution, appears at order $\alpha^2$. Its coefficient is
\begin{align}
\label{eq:thrift-second-order-error}
\sum_{1\le a<b\le L} \int_{0 < u < s < \tau}  \left[B_{a}^{(A)}(s), B_{b}^{(A)}(u) \right] duds.
\end{align}
(See the SM for derivation.) We cancel this term on average by sampling local correction unitaries over the triangle $0<u<s<\tau$.

Decompose this triangle into dyadic rectangles $Q_{n,q}^L\times Q_{n,q}^R$, as illustrated in Fig.~\ref{fig:dyadic-thrift-sampling}, where $n$ denotes the dyadic level and $q = 0, \dotsc, 2^{n-1}$ indexes the rectangles within that level. For each $1\le a<b\le L$, and $m\in\mathbb N$, define the local correction unitary
\begin{align}
\label{eq:local-group-commutator}
C_{a,b,n,q}^{[m]} :={}& W_{a}(Q_{n,q}^R)^mW_{b}(Q_{n,q}^L)^m \nonumber  \\ 
& \qquad \times W_{a}(Q_{n,q}^R)^{-m} W_{b}(Q_{n,q}^L)^{-m}.
\end{align}
By Eq.~\eqref{eq:W-interval-implementation}, each factor in Eq.~\eqref{eq:local-group-commutator} is implementable using only the gates allowed in the stronger access model.

By the group-commutator identity~\cite{kitaev2002classical}, 
\begin{align}
\label{eq:local-group-commutator-expansion}
&C_{a,b,n,q}^{[m]} = I-\alpha^2m^2\Gamma_{a,b,n,q} + \mathcal O(\alpha^3), \\
&\Gamma_{a,b,n,q} = \int_{s\in Q_{n,q}^R}
\int_{u\in Q_{n,q}^L} \left[B_{a}^{(A)}(s), B_{b}^{(A)}(u) \right]duds. \nonumber 
\end{align}
Thus $C_{a,b,n,q}^{[m]}$ produces the negative of the local THRIFT commutator error, amplified by $m^2$.

Choose $m_n=\lceil c2^{\beta n}\rceil$, where the order-dependent exponent $\beta$ is chosen to guarantee the desired high-order error scaling and finite expected gate cost, while $c>0$ is chosen so that the probabilities below form a valid distribution. We apply the correction indexed by $(a,b,n,q)$ with probability $p_{a,b,n,q}:=1/{m_n^2}$ and otherwise apply no correction. %The normalization of this distribution and the finite expected gate count are analyzed in the SM. 
The resulting interaction-picture step is
\begin{align}
\label{eq:randomized-thrift-step-ip}
S_{\rm rand}(\tau,\omega) :=
\begin{cases}
S(\tau), & \omega=0,\\[1mm]
C_{a,b,n,q}^{[m_n]}S(\tau), & \omega=(a,b,n,q).
\end{cases}
\end{align}
%A sampled correction $C_{a,b,n,q}^{[m_n]}$ contributes $4m_n$ additional $W$-blocks, i.e., an extra gate depth $\mathcal{O}(m_n)$, while the $\omega=0$ branch has the same depth as the original THRIFT step. 
Although the sampling distribution has infinite support over $n$, $\Pr[n=\infty]=0$, so each sampled step has finite gate depth with probability $1$. Since the dyadic rectangles partition the time-ordering triangle $0<u<s<\tau$, Eq.~\eqref{eq:local-group-commutator-expansion} implies that the averaged second-order contribution of the sampled correction is exactly the negative of the $\mathcal O(\alpha^2)$ term in Eq.~\eqref{eq:thrift-second-order-error}. 

Returning to the Schr\"odinger picture,
\begin{align}
\label{eq:first-order-randomized-thrift}
\mathscr T_{\rm rand}(\tau,\omega) := e^{-i\tau A}S_{\rm rand}(\tau,\omega).
\end{align}
For $\delta t=t/r$ and independent samples $\boldsymbol\omega=(\omega_1,\ldots,\omega_r)$, define
\begin{align}
\mathscr T_{\rm rand}^{(r)}(t,\boldsymbol\omega) := \mathscr T_{\rm rand}(\delta t,\omega_r)\cdots \mathscr T_{\rm rand}(\delta t,\omega_1),
\end{align}
and the corresponding mixed-unitary channel
\begin{align}
\label{eq:randomized-thrift-formula}
\mathcal T_{\rm rand}^{(r)}(\rho) := \mathbb E_{\boldsymbol\omega} \left[\mathscr T_{\rm rand}^{(r)}(t,\boldsymbol\omega) \; \rho \; \mathscr T_{\rm rand}^{(r)}(t,\boldsymbol\omega)^\dagger \right].
\end{align}
Then, one has
\begin{align}
\label{eq:randomized-thrift-bound}
\|\mathcal T_{\rm rand}^{(r)}-\mathcal U_H(t)\|_\diamond = \mathcal O\left(\frac{\alpha^3t^3}{r^2}\right).
\end{align}

\textbf{\emph{(iii) Randomized high-order THRIFT:}} To extend the time-order scaling, we first construct a second-order formula by symmetrizing Eq.~\eqref{eq:first-order-randomized-thrift}
\begin{align}
\label{eq:randomized-thrift-second-order} \mathscr T_{{\rm rand},2}(\tau,\omega) := \mathscr T_{\rm rand}(\tau/2,\omega) \mathscr T_{\rm rand}(-\tau/2,\omega)^\dagger.
\end{align}
For $k\geq2$, Suzuki recursion gives
\begin{align}
\label{eq:randomized-high-order-thrift}
\mathscr T_{{\rm rand},2k}(\tau,\omega) ={}& \mathscr T_{{\rm rand},2k-2}(s_k\tau,\omega)^2 \mathscr T_{{\rm rand},2k-2}((1-4s_k)\tau,\omega) \nonumber\\ &\times \mathscr T_{{\rm rand},2k-2}(s_k\tau,\omega)^2,
\end{align}
where $s_k=1/(4-4^{1/(2k-1)})$. The same sample $\omega$ is reused in a single Trotter step. For $\delta t=t/r$, let 
\begin{align}
\label{eq:randomized-high-order-thrift-channel}
\mathcal T_{{\rm rand},2k}^{(r)}(\rho) := \mathbb E_{\boldsymbol\omega} \left[\mathscr T_{{\rm rand}, 2k}^{(r)}(t,\boldsymbol\omega)  \rho  \mathscr T_{{\rm rand}, 2k}^{(r)}(t,\boldsymbol\omega)^\dagger \right].
\end{align}
denote the corresponding averaged $r$-step channel. 

\begin{theorem}%[Randomized THRIFT]
\label{thm:randomized-thrift}
For each fixed $k\geq1$, the randomized $2k$-th order THRIFT formula defined in Eq.~\eqref{eq:randomized-high-order-thrift-channel} yields
\begin{align}
\left\| \mathcal T_{{\rm rand},2k}^{(r)} -\mathcal U_H(t) \right\|_\diamond = \mathcal O\left( \frac{\alpha^3t^{2k+1}}{r^{2k}} \right).
\end{align}
Moreover, the average number of gates per step is $\mathcal O(L)$.
\end{theorem}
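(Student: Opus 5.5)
The plan is to reduce everything to a single-step estimate and then telescope. For channels, $\|\mathcal A_r\cdots\mathcal A_1-\mathcal B^r\|_\diamond\le\sum_j\|\mathcal A_j-\mathcal B\|_\diamond$. Since the $\omega_j$ are independent, the $r$-step averaged channel is exactly the product of single-step averaged channels $\bar{\mathcal T}_{2k}(\delta t):=\mathbb E_\omega[\mathscr T_{{\rm rand},2k}(\delta t,\omega)\,\cdot\,\mathscr T_{{\rm rand},2k}(\delta t,\omega)^\dagger]$. It therefore suffices to show
\[
\|\bar{\mathcal T}_{2k}(\delta t)-\mathcal U_H(\delta t)\|_\diamond=\mathcal O(\alpha^3\delta t^{2k+1}).
\]
Multiplying this by $r$ gives the claimed bound.

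\textbf{Step 1: reduce the channel error to operator Taylor coefficients.} I will work in the interaction picture, where every gate is a $W_\ell(I)$. Write $V(\omega)=\mathscr T_{{\rm rand},2k}(\delta t,\omega)$ and $U=e^{-i\delta tH}$. Then $V(\omega)=U(I+E(\omega))$, where $E(\omega)$ is a power series in $\alpha$ with operator coefficients. The channel error is $\mathbb E[E]\rho+\rho\,\mathbb E[E^\dagger]+\mathbb E[E\rho E^\dagger]$, conjugated by $U$. Hence I need two estimates:
\[
\|\mathbb E_\omega E(\omega)\|=\mathcal O(\alpha^3\delta t^{2k+1}),\qquad \mathbb E_\omega\|E(\omega)\|^2=\mathcal O(\alpha^3\delta t^{2k+1}).
\]

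\textbf{Step 2: the mean cancels through order $\alpha^2$, and the $\alpha^3$ coefficient has the right time order.} At order $\alpha^1$, each $W$ factor reproduces $-i\alpha\int B^{(A)}$ exactly, and the corrections $C$ have no linear term. The $\alpha^1$ coefficient therefore vanishes for every $\omega$.

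At order $\alpha^2$, the THRIFT defect is given by Eq.~\eqref{eq:thrift-second-order-error}. The expected correction is $-\sum_{a<b,n,q} p_{a,b,n,q}m_n^2\Gamma_{a,b,n,q}$, which by Eq.~\eqref{eq:local-group-commutator-expansion} and the dyadic partition of the triangle is exactly its negative. The correction also appears inside the symmetrized and Suzuki-recursed compositions, evaluated at rescaled times with the same $\omega$. I will argue the cancellation there from linearity in the $\alpha^2$ coefficient: that coefficient is a sum over sub-steps of single-step $\alpha^2$ coefficients plus cross terms of products of $\alpha^1$ coefficients, and the latter are deterministic and exact.

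This is where the choice of reusing the same $\omega$ matters. The averaged step is then a formula whose expected expansion matches $U$ through $\alpha^2$, exactly and for all $\delta t$. For the $\alpha^3$ coefficient, the standard argument applies. Each coefficient is a polynomial, or an analytic function, of $\delta t$. The Suzuki construction gives order-$2k$ accuracy in $\delta t$ for each fixed realization up to the randomness-independent structure: symmetry gives an even expansion, and the recursion kills the lower orders. Hence the $\alpha^3$ term is $\mathcal O(\delta t^{2k+1})$.

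\textbf{Step 3: the second moment and the choice of $\beta$ (the main obstacle).} The sampled correction can have large norm, since $\|C^{[m_n]}-I\|\lesssim \alpha^2m_n^2|Q_n|^2$ with $|Q_n|\sim 2^{-n}\delta t$. Its contribution to $\mathbb E\|E\|^2$ is
\[
\sum_n 2^{n} m_n^{-2}\min\bigl(1,\alpha^2m_n^2 4^{-n}\delta t^2\bigr)^2 .
\]
The $\alpha^3$ remainder of $C^{[m]}$ also scales like $\alpha^3m^3|Q|^3$, weighted by $1/m^2$, so the mean's $\alpha^3$ term involves $\sum_n 2^n m_n 8^{-n}\delta t^3$.

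I will choose $\beta\in(1/2,1)$ together with $c$ so that three things hold at once. First, $\sum_{a,b,n,q}m_n^{-2}\le1$, which requires $\beta>1/2$ and $c$ of order $L$. Second, the expected gate cost $\sum_n 2^n m_n^{-2}\cdot m_n\sim\sum 2^{n(1-\beta)}\cdot(\ldots)$ stays bounded per pair, giving $\mathcal O(L)$ overall after the $L^2/c$ normalization; getting this to come out as $\mathcal O(L)$ rather than $\mathcal O(L^2)$ is delicate and is precisely where $c$ is tuned. Third, the weighted series for the $\alpha^3$ and higher remainders converge and scale as $\delta t^{2k+1}$; for higher $k$ I expect this to need an order-dependent $\beta$, as the paper hints.

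This balancing, together with controlling the $\alpha$-power series uniformly in the unbounded $m_n$ (for which I would use $\alpha m_n|Q_n|\lesssim\alpha\delta t\,2^{-n(1-\beta)}$ being small), is the step I expect to be hardest. The remaining steps are routine BCH-type bookkeeping.
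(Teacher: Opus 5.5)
Your skeleton matches the paper's. You reduce to one step and telescope. The $\alpha^1$ and $\alpha^2$ coefficients of the mean vanish because each $\alpha^2$ term carries a single mean-zero defect, with the same $\omega$ reused. The $\delta t^{2k+1}$ order comes from the fixed-$\omega$ Suzuki argument. However, the quantitative step you flag as the obstacle is set up on the wrong side of $\beta=1$, and this is a genuine gap.

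\textbf{The choice of $\beta$.} A level-$n$ correction costs $4m_n$ blocks and has total probability $\binom{L}{2}2^n m_n^{-2}$. The expected overhead is therefore $4\binom{L}{2}\sum_n 2^n/m_n\asymp (L^2/c)\sum_n 2^{(1-\beta)n}$, which diverges for every $\beta\le 1$. With your $\beta\in(1/2,1)$ the gate-count clause of the theorem is false; the series you display is itself divergent there.

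Finite expected cost forces $\beta>1$. Then $m_n|Q_n|\asymp c\,2^{(\beta-1)n}\delta t\to\infty$, so the uniform smallness of $\alpha m_n|Q_n|$ you planned to use is gone. You also cannot treat $\mathbb E_\omega\mathscr T_{{\rm rand},2k}$ as an analytic function whose $\alpha^3$ coefficient alone controls the remainder: only finitely many mixed derivatives admit summable branchwise bounds.

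\textbf{The missing idea.} You need to trade the growth of $m_nh_n$, with $h_n=2^{-(n+1)}$, against the weights $m_n^{-2}$.
\begin{itemize}
\item A derivative $\partial_{\delta t}^r\partial_\alpha^s$ of total order at most $D$ that hits $p$ of the $m_n$ repeated $W$-factors costs $\mathcal O((m_nh_n)^p)$. Each branch is therefore bounded by $K[1+(m_nh_n)^D]$.
\item Take $D=2k+4$, which is enough to reach $\partial_{\delta t}^{2k+1}\partial_\alpha^3$. The weighted sum $\sum_n 2^n m_n^{-2}(m_nh_n)^D$ converges if and only if $(2k+2)\beta-(2k+3)<0$.
\item Combined with $\beta>1$, this gives the order-dependent window $1<\beta_k<1+1/(2k+2)$.
\item Dominated convergence then justifies differentiating $F(\delta t,\alpha):=\mathbb E_\omega[\mathscr T_{{\rm rand},2k}(\delta t,\omega;\alpha)]-e^{-i\delta t(A+\alpha B)}$ under the expectation.
\item Your Step 2 and the fixed-$\omega$ Suzuki argument give $\partial_{\delta t}^rF(0,\alpha)=0$ for $r\le 2k$ and $\partial_\alpha^sF(\delta t,0)=0$ for $s\le 2$.
\item A joint integral-form Taylor remainder in $(\delta t,\alpha)$ then yields $\mathcal O(\alpha^3\delta t^{2k+1})$ uniformly.
\end{itemize}
The normalization is not delicate: $\sum p\le 1$ forces $c=\Theta(L)$, after which $\binom{L}{2}/c=\mathcal O(L)$ follows automatically.

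\textbf{The second-moment requirement.} Your requirement $\mathbb E_\omega\|E(\omega)\|^2=\mathcal O(\alpha^3\delta t^{2k+1})$ is unnecessary, and your sketch does not deliver it for $k\ge 2$. Your displayed sum is of order $\alpha^4\delta t^4$, which is not $\mathcal O(\alpha^3\delta t^{2k+1})$ once $2k+1>4$.

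Since $U(I+E)$ is unitary, $E^\dagger E=-(E+E^\dagger)$. Hence $\|\mathbb E[E\rho E^\dagger]\|_1=\mathrm{Tr}(\rho\,\mathbb E[E^\dagger E])\le 2\|\mathbb E_\omega E\|$, and the same holds with an ancilla. This is exactly the sharpened mixing lemma the paper uses, and it reduces everything to the first-moment bound.
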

%The proof, including the exact probability distribution and the gate-depth analysis, is given in the SM. Randomization therefore cancels the deterministic $\mathcal O(\alpha^2)$ commutator error on average, achieving the $\mathcal O(\alpha^3)$ error. Furthermore, the randomized THRIFT formula uses an average gate count of $\mathcal{O}(L)$ per step, increasing the gate depth by only a constant factor compared with the original THRIFT formula that uses $2L-1$ gates per step.
The proof, including the precise order-dependent sampling distribution and the expected gate-depth analysis, is given in the SM. In particular, for each fixed order, randomized THRIFT uses $\mathcal O(L)$ gates per step on average, incurring only a constant-factor overhead relative to its deterministic counterpart.

\textbf{\emph{(iv) Deterministic barrier:}} We now show that no finite product of exponentials allowed by the stronger access model can universally achieve an $\mathcal{O}(\alpha^3)$ error bound.

\begin{proposition}
\label{prop:strong-access-barrier}
Fix $t\ne0$, $L\ge2$, and a finite $M\in\mathbb N$.  Set $B_0:=0$.  For real
coefficients $c_1,\ldots,c_M$ and labels $\ell_1,\ldots,\ell_M\in\{0,1,\ldots,L\}$, define $\mathscr D_M(t,\alpha):= e^{-ic_Mt(A+\alpha B_{\ell_M})}\cdots e^{-ic_1t(A+\alpha B_{\ell_1})}$. Let $\mathcal D_M(t,\alpha)(\rho):= \mathscr D_M(t,\alpha) \rho \mathscr D_M(t,\alpha)^\dagger$. For every choice of coefficients and labels independent of $A,B_1,\ldots,B_L$, and $\alpha$, there exist Hermitian matrices $A,B_1,\ldots,B_L$ such that
\begin{align}
\left\|\mathcal D_M(t,\alpha) - \mathcal{U}_H(t)\right\|_\diamond=\Omega(\alpha^2).
\label{eq:main-strong-no-go-error}
\end{align}
\end{proposition}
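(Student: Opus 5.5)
We argue by expanding both $\mathscr D_M(t,\alpha)$ and $e^{-itH}$ in the interaction picture with respect to $A$ and comparing them order by order in $\alpha$. Set $x_0=0$ and $x_j=t(c_1+\cdots+c_j)$. Writing $e^{-ic_jt(A+\alpha B_{\ell_j})}=e^{-ix_jA}W_{\ell_j}([x_{j-1},x_j])e^{ix_{j-1}A}$ by Eq.~\eqref{eq:W-interval-implementation}, where a reversed interval means the inverse evolution and $W_0\equiv I$, we obtain
\begin{align}
\mathscr D_M(t,\alpha)=e^{-ix_MA}\,W_{\ell_M}([x_{M-1},x_M])\cdots W_{\ell_1}([x_0,x_1]).
\end{align}
At order $\alpha^0$ we need $x_M=t$; otherwise a scalar $A$-free choice such as $A=\mathrm{diag}(0,1)$, $B_\ell=0$ already gives an $\Omega(1)$ error. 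At order $\alpha$ the coefficient is $-i\sum_j\int_{x_{j-1}}^{x_j}B^{(A)}_{\ell_j}(s)\,ds$, with orientation respected. If this differs from $-i\sum_\ell\int_0^tB^{(A)}_\ell(s)\,ds$ for some $A,B$, then the error is $\Omega(\alpha)$, which is stronger than what we need. Hence we may assume both of these matching conditions. In particular, for each $\ell$ the signed occupation function $\chi_\ell(s):=\sum_{j:\ell_j=\ell}\mathrm{sgn}(x_j-x_{j-1})\mathbf 1_{[x_{j-1}\wedge x_j,\,x_{j-1}\vee x_j]}(s)$ equals $\mathbf 1_{[0,t]}$ almost everywhere. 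This identification uses exponentials in $A$ of arbitrary frequency together with Fourier uniqueness.

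The main step is the order-$\alpha^2$ cross term for two distinct labels, say $1$ and $2$; here we use $L\ge2$ and set $B_\ell=0$ for $\ell\ge3$. Expanding the ordered product, the coefficient of $B^{(A)}_2(s)B^{(A)}_1(u)$ in $\mathscr D_M$ is a kernel $K_D(s,u)$. This kernel is a finite integer combination of signed products $\pm\mathbf 1_{I_k}(s)\mathbf 1_{I_j}(u)$: blocks with $j<k$ contribute to the $B_2B_1$ ordering, and blocks with $j>k$ contribute to the $B_1B_2$ ordering. Consequently $K_D$ is constant on each cell of the product grid generated by the finitely many points $\{x_j\}$. The ideal evolution has kernel $\mathbf 1_{\{0<u<s<t\}}$ for $B_2(s)B_1(u)$ and $\mathbf 1_{\{0<s<u<t\}}$ for $B_1(u)B_2(s)$. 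After subtracting the symmetric part, which is fixed by the first-order matching since $\chi_1\otimes\chi_2=\mathbf 1\otimes\mathbf 1$, the antisymmetric kernel multiplying $[B^{(A)}_2(s),B^{(A)}_1(u)]$ must be $\tfrac12\mathrm{sgn}(s-u)$ on $[0,t]^2$. This function is not constant on any grid cell that meets the diagonal, and such a cell exists because the grid covers $[0,t]$. Therefore the difference kernel $\Delta(s,u)$ is a nonzero bounded function on $[0,t]^2$.

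It remains to realize this mismatch with concrete matrices so that it produces an error that cannot be absorbed into a global phase. We take a qutrit with $A=\mathrm{diag}(0,\omega,\omega+\nu)$, $B_1=|1\rangle\langle 0|+\text{h.c.}$, and $B_2=|2\rangle\langle1|+\text{h.c.}$ (with basis $|0\rangle,|1\rangle,|2\rangle$). The $|2\rangle\langle0|$ entry of the order-$\alpha^2$ discrepancy is then $\iint\Delta(s,u)\,e^{i\nu s}e^{i\omega u}\,ds\,du$, up to sign. Terms with a repeated label, or the pair $B_1B_2$ in the wrong order, cannot reach this entry. Since $\Delta\ne0$ in $L^2$, its two-dimensional Fourier transform is not identically zero, so some real pair $(\omega,\nu)$ makes this entry nonzero. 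For that fixed choice of matrices, $\mathscr D_M-e^{-itH}$ has an off-diagonal entry $c\alpha^2+\mathcal O(\alpha^3)$ with $c\neq0$. Because the zeroth-order parts coincide and the discrepancy is off-diagonal, no phase $e^{i\phi}$ can cancel it: a phase would need $\phi=\mathcal O(\alpha^2)$ and would contribute only on the diagonal. Standard bounds relating the diamond distance of unitary channels to $\min_\phi\|U-e^{i\phi}V\|$ then give $\|\mathcal D_M-\mathcal U_H(t)\|_\diamond\ge c'\alpha^2$ for small $\alpha$.

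I expect the main obstacle to be the bookkeeping in the second paragraph. Intervals may be traversed backwards or overlap, so one must verify carefully that the antisymmetric part of the deterministic kernel is genuinely piecewise constant on the grid and that the first-order conditions fix its symmetric part. My first attempt would be to reduce to the piecewise-constant structure via the signed indicator decomposition and then test against $\mathbf 1_{\text{cell}}$ on a diagonal cell.
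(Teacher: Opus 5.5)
Your proposal is correct and follows essentially the paper's route. You factor into interaction-picture $W$-blocks, compare the grid-piecewise-constant ordered kernel for $B_2^{(A)}(s)B_1^{(A)}(u)$ with the time-ordering triangle, and detect the mismatch through the $\langle 2|\cdot|0\rangle$ entry of a qutrit ladder witness that isolates that single ordered word. The variations are minor: you use a two-frequency $A$ and a diagonal grid cell $J\times J$ where the paper uses $A=\omega|2\rangle\langle 2|$ and the $u$-marginal, and you bound the diamond norm by a phase-minimized operator distance where the paper reads off the $(2,0)$ coherence of the output on $|0\rangle\langle 0|$.

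Two small corrections are needed. First, the first-order matching step and the symmetric/antisymmetric split are unnecessary. The witness only sees $K_D-\mathbf 1_{\{0<u<s<t\}}$, which already cannot vanish on $J\times J$, since $K_D$ is constant there and the triangle indicator is not. Second, $A=\mathrm{diag}(0,1)$ fails to detect $x_M\neq t$ when $x_M-t\in 2\pi\mathbb Z$. Rescale to $A=\mathrm{diag}(0,\pi/(x_M-t))$, as the paper does.
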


The proof is given in the SM. This result extends the no-go result proven in Ref.~\cite[Theorems~10 and 11]{bosse2025efficient} to arbitrary finite circuits in the stronger access model.

\begin{figure}[t!]
    \centering
    \includegraphics[width=0.82\columnwidth]{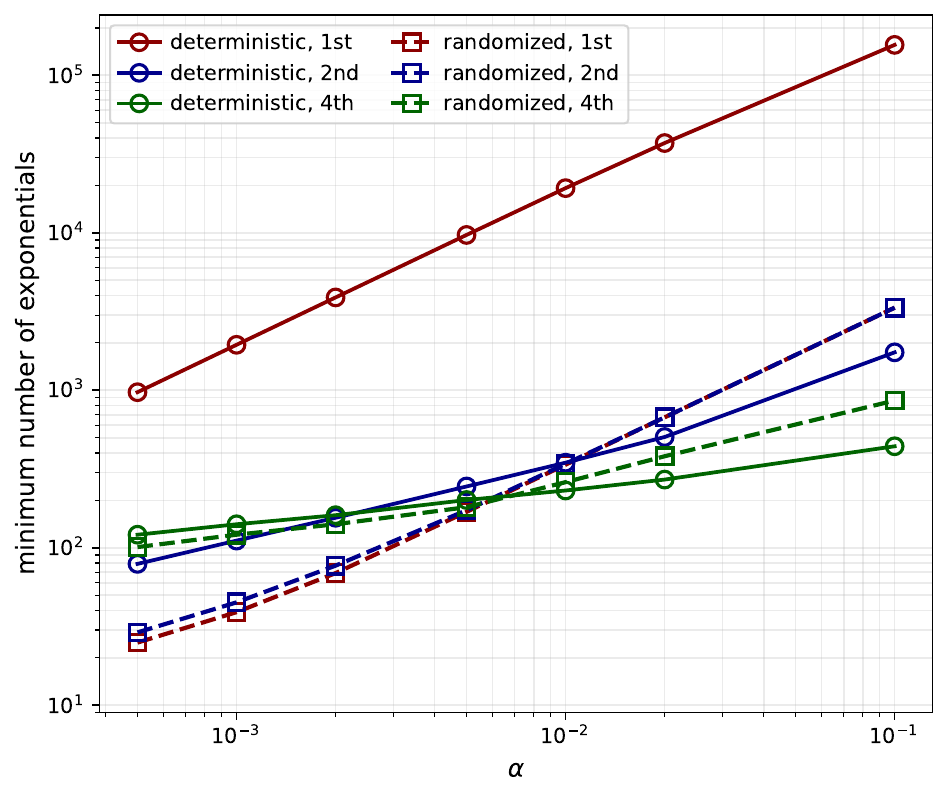}
    \caption{Minimum number of elementary exponentials (gates) required to simulate the dynamics of the 9-qubit transverse-field Ising model at $t=9/E_0$, comparing standard deterministic product formulas with our randomized formulas as a function of $\alpha$. Our approach requires fewer gates for small $\alpha$, while deterministic formulas perform better as $\alpha$ increases since our method uses twice as many gates.}
    \label{fig:tfim-numerics}
\end{figure}

\textbf{\emph{Numerical simulations.---}} We now present numerical experiments showing that our results are not only theoretically interesting but also practically useful. In all simulations, averages over the randomized formulas are computed deterministically (see the SM for more details and additional experiments). We first benchmark our randomized product formulas in the standard-access setting by simulating the dynamics of the $9$-qubit 2D nearest-neighbor transverse-field Ising model on an open $3\times3$ square lattice, $H=A+\alpha B$ with $A=E_0\sum_{j=1}^{9} Z_j, B=E_0\sum_{\langle j,k\rangle}X_jX_k$, for time $t = 9/E_0$ where $E_0>0$ sets the energy scale. We set $E_0=1$ in the numerics. For each $\alpha$, we find the minimum number of exponentials of $A$ and $\alpha B$ required for the upper bound on the diamond error to be at most $\epsilon = 10^{-4}$. The results are shown in Fig.~\ref{fig:tfim-numerics}. For a $2k$th-order formula and fixed target error $\epsilon$, the number of Trotter steps scales as $\mathcal O((\alpha t^{2k+1}/\epsilon)^{1/2k})$ and $\mathcal  O((\alpha^2t^{2k+1}/\epsilon)^{1/2k})$. Thus, for fixed $t$ and $\epsilon$, the gate count scales as $\mathcal O(\alpha^{1/2k})$ and $\mathcal O(\alpha^{1/k})$, explaining why randomization is most advantageous at small $\alpha$. In this regime, the randomized first-order formula can even outperform deterministic higher-order formulas. As $\alpha$ increases, this advantage diminishes, and the twofold gate overhead of the higher-order randomized formulas can make them more costly.

%For small $\alpha$, the randomized formulas require fewer gates than their deterministic counterparts, as expected. In fact, the randomized first-order formula can outperform deterministic higher-order formulas. As $\alpha$ increases, this advantage diminishes, and the twofold gate overhead can make the randomized formulas more costly.

\begin{figure}[t!]
    \centering
    \includegraphics[width=0.86\linewidth]{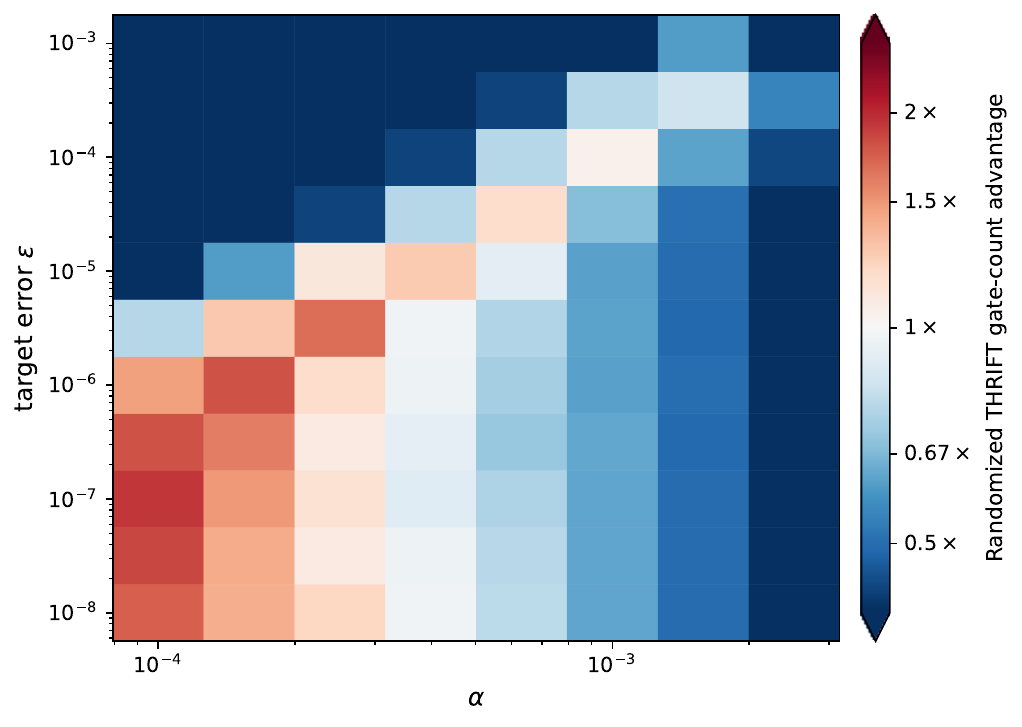}
    \caption{Gate-count advantage of randomized first-order THRIFT for simulating the dynamics of the $8$-qubit random-field Heisenberg chain at $t=9/E_0$, across interaction strengths $\alpha$ and target errors $\epsilon$. The color shows the ratio of the smaller median gate count of deterministic first- and second-order THRIFT to the median expected gate count of randomized first-order THRIFT. For each method, the median is taken over ten independent field realizations of the minimum gate count required to achieve the target error. In the red region, randomized first-order THRIFT requires fewer gates than either first- or second-order deterministic THRIFT.}
    \label{fig:randomized_thrift_t9_phase_map}
\end{figure}

We next benchmark randomized THRIFT in the stronger-access setting using an open $8$-qubit random-field Heisenberg chain. Defining $K_j=X_jX_{j+1}+Y_jY_{j+1}+Z_jZ_{j+1}$, we consider $H=A+\alpha(B_0+B_1)$, where $A=E_0\sum_{j=1}^{8}h_jZ_j$, and the interaction terms are partitioned into two alternating bond layers, $B_0=E_0(K_1+K_3+K_5+K_7)$, and $B_1=E_0(K_2+K_4+K_6)$. As before, we set $E_0 = 1$ in the numerics. The coefficients $h_j$ are sampled independently and uniformly from $[-1,1]$. We assume access to exponentials of $A$, $A+\alpha B_0$, and $A+\alpha B_1$. For each $(\alpha,\epsilon)$ and ten independent field realizations, we determine the minimum number of exponentials required for the certified diamond-error bound to be at most $\epsilon$ at $t=9/E_0$. We compare deterministic first- and second-order THRIFT with randomized first-order THRIFT defined in Eq.~\eqref{eq:randomized-thrift-formula}. Figure~\ref{fig:randomized_thrift_t9_phase_map} reveals a weak-coupling, high-precision regime in which randomized first-order THRIFT is cheaper than both deterministic formulas. At fixed $t$, its $\mathcal{O}(\alpha^{3/2}/\epsilon^{1/2})$ cost scales more favorably with $\epsilon$ than first-order THRIFT and with $\alpha$ than second-order THRIFT in the weak-coupling limit, provided that these scaling advantages outweigh the randomized overhead. We expect the observed location of this crossover at relatively small $\alpha$ to reflect a sizable prefactor in the $\mathcal{O}(\alpha^3)$ error.

\textbf{\emph{Conclusion.---}} We introduce two novel classes of randomized product formulas for simulating Hamiltonians of the form $H=A+\alpha B$, where $\alpha$ is small. For each access model, we establish fundamental limitations of any deterministic formulas through no-go results and show that our randomized formulas overcome them. Moreover, our formulas incur only a constant-factor gate overhead relative to their deterministic counterparts. Numerical simulations of physically motivated systems confirm their gate-count advantages, demonstrating both their theoretical significance and potential practical utility.

As in many applications of randomization~\cite{santos2006enhanced,wallman2016noise,yi2026faster}, our randomized product formulas may be intrinsically robust to certain types of noise, which would further strengthen their practical motivation. It would also be valuable to benchmark both formulas across other classes of Hamiltonians.
In addition, although known quadratic lower bounds~\cite{akibue2024probabilistic,braasch2026limits} suggest the optimality of our standard-access construction in its $\alpha$-scaling, the optimality of randomized THRIFT remains open. More broadly, it is worth exploring whether randomization can provably overcome the fundamental performance limitations of deterministic protocols in other settings.

\textbf{\emph{Acknowledgements.---}} L.K. acknowledges assistance from ChatGPT (GPT-5.5 and GPT-5.6 Sol), primarily in refining figures and assisting with the implementation of numerical simulations. L.K. acknowledges support by the U.S. Department of Energy, Office of Science, Office of Advanced Scientific Computing Research under Contract No. DE-AC05-00OR22725 through the Accelerated Research in Quantum Computing Program MACH-Q project. L.P.G.P. acknowledges support by the U.S. Department of Energy, Office of Science, Basic Energy Sciences program (award No. DE-SCL0000157).

%\bibliography{ref}

\addtocontents{toc}{\protect\setcounter{tocdepth}{-1}}
\bibliography{ref}
\addtocontents{toc}{\protect\setcounter{tocdepth}{2}}

\clearpage

%%%%%%%%%%%%%%%%%%%%%%%%%%%%%%%%%%%%%%%%%%%%%%%%%%%%%%%%%%%%%%%%%%%%%%%%%%%%%%
% -------------  SUPPLEMENTAL MATERIAL  -------------
%%%%%%%%%%%%%%%%%%%%%%%%%%%%%%%%%%%%%%%%%%%%%%%%%%%%%%%%%%%%%%%%%%%%%%%%%%%%%%%

\clearpage
\onecolumngrid
\pagenumbering{arabic}

\begin{center}
\textbf{\large Supplemental Material}
\end{center}

\setcounter{section}{0}
\renewcommand{\thesection}{S\arabic{section}}

\setcounter{subsection}{0}
\renewcommand{\thesubsection}{\thesection.\arabic{subsection}}

\setcounter{table}{0}
\renewcommand{\thetable}{S\arabic{table}}

\setcounter{figure}{0}
\renewcommand{\thefigure}{S\arabic{figure}}

\setcounter{equation}{0}
\renewcommand{\theequation}{S\arabic{equation}}

\setcounter{secnumdepth}{3}

\setcounter{theorem}{0}
\renewcommand{\thetheorem}{S\arabic{theorem}}

\setcounter{proposition}{0}
\renewcommand{\theproposition}{S\arabic{proposition}}

\setcounter{lemma}{0}
\renewcommand{\thelemma}{S\arabic{lemma}}

\setcounter{tocdepth}{2} % Show sections and subsections in the TOC.
\tableofcontents

%\section{Related works}

\section{Preliminaries}

\subsection{Interaction-picture evolution}

Here we briefly recall the interaction-picture representation, which will be used throughout this paper. Let
\begin{align}
H=A+\alpha B,\qquad U(\tau):=e^{-i\tau H},
\end{align}
and factor out the evolution generated by $A$ by writing
\begin{align}
U(\tau)=e^{-i\tau A}U_I(\tau),
\end{align}
where the subscript $I$ denotes the interaction-picture evolution. Differentiating this expression gives
\begin{align}
\frac{d}{d\tau}U_I(\tau)&=iAe^{i\tau A}U(\tau)+e^{i\tau A}\frac{dU(\tau)}{d\tau}\nonumber\\
&=iAe^{i\tau A}U(\tau)-ie^{i\tau A}(A+\alpha B)U(\tau)\nonumber\\
&=-i\alpha e^{i\tau A}Be^{-i\tau A}U_I(\tau).
\end{align}
Thus $U_I(\tau)$ satisfies the time-dependent Schr\"odinger equation
\begin{align}
\frac{d}{d\tau}U_I(\tau)=-i\alpha B_I(\tau)U_I(\tau),\qquad B_I(\tau):=e^{i\tau A}Be^{-i\tau A},
\end{align}
with initial condition $U_I(0)=I$. Therefore,
\begin{align}
U_I(\tau)=\mathcal T\exp\left(-i\alpha\int_0^\tau B_I(s)ds\right),
\end{align}
where $\mathcal T$ denotes time ordering. Consequently,
\begin{align}
e^{-i\tau(A+\alpha B)}=e^{-i\tau A}\mathcal T\exp\left(-i\alpha\int_0^\tau B_I(s)ds\right).
\end{align}
Using the Dyson series of the time-ordered exponential in the perturbation parameter $\alpha$~\cite{sakurai2020modern}, we obtain
\begin{align}
e^{-i\tau(A+\alpha B)}=e^{-i\tau A}\left[I-i\alpha\int_0^\tau B_I(s)ds-\alpha^2\int_{0<u<s<\tau}B_I(s)B_I(u)duds+\mathcal O(\alpha^3\tau^3)\right].
\end{align}

\subsection{The mixing lemma}

Our randomized product formulas define mixed-unitary channels after averaging over the random choices. Since the performance guarantees in the main text are stated in diamond distance, we need to compare such averaged channels with the ideal unitary channel. Directly controlling the diamond distance is however often inconvenient. The following sharpened mixing lemma (see also the original mixing lemmas~\cite{campbell2017shorter,hastings2017turning}) allows us to reduce this task to bounding the operator-norm distance between the target unitary and the averaged random unitary. 

\begin{lemma}[Sharpened mixing lemma~\cite{chen2021concentration}]
\label{lem:mixing-lemma}
Let $U$ be a fixed unitary and let $V_\omega$ be a random unitary. Define the ideal unitary channel and the averaged mixed-unitary channel by
\begin{align}
\mathcal U(\rho):=U\rho U^\dagger,\qquad \mathcal V(\rho):=\mathbb E_\omega\left[V_\omega\rho V_\omega^\dagger\right].
\end{align}
Then
\begin{align}
\frac{1}{2}\|\mathcal U-\mathcal V\|_\diamond \le 2\left\|U-\mathbb E_\omega V_\omega\right\|,
\end{align}
where $\|A\|:=\sup_{\|\ket{\psi}\|_2=1}\|A\ket{\psi}\|_2$ denotes the operator (spectral) norm of $A$. 
\end{lemma}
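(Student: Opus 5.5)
The plan is to expand each random unitary around the target and exploit unitarity to control the quadratic term. Throughout we work with $U\otimes I$ and $V_\omega\otimes I$ acting on a purified input $\rho$ on the system plus a reference copy. All operator-norm quantities are unchanged by tensoring with $I$, so it suffices to bound $\|(\mathcal U-\mathcal V)\otimes\mathcal I(\rho)\|_1$ uniformly over density operators $\rho$. Write $V_\omega=U+\Delta_\omega$ and set $\bar\Delta:=\mathbb E_\omega\Delta_\omega=\mathbb E_\omega V_\omega-U$, so that the right-hand side of the lemma is $2\|\bar\Delta\|$. Expanding gives
\begin{align}
\mathbb E_\omega\left[V_\omega\rho V_\omega^\dagger\right]-U\rho U^\dagger=\bar\Delta\rho U^\dagger+U\rho\bar\Delta^\dagger+\mathbb E_\omega\left[\Delta_\omega\rho\Delta_\omega^\dagger\right].
\end{align}

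\emph{Linear terms.} By Hölder's inequality, $\|\bar\Delta\rho U^\dagger\|_1\le\|\bar\Delta\|\,\|\rho\|_1=\|\bar\Delta\|$, and the same bound holds for its adjoint. Together these contribute at most $2\|\bar\Delta\|$.

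\emph{Quadratic term.} This is the step where a naive bound fails, since $\mathbb E\|\Delta_\omega\|^2$ need not be small even when $\bar\Delta$ is. The key observation is that $\mathbb E_\omega[\Delta_\omega\rho\Delta_\omega^\dagger]$ is positive semidefinite. Its trace norm is therefore its trace, which equals $\operatorname{tr}\left(\mathbb E_\omega[\Delta_\omega^\dagger\Delta_\omega]\rho\right)\le\|\mathbb E_\omega\Delta_\omega^\dagger\Delta_\omega\|$. Unitarity of both $U$ and $V_\omega$ then gives the exact identity
\begin{align}
\Delta_\omega^\dagger\Delta_\omega=2I-U^\dagger V_\omega-V_\omega^\dagger U=-\left(U^\dagger\Delta_\omega+\Delta_\omega^\dagger U\right).
\end{align}
Averaging yields $\mathbb E_\omega\Delta_\omega^\dagger\Delta_\omega=-(U^\dagger\bar\Delta+\bar\Delta^\dagger U)$, whose norm is at most $2\|\bar\Delta\|$. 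Thus the second moment is controlled linearly by the first moment.

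\emph{Conclusion.} Summing the two contributions gives $\|(\mathcal U-\mathcal V)\otimes\mathcal I(\rho)\|_1\le4\|\bar\Delta\|$ for every $\rho$. Taking the supremum over $\rho$ gives $\tfrac12\|\mathcal U-\mathcal V\|_\diamond\le2\|U-\mathbb E_\omega V_\omega\|$. The only subtlety is the quadratic term, and the unitarity identity above resolves it. If $\omega$ has infinite support, as in the randomized THRIFT construction, the expectations are Bochner integrals of bounded operators and every step passes through the integral unchanged.
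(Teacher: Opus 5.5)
The paper does not prove this lemma. It states it and defers the proof to Ref.~\cite{chen2021concentration}, so there is no in-paper argument to compare yours against.

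Your self-contained proof is correct. The decomposition of $\mathbb E_\omega[V_\omega\rho V_\omega^\dagger]-U\rho U^\dagger$ into $\bar\Delta\rho U^\dagger+U\rho\bar\Delta^\dagger$ plus the positive semidefinite term $\mathbb E_\omega[\Delta_\omega\rho\Delta_\omega^\dagger]$ is exact, and H\"older bounds the linear part by $2\|\bar\Delta\|$. Positivity turns the trace norm of the quadratic part into $\operatorname{tr}(\mathbb E_\omega[\Delta_\omega^\dagger\Delta_\omega]\rho)$. The identity $\Delta_\omega^\dagger\Delta_\omega=-(U^\dagger\Delta_\omega+\Delta_\omega^\dagger U)$, which needs both $U$ and $V_\omega$ unitary, bounds this by $2\|\bar\Delta\|$ after averaging. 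Together these give $\|\mathcal U-\mathcal V\|_\diamond\le 4\|\bar\Delta\|$.

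That identity is what separates the sharpened lemma from the original mixing lemmas the paper also cites. Their bound $2\|\bar\Delta\|+\sup_\omega\|V_\omega-U\|^2$ requires every realization to be close to $U$, and that is not available in this paper. In randomized THRIFT, a level-$n$ correction is $I-\alpha^2m_n^2\Gamma_{a,b,n,q}+\cdots$ with $m_n\ell_n\propto 2^{(\beta-1)n}$, which is unbounded in $n$ since $\beta>1$. So individual branches are not uniformly close to the target, and only a bound depending solely on $\|U-\mathbb E_\omega V_\omega\|$, as yours does, can be used there.

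One minor point: restricting the supremum to density-operator inputs matches the paper's definition of $\|\cdot\|_\diamond$. Under the more common definition, with a supremum over all $X$ satisfying $\|X\|_1\le 1$, you should add that $\mathcal U-\mathcal V$ is Hermiticity-preserving, so the supremum is attained on pure states.
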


Thus it suffices to control the deterministic bias of the averaged unitary $\mathbb E_\omega V_\omega$ relative to the target unitary $U$. We omit the proof of the lemma and refer to Ref.~\cite{chen2021concentration}. %We use this lemma throughout the SM to prove diamond-distance bounds for the randomized protocols. 

\section{Randomized product formulas in the standard access model}
In this section, we analyze randomized product formulas in the standard access model. We first derive the randomized first-order bound in Eq.~\eqref{eq:randomized-1st-order-bound}, then prove the randomized high-order result in Theorem~\ref{thm:randomized-high-order}, and finally establish the deterministic lower bound in Proposition~\ref{prop:deterministic-barrier-standard}.

\subsection{Derivation of Eq.~\eqref{eq:randomized-1st-order-bound}}

We derive Eq.~\eqref{eq:randomized-1st-order-bound}. Expanding one shifted step through second order in $\tau$ gives
\begin{align}
\mathscr R_1(\tau,u) = I-i\tau H-\frac{\tau^2}{2}(A^2+\alpha^2B^2) -\alpha\tau^2\big[(1-u)AB+uBA\big] + \mathcal O(\tau^3),
\end{align}
whereas
\begin{align}
e^{-i\tau H} = I-i\tau H -\frac{\tau^2}{2}\big[A^2+\alpha(AB+BA) +\alpha^2B^2\big] +\mathcal O(\tau^3).
\end{align}
Consequently,
\begin{align}
\mathscr R_1(\tau,u)-e^{-i\tau H} = \alpha\tau^2\left(u-\frac12\right)[A,B] +\mathcal O(\tau^3).
\end{align}
Since $\mathbb E_u[u-1/2]=0$, the entire second-order term vanishes upon averaging. Moreover, Eqs.~\eqref{eq:exact-evolution-alpha-exapnsion}--\eqref{eq:first-order-randomized-expansion-match} show that the terms linear in $\alpha$ agree exactly for all $\tau$. Expanding one order further therefore gives
\begin{align}
\mathbb E_u\mathscr R_1(\tau,u)-e^{-i\tau H} = -\frac{i\alpha^2\tau^3}{12}[B,[A,B]] +\mathcal O(\alpha^2\tau^4),
\end{align}
and hence, for $\tau = \delta t = t/r$,
\begin{align}
\left\| \mathbb E_u\mathscr R_1(\delta t,u)-e^{-i\delta tH} \right\| = \mathcal O(\alpha^2\delta t^3).
\end{align}

Set $M := \mathbb E_u\mathscr R_1(\delta t,u)$ and $U_\delta:=e^{-i\delta tH}$. By independence of $u_1,\ldots,u_r$,
\begin{align}
\mathbb E_{\mathbf u}\mathscr R_1^{(r)}(t,\mathbf u)=M^r.
\end{align}
Using the telescoping identity
\begin{align}
\label{eq:telescoping-first}
M^r-U_\delta^r
=\sum_{\ell=0}^{r-1}M^{r-1-\ell}(M-U_\delta)U_\delta^\ell,
\end{align}
together with $\|M\| = \|E_u\mathscr R_1(\delta t,u)\| \leq\mathbb E_u\|\mathscr R_1(\delta t,u)\|=1$ and $\|U_\delta\|=1$, Eq.~\eqref{eq:telescoping-first} gives
\begin{align}
\left\|\mathbb E_{\mathbf u}\mathscr R_1^{(r)}(t,\mathbf u)-e^{-itH}\right\| = \|M^r-U_\delta^r\| \le r\|M-U_\delta\| = \mathcal O\left(\frac{\alpha^2t^3}{r^2}\right).
\end{align}
Finally, applying Lemma~\ref{lem:mixing-lemma} to the above equation with $V_{\mathbf u}=\mathscr R_1^{(r)}(t,\mathbf u)$ and $U=e^{-itH}$ yields
\begin{align}
\label{eq}
\left\|\mathcal R_1^{(r)}-\mathcal U_H(t)\right\|_\diamond = \mathcal O\left(\frac{\alpha^2t^3}{r^2}\right),
\end{align}
where $\mathcal U_H(t)(\rho):=e^{-itH}\rho e^{itH}$. This proves Eq.~\eqref{eq:randomized-1st-order-bound}.

\subsection{Proof of Theorem~\ref{thm:randomized-high-order}}
We recall the randomized high-order formula in the standard access model. Let $\delta t=t/r$ and let $u_1,\ldots,u_r\sim\operatorname{Unif}[0,1]$ be independent. The shifted first-order step is
\begin{align}
\mathscr R_1(\tau,u):=e^{-i(1-u)\tau A}e^{-i\alpha\tau B}e^{-iu\tau A},
\end{align}
the second-order step is $\mathscr R_2(\tau,u) :=\mathscr R_1(\tau/2,1-u)\mathscr R_1(\tau/2,u)$, and the higher-order steps are defined recursively by
\begin{align}
\label{eq:high-order-randomized-formula-one-step-sm}
\mathscr R_{2k}(\tau,u) :=\mathscr R_{2k-2}(s_k\tau,u)^2 \mathscr R_{2k-2}((1-4s_k)\tau,u) \mathscr R_{2k-2}(s_k\tau,u)^2,
\end{align}
where $s_k=1/(4-4^{1/(2k-1)})$. The $r$-step random unitary and its averaged channel are
\begin{align}
\mathscr R_{2k}^{(r)}(t,\mathbf u)
&:=\mathscr R_{2k}(\delta t,u_r)\cdots \mathscr R_{2k}(\delta t,u_1), \nonumber \\
\mathcal R_{2k}^{(r)}(\rho)
&:=\mathbb E_{\mathbf u}\left[\mathscr R_{2k}^{(r)}(t,\mathbf u)\rho\mathscr R_{2k}^{(r)}(t,\mathbf u)^\dagger\right].
\end{align}
\begin{theorem}[Restatement of Theorem~\ref{thm:randomized-high-order}]
\label{thm:randomized-high-order-sm}
For each fixed $k\ge1$, the randomized $2k$-th order product formula satisfies
\begin{align}
\left\|\mathcal R_{2k}^{(r)}-\mathcal U_H(t)\right\|_\diamond = \mathcal O\left(\frac{\alpha^2t^{2k+1}}{r^{2k}}\right),
\end{align}
where $\mathcal U_H(t)(\rho):=e^{-itH}\rho e^{itH}$.
\end{theorem}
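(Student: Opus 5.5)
The proof reduces, exactly as in the first-order derivation above, to a single-step bias estimate. Set $M_{2k}:=\mathbb E_u\mathscr R_{2k}(\delta t,u)$ and $U_\delta:=e^{-i\delta tH}$. Because the $u_j$ are independent, $\mathbb E_{\mathbf u}\mathscr R_{2k}^{(r)}(t,\mathbf u)=M_{2k}^r$. We also have $\|M_{2k}\|\le1$. The telescoping identity \eqref{eq:telescoping-first} then gives $\|M_{2k}^r-U_\delta^r\|\le r\|M_{2k}-U_\delta\|$, and Lemma~\ref{lem:mixing-lemma} converts this into the diamond-norm bound. It therefore suffices to prove
\begin{align}
\left\|\mathbb E_u\mathscr R_{2k}(\tau,u)-e^{-i\tau H}\right\|=\mathcal O(\alpha^2\tau^{2k+1}).
\end{align}
This is the one point where the argument differs from the first-order case: a single sample $u$ is shared across all $2\cdot5^{k-1}$ sub-steps, so the average cannot be pushed inside the product.

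We split the single-step error as $\mathscr R_{2k}(\tau,u)-e^{-i\tau H}=E_0(\tau,u)+E_1(\tau,u)+E_{\ge2}(\tau,u)$ according to powers of $\alpha$.

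\emph{Step 1: order in $\tau$.} For every fixed $u$, $\mathscr R_2(\tau,u)$ is a symmetric product formula: its palindromic structure gives $\mathscr R_2(-\tau,u)=\mathscr R_2(\tau,u)^{-1}$. Its exponents also sum correctly: $A$ gets total weight $\tau$ and $\alpha B$ gets total weight $\tau$. Hence $\mathscr R_2(\tau,u)=e^{-i\tau H+\mathcal O(\tau^3)}$. The standard Suzuki argument then shows that the recursion \eqref{eq:high-order-randomized-formula-one-step-sm} yields $\mathscr R_{2k}(\tau,u)=e^{-i\tau H}+\mathcal O(\tau^{2k+1})$, uniformly in $u\in[0,1]$, since all coefficients are bounded. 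The remainder can be controlled by the commutator-scaling theory of Ref.~\cite{childs2021theory}: every nonvanishing nested commutator of $A$ and $\alpha B$ contains at least one factor of $\alpha B$, so the remainder is $\mathcal O(\alpha\tau^{2k+1})$. In particular $E_0=0$ exactly, since at $\alpha=0$ the formula reduces to $e^{-i\tau A}$. The $\alpha^{\ge2}$ part is likewise $\mathcal O(\alpha^2\tau^{2k+1})$ uniformly in $u$, because it is the $\alpha^{\ge 2}$ part of an $\mathcal O(\tau^{2k+1})$ remainder. To make this rigorous, I would view the error as an analytic function of $(\tau,\alpha)$ and extract Taylor coefficients, with bounds uniform in $u$.

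\emph{Step 2: cancellation of the linear term on average.} This is the main step. In the interaction picture, $\mathscr R_{2k}(\tau,u)$ is the product of the $A$-exponentials, whose times sum to $\tau$, interleaved with the $B$-exponentials. Its term linear in $\alpha$ is therefore
\begin{align}
-i\alpha\, e^{-i\tau A}\sum_j \beta_j\tau\,B_I(\theta_j(u)\tau).
\end{align}
Here $\beta_j$ is the signed weight of the $j$-th $B$-block, and $\theta_j(u)\tau$ is the cumulative $A$-time applied before that block.

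Within each $\mathscr R_2(\sigma,u)$ block that starts at cumulative time $c$, the two $B$-insertions sit at $c+u\sigma/2$ and $c+\sigma/2+(1-u)\sigma/2$, each with weight $\sigma/2$. For $u\sim\operatorname{Unif}[0,1]$ these points are uniformly distributed on $[c,c+\sigma/2]$ and on $[c+\sigma/2,c+\sigma]$ respectively. The same holds when $\sigma<0$, as for the middle factor $(1-4s_k)\tau$, with the interval traversed backwards. In that case the weight is negative and the orientation is reversed, so the signed integral is still correct.

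Consequently $\mathbb E_u$ of each $\mathscr R_2$ block's linear term equals $\int_c^{c+\sigma}B_I(s)\,ds$. Because the recursion lays these blocks end to end in cumulative $A$-time, the intervals concatenate as a signed path from $0$ to $\tau$, and the contributions sum to $\int_0^\tau B_I(s)\,ds$. This matches Eq.~\eqref{eq:exact-evolution-alpha-exapnsion} exactly, for all $\tau$, so $\mathbb E_uE_1=0$.

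\emph{Step 3: combine.} Combining the steps, $\mathbb E_u[\mathscr R_{2k}-e^{-i\tau H}]=\mathbb E_uE_{\ge2}=\mathcal O(\alpha^2\tau^{2k+1})$. Telescoping over the $r$ steps and applying the mixing lemma then finishes the proof.

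I expect the main obstacle to be the bookkeeping in Step 1. One must show that the $\alpha^{\ge2}$ remainder really starts at order $\tau^{2k+1}$, uniformly in $u$, and does not pick up lower powers of $\tau$ when the Dyson expansion is mixed with the Trotter expansion. I would handle this by a two-variable Taylor argument: the coefficient of $\alpha^j\tau^n$ for $n\le 2k$ vanishes for every $u$ by Step 1, and analyticity gives a uniform remainder bound on compact $(\alpha,\tau)$ sets.
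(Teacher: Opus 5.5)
Your proposal is correct and follows the paper's proof essentially step for step. It reduces to the one-step bias via telescoping and the mixing lemma, and uses Suzuki's recursion at each fixed $u$ for the $\tau^{2k+1}$ order. It cancels the linear-in-$\alpha$ term because each $B$-insertion point is marginally uniform on its own oriented sub-interval, which makes the shared $u$ harmless by linearity of expectation, and these sub-intervals concatenate to $[0,\tau]$. Finally, it combines the two with a two-variable Taylor/analyticity argument.

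The only differences are cosmetic. You track the linear term per $\mathscr R_2$ block rather than per $\mathscr R_1$ substep. You also apply the joint Taylor bound to the $\alpha^{\ge 2}$ remainder uniformly in $u$ before averaging, which avoids the paper's step of showing that $\mathbb E_u\mathscr R_{2k}$ is itself jointly entire. The commutator-scaling remark in your Step 1 is true but not needed.
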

\begin{proof}
By Lemma~\ref{lem:mixing-lemma} and the same telescoping argument used in the previous subsection, it suffices to establish the one-step bound in the difference between the averaged unitary and the target unitary
\begin{align}
\label{eq:averaged-unitary-difference-thm1-proof-sm}
\left\|\mathbb E_u\mathscr R_{2k}(\delta t,u)-e^{-i\delta t H}\right\| = \mathcal O(\alpha^2\delta t^{2k+1}).
\end{align}
\emph{(i) Scaling in $\delta t$.} We first prove the required dependence on the step size. For each fixed $u$, the shifted second-order formula has local error $\mathcal O(\delta t^3)$. The $\mathcal O(\delta t^{2k+1})$ scaling for $\mathscr R_{2k}$ then follows from Suzuki's recursive construction~\cite{suzuki1990fractal}. We spell out the proof for completeness.

To see the base case of $k=1$, fix $u\in[0,1]$ and recall the second-order formula
\begin{align}
\mathscr R_2(\tau,u) &=\mathscr R_1(\tau/2,1-u) \mathscr R_1(\tau/2,u) \nonumber\\
&=e^{-iu\tau A/2}e^{-i\alpha\tau B/2}
e^{-i(1-u)\tau A}e^{-i\alpha\tau B/2}
e^{-iu\tau A/2}.
\end{align}
This formula is symmetric, i.e., $\mathscr R_2(-\tau,u)=\mathscr R_2(\tau,u)^{-1}$.  It is also first-order consistent, since expanding each exponential to first order gives
\begin{align}
\mathscr R_2(\tau,u) &=I-i\tau\left(\frac{u}{2}A+\frac{\alpha}{2}B+(1-u)A+\frac{\alpha}{2}B+\frac{u}{2}A\right)+\mathcal O(\tau^2) \nonumber\\
&=I-i\tau(A+\alpha B)+\mathcal O(\tau^2).
\end{align}
Let $\Omega_2(\tau,u):=\log \mathscr R_2(\tau,u)$. By symmetry, $\Omega_2(-\tau,u)=\log\mathscr R_2(-\tau,u)  =\log\left(\mathscr R_2(\tau,u)^{-1}\right) = -\log\mathscr R_2(\tau,u)=-\Omega_2(\tau,u)$, so $\Omega_2(\tau,u)$ contains only odd powers of $\tau$. The first-order consistency above therefore implies $\Omega_2(\tau,u)=-i\tau H+\mathcal O(\tau^3)$, which gives the desired result. 

From this point, the higher-order scaling obtained via Suzuki’s recursion~\cite{suzuki1990fractal} is immediate; we nevertheless state it for completeness. Since $\mathscr R_2(\tau,u)$ is symmetric, every recursively defined $\mathscr R_{2j}(\tau,u)$ is also symmetric. Hence its logarithm contains only odd powers of $\tau$. Assume inductively that, for some $k\ge2$,
\begin{align}
\log \mathscr R_{2k-2}(\tau,u)
=-i\tau H+\tau^{2k-1}\Omega_{2k-1}(u)+\mathcal O(\tau^{2k+1}),
\end{align}
where $\Omega_{2k-1}(u)$ is independent of $\tau$. 

Recall the higher-order formula in Eq.~\eqref{eq:high-order-randomized-formula-one-step-sm},
\begin{align}
\mathscr R_{2k}(\tau,u) = \mathscr R_{2k-2}(s_k\tau,u)^2 \mathscr R_{2k-2}((1-4s_k)\tau,u) \mathscr R_{2k-2}(s_k\tau,u)^2.
\end{align}
Using the Baker--Campbell--Hausdorff (BCH) formula and keeping the leading logarithmic error term gives
\begin{align}
\log \mathscr R_{2k}(\tau,u) &=-i(4s_k + (1-4s_k))\tau H+
\left(4s_k^{2k-1}+(1-4s_k)^{2k-1}\right)\tau^{2k-1}\Omega_{2k-1}(u)
+\mathcal O(\tau^{2k+1}) \nonumber\\ &=-i\tau H+\mathcal O(\tau^{2k+1}),
\end{align}
as $4s_k^{2k-1}+(1-4s_k)^{2k-1}=0$. Therefore $\mathscr R_{2k}(\tau,u)=e^{-i\tau H}+\mathcal O(\tau^{2k+1})$. By induction, this holds for every fixed $k\ge1$, for every $u\in[0,1]$.

\emph{(ii) Scaling in $\alpha$.}
We now show that the averaged one-step error has no term linear in $\alpha$. At $\alpha=0$, since all $B$-evolutions become identities and the $A$-times in $\mathscr R_{2k}(\tau,u)$ add to $\tau$, there is no error. 

It remains to show that the coefficient linear in $\alpha$ also vanishes. Let $B_I(s):=e^{isA}Be^{-isA}$. For a shifted first-order substep of signed length $y$,
\begin{align}
\mathscr R_1(y,u) = e^{-iyA}\left(I-i\alpha yB_I(uy)+\mathcal O(\alpha^2y^2)\right).
\end{align}
Averaging over $u\sim\operatorname{Unif}[0,1]$ gives
\begin{align}
\label{eq:sm-BI}
y \mathbb E_u B_I(uy) = \int_0^y B_I(s)ds,
\end{align}
where the integral is understood in the oriented sense if $y<0$. Now consider such a substep after an accumulated $A$-time $x$. Conjugating by this accumulated $A$-evolution shifts the interaction-picture insertion:
\begin{align}
e^{ixA}\left(\int_0^y B_I(s)ds\right)e^{-ixA}
&=\int_0^y e^{ixA}B_I(s)e^{-ixA}ds =\int_0^y B_I(s+x)ds = \int_x^{x+y}B_I(s')ds'.
\end{align}
Thus each averaged shifted first-order substep contributes the interaction-picture integral over its own oriented time interval. To make this explicit, write one realization of $R_{2k}(\tau,u)$ as an ordered product of substeps $R_1(y_j,v_j(u))$, $j=1,\ldots,N$, ordered from right to left, where $y_j$ is the signed length and, for each fixed $j$, $v_j(u)$ is either $u$ or $1-u$, with the choice determined by the position of the substep in the recursion.  Although the shifts $v_j(u)$ are generally correlated, each is marginally uniform on $[0,1]$.  At first order in $\alpha$, each term contains a single $B$-insertion, while all remaining substeps are evaluated at $\alpha=0$ and are therefore independent of $u$.  Consequently, by linearity of expectation, only the marginal distribution of each $v_j(u)$ enters.  Define
\begin{align}
x_0:=0,\qquad x_j:=\sum_{q=1}^j y_q.
\end{align}
By construction of the Suzuki recursion, $\sum_{j=1}^N y_j=\tau$, and hence $x_N=\tau$.  In particular, setting $\alpha=0$ gives $R_{2k}(\tau,u)=e^{-i\tau A}$.  Since each $v_j(u)$ is marginally uniform, Eq.~\eqref{eq:sm-BI}, followed by the shift identity above, shows that the averaged first-order contribution of the $j$th substep is
\begin{align}
\int_{x_{j-1}}^{x_j}B_I(s)ds.
\end{align}
Therefore the total averaged first-order contribution is
\begin{align}
\sum_{j=1}^N\int_{x_{j-1}}^{x_j}B_I(s)ds = \int_{x_0}^{x_N}B_I(s)ds = \int_0^\tau B_I(s)ds,
\end{align}
where the integrals are understood in the oriented sense if some $y_j<0$. This is exactly the coefficient linear in $\alpha$ in the ideal evolution.

\emph{(iii) Joint scaling.} We have so far established the scalings in $\tau$ and $\alpha$ separately. Temporarily writing $R_{2k}(\tau,u;\alpha)$ to make its $\alpha$-dependence explicit, define
\begin{align}
  F_{2k}(\tau,\alpha) :=\mathbb{E}_u R_{2k}(\tau,u;\alpha)
    -e^{-i\tau(A+\alpha B)}.
\end{align}
For each fixed $u\in[0,1]$, $R_{2k}(\tau,u;\alpha)$ is a finite product of matrix exponentials and is therefore jointly entire in $(\tau,\alpha)$.  Moreover, on every compact $K\subset\mathbb{C}^2$, the corresponding matrix-exponential series converge uniformly on $K\times[0,1]$.  The average over $u$ may therefore be taken term by term, so $\mathbb{E}_uR_{2k}(\tau,u;\alpha)$ is jointly entire.  Since the exact evolution is also jointly entire, $F_{2k}$ admits a convergent Taylor expansion near $(0,0)$,
\begin{align}
F_{2k}(\tau,\alpha) = \sum_{p,q\geq0}C_{p,q}\alpha^p\tau^q.
\end{align}
Part (i), applied for every fixed $\alpha$ near zero, implies $C_{p,q}=0$ whenever $q<2k+1$, while part (ii) implies $C_{0,q}=C_{1,q}=0$ for every $q$.  %Hence $F_{2k}(\tau,\alpha) =\alpha^2\tau^{2k+1}G_{2k}(\tau,\alpha)$, where $G_{2k}$ is analytic and locally bounded near $(0,0)$. 
Therefore $F_{2k}(\tau,\alpha) = \mathcal O(\alpha^2\tau^{2k+1})$,

\emph{(iv) Telescoping.} It remains to pass from Eq.~\eqref{eq:averaged-unitary-difference-thm1-proof-sm} to the $r$-step bound. Set $M_k:=\mathbb E_u\mathscr R_{2k}(\delta t,u)$ and $U_\delta:=e^{-i\delta tH}$. By independence of $u_1,\ldots,u_r$, $\mathbb E_{\mathbf u}\mathscr R_{2k}^{(r)}(t,\mathbf u)=M_k^r$. Using the telescoping identity as before,
\begin{align}
M_k^r-U_\delta^r = \sum_{\ell=0}^{r-1}M_k^{r-1-\ell}(M_k-U_\delta)U_\delta^\ell,
\end{align}
together with $\|M_k\|\leq\mathbb E_u\|\mathscr R_{2k}(\delta t,u)\|=1$ and $\|U_\delta\|=1$, gives
\begin{align}
\left\|\mathbb E_{\mathbf u}\mathscr R_{2k}^{(r)}(t,\mathbf u)-e^{-itH}\right\| =\|M_k^r-U_\delta^r\|  \leq r\|M_k-U_\delta\| =\mathcal O\left(\frac{\alpha^2t^{2k+1}}{r^{2k}}\right).
\end{align}
Finally, applying Lemma~\ref{lem:mixing-lemma} to the mixed-unitary channel $\mathcal R_{2k}^{(r)}$ yields the desired bound. 

\end{proof}

\subsection{Proof of Proposition~\ref{prop:deterministic-barrier-standard}}

\begin{proposition}[Restatement of Proposition~\ref{prop:deterministic-barrier-standard}]
Fix $t\neq 0$ and $m\in\mathbb N$. For real coefficients $a_1,\ldots,a_{m+1}$ and $b_1,\ldots,b_m$, define
\begin{align}
\mathscr D_m(t,\alpha) := e^{-ia_{m+1}tA}e^{-ib_m\alpha tB}\cdots e^{-ib_1\alpha tB}e^{-ia_1tA}.
\end{align}
Let $\mathcal D_m(t,\alpha)(\rho) := \mathscr D_m(t,\alpha)\rho \mathscr D_m(t,\alpha)^\dagger$ and $\mathcal U_H(t)(\rho) := e^{-i(A+\alpha B)t}\rho e^{i(A+\alpha B)t}$. 
For every choice of coefficients independent of $A$, $B$, and $\alpha$, there exist Hermitian matrices $A$ and $B$ such that
\begin{align}
\left\| \mathcal D_m(t,\alpha) - \mathcal U_H(t) \right\|_\diamond = \Omega(\alpha).
\end{align}
\end{proposition}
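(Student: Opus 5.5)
The plan is to expand both unitaries to first order in $\alpha$ in the interaction picture with respect to $A$. Then I choose $A,B$ so that the first-order coefficients differ, and convert that operator-norm gap into a diamond-norm gap.

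\emph{Step 1 (zeroth order).} At $\alpha=0$, $\mathscr D_m(t,0)=e^{-i a t A}$ with $a:=\sum_j a_j$, while the ideal evolution is $e^{-itA}$. If $a\neq 1$, I take $B=0$ and $A=\mathrm{diag}(0,\lambda)$ with $\lambda(a-1)t\notin 2\pi\mathbb Z$. The error is then a nonzero constant, which is $\Omega(1)\supseteq\Omega(\alpha)$ for small $\alpha$. So assume $a=1$.

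\emph{Step 2 (first order).} Write $x_j:=\sum_{q\le j}a_q$ for the accumulated $A$-time before the $j$th $B$-insertion. Expanding each $e^{-ib_j\alpha tB}$ as in Eq.~\eqref{eq:first-order-randomized-expansion}, I get
\begin{align}
\mathscr D_m(t,\alpha)=e^{-itA}\Big(I-i\alpha t\sum_{j=1}^m b_j B_I(x_jt)+\mathcal O(\alpha^2)\Big),
\end{align}
while Eq.~\eqref{eq:exact-evolution-alpha-exapnsion} gives $e^{-itH}=e^{-itA}\big(I-i\alpha\int_0^t B_I(s)ds+\mathcal O(\alpha^2)\big)$. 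The first-order discrepancy is the Hermitian operator
\begin{align}
X:=t\sum_j b_jB_I(x_jt)-\int_0^tB_I(s)ds .
\end{align}

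\emph{Step 3 (choice of $A,B$).} I take a qubit with $A=\tfrac{\omega}{2}Z$ and $B=X_{\rm Pauli}$. Then $B_I(s)=\cos(\omega s)X_{\rm Pauli}-\sin(\omega s)Y_{\rm Pauli}$, up to sign conventions. Writing $f(s):=e^{i\omega s}$, $X\neq0$ iff
\begin{align}
t\sum_j b_je^{i\omega x_jt}\neq \int_0^te^{i\omega s}ds=\frac{e^{i\omega t}-1}{i\omega}.
\end{align}
The left side is an almost-periodic (exponential-polynomial) function of $\omega$ with finitely many purely imaginary frequencies $x_jt$. The right side behaves like $1/\omega$. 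If they agreed for all real $\omega$, then multiplying by $i\omega$ would give an identity among finitely many exponentials and polynomial factors. Comparing growth as $\omega\to\infty$ (the left side times $\omega$ is unbounded unless all coefficients at each distinct frequency vanish) forces $\sum_jb_j$ grouped by frequency to vanish and $e^{i\omega t}-1\equiv0$, which is a contradiction since $t\neq0$. Hence some $\omega$ gives $X\neq0$. This linear-independence argument for exponentials is the main obstacle, and I would handle it by invoking linear independence of $\{e^{i\lambda\omega},\,\omega e^{i\lambda\omega}\}$ over distinct $\lambda$. The coefficients are fixed independently of $A,B,\alpha$, so this choice is legitimate.

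\emph{Step 4 (diamond norm).} Let $V:=e^{itH}\mathscr D_m=I-i\alpha X+\mathcal O(\alpha^2)$. The diamond distance between unitary channels is unitarily invariant, so it equals that between $\mathcal V$ and the identity channel. I use the standard fact that for a unitary $V$ this is at least $\min_\phi\|V-e^{i\phi}I\|$, comparable up to constants. Since $X$ on a qubit is a nonzero traceless combination $x_1X_{\rm Pauli}+x_2Y_{\rm Pauli}$ (it is not a multiple of $I$), no global phase cancels it: $\min_\phi\|I-i\alpha X-e^{i\phi}I\|\ge c\alpha\|X\|-\mathcal O(\alpha^2)$. Alternatively, I can evaluate on an eigenstate of $X$ and compute fidelity $1-\Theta(\alpha^2)$, giving trace distance $\Theta(\alpha)$. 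This yields $\Omega(\alpha)$ for all sufficiently small $\alpha$.
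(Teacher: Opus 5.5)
Your argument is correct. Steps 1 and 2 and the qubit witness agree with the paper: your $A=\tfrac{\omega}{2}Z$ differs from the paper's $\operatorname{diag}(0,\omega)$ only by a multiple of $I$ and the sign of $\omega$.

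The real difference is how you show that the first-order mismatch cannot vanish for every $\omega$.
\begin{itemize}
\item The paper Taylor-expands $\int_0^1e^{i\lambda x}dx=\sum_jb_je^{i\lambda c_j}$ at $\lambda=0$. This says that a finite quadrature rule with nodes $c_j$ integrates every monomial exactly on $[0,1]$. The paper refutes this with the nonnegative witness $q(x)=\prod_j(x-c_j)^2$, which the rule integrates to $0$ but which has $\int_0^1q>0$.
\item You argue at $|\omega|\to\infty$ instead. The right side is $\mathcal O(1/|\omega|)$. A finite trigonometric sum with real frequencies that decays must vanish identically, either because its Bohr mean square equals $\sum_\lambda|c_\lambda|^2$ or by linear independence of $\{e^{i\lambda\omega},\omega e^{i\lambda\omega}\}$. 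Then the limit $\omega\to0$ gives $t=0$, a contradiction.
\end{itemize}
Your route is shorter but relies on a standard fact about exponential polynomials. The paper's route is fully elementary, has the quadrature interpretation, and shows the mismatch already appears in a Taylor coefficient of order at most $2m$ at $\lambda=0$.

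In Step 4, your reduction $\|\mathcal V-\mathcal I\|_\diamond\ge\min_\phi\|V-e^{i\phi}I\|$ is valid. Combined with the tracelessness of $X$, it gives a bound $\ge|\alpha|\,\|X\|-\mathcal O(\alpha^2)$. The paper instead evaluates on $|1\rangle\langle1|$ and reads off $\|{-i\alpha}[K,\rho_1]\|_1=2|\alpha||K_{21}|$.

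Drop your ``alternative'' at the end of Step 4, because it is wrong. On an eigenstate $|\psi\rangle$ of $X$, the first-order term $-i\alpha[X,\rho]$ vanishes. In fact $1-|\langle\psi|V|\psi\rangle|^2=|\langle\psi^\perp|V|\psi\rangle|^2=\mathcal O(\alpha^4)$, so the trace distance there is only $\mathcal O(\alpha^2)$. You need an input that superposes both eigenvectors of $X$. A computational-basis state works, as in the paper, since for an off-diagonal $X$ it is an equal-weight superposition of the two eigenvectors.
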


\begin{figure}[t!]
    \centering
    \includegraphics[width=0.5\linewidth]{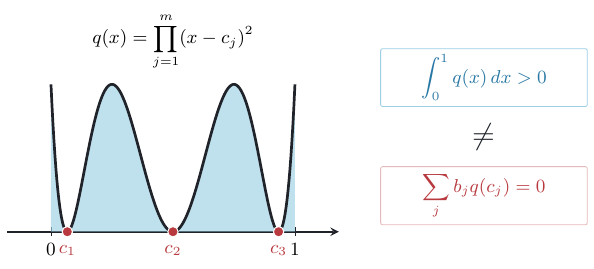}
    \caption{An example of a polynomial witness satisfying $q(c_j)=0$ for all $j$, but $\int_0^1 q(x) dx>0$.}
    \label{fig:polynomial-witness}
\end{figure}

\begin{proof}
Fix arbitrary coefficients and set $c:=\sum_{j=1}^{m+1}a_j$. If $c\neq1$, choose $A:=\operatorname{diag}(0,\pi/((c-1)t))$ and $B:=0$. Then $\mathscr D_m(t,\alpha)=e^{-ictA}$ and $e^{-it(A+\alpha B)}=e^{-itA}$, while $e^{itA}\mathscr D_m(t,\alpha) =e^{-i(c-1)tA}=\operatorname{diag}(1,-1)$. Hence the corresponding unitary channels have diamond distance $2$ for every $\alpha$, which is stronger than the claimed $\Omega(\alpha)$ lower bound. We may therefore assume \begin{align} 
\sum_{j=1}^{m+1}a_j=1. 
\end{align}

Define the partial sums $c_j:=\sum_{\ell=1}^{j}a_\ell$ for $j=1,\ldots,m$. We now compare the terms linear in $\alpha$. Let $B_I(s):=e^{isA}Be^{-isA}$. The exact evolution satisfies
\begin{align}
e^{-it(A+\alpha B)} = e^{-itA}\left(I-i\alpha\int_0^t B_I(s)ds\right) + \mathcal O(\alpha^2).
\end{align}
On the other hand, expanding each weak-$B$ evolution in $\mathscr D_m$ gives
\begin{align}
\mathscr D_m(t,\alpha) = e^{-itA} \left(I-i\alpha t\sum_{j=1}^{m} b_j B_I(tc_j) \right) + \mathcal O(\alpha^2).
\end{align}
Define the first-order mismatch
\begin{align}
\label{eq:deterministic-barrier-first-order-mismatch}
K := t\sum_{j=1}^{m}b_jB_I(tc_j) - \int_0^t B_I(s)ds.
\end{align}
We show that there exist Hermitian $A,B$ for which $K$ has a
nonzero off-diagonal matrix element. More concretely, take
\begin{align}
A=\begin{pmatrix}0&0\\0&\omega\end{pmatrix}, \qquad B=\begin{pmatrix}0&1\\1&0\end{pmatrix},
\end{align}
where $\omega\in\mathbb R$. The $(2,1)$ entry of Eq.~\eqref{eq:deterministic-barrier-first-order-mismatch}, after setting $\lambda=\omega t$ is $K_{21}
=t\bigl(\sum_{j=1}^{m}b_je^{i\lambda c_j}
-\int_0^1e^{i\lambda x}dx\bigr)$.

Suppose, for contradiction, that $K_{21}=0$ for every
$\lambda\in\mathbb R$. Then
\begin{align}
\label{eq:deterministic-barrier-exponential-moment}
\int_0^1 e^{i\lambda x}dx = \sum_{j=1}^{m}b_je^{i\lambda c_j}
\end{align}
for all $\lambda\in\mathbb R$. Differentiating Eq.~\eqref{eq:deterministic-barrier-exponential-moment} at $\lambda=0$ yields
\begin{align}
\int_0^1 x^ndx = \sum_{j=1}^{m} b_j c_j^n, \qquad n=0,1,2,\ldots.
\end{align}
Thus the finite quadrature rule on the right integrates every polynomial exactly on $[0,1]$. Now choose
\begin{align}
q(x):=\prod_{j=1}^{m}(x-c_j)^2.
\end{align}
Then $q(c_j)=0$ for every $j$, so the quadrature rule gives $\sum_{j=1}^m b_j q(c_j)=0$. However, $q(x)\ge0$ on $[0,1]$ and $q$ is not identically zero, so (see Fig.~\ref{fig:polynomial-witness})
\begin{align}
\int_0^1 q(x)dx>0,
\end{align}
a contradiction. 
This contradiction shows that Eq.~\eqref{eq:deterministic-barrier-exponential-moment} cannot hold for every $\lambda\in\mathbb R$. Hence there exists $\lambda_\star\in\mathbb R$ such that $\Delta_\star :=\sum_{j=1}^{m}b_je^{i\lambda_\star c_j} -\int_0^1e^{i\lambda_\star x}dx\neq0$.

Choose $A:=\begin{pmatrix}0&0\\0&\lambda_\star/t\end{pmatrix}$ and $B:=\begin{pmatrix}0&1\\1&0\end{pmatrix}$. For this choice, the first-order mismatch defined in Eq.~\eqref{eq:deterministic-barrier-first-order-mismatch} satisfies $K_{21}=t\Delta_\star\neq0$.

We now derive a lower bound on the diamond distance using a particular input state of $\rho_1:=|1\rangle\langle 1|$. The preceding first-order expansions give $\bigl(\mathcal D_m(t,\alpha)-\mathcal U_H(t)\bigr)(\rho_1) = e^{-itA} \left( -i\alpha[K,\rho_1]+\mathcal O(\alpha^2) \right) e^{itA}$. Since $K$ is Hermitian and $K_{21}\neq0$, we have $-i[K,\rho_1]=\begin{pmatrix}0&iK_{12}\\-iK_{21}&0\end{pmatrix}$, and so that $\left\|-i[K,\rho_1]\right\|_1=2|K_{21}|$.  Therefore,
\begin{align}
\left\| \bigl(\mathcal D_m(t,\alpha)-\mathcal U_H(t)\bigr)(\rho_1) \right\|_1 = 2|\alpha|\,|K_{21}|+\mathcal O(\alpha^2) = \Omega(\alpha),
\end{align}
and, using the definition of the diamond norm, 
\begin{align} 
\left\|\mathcal D_m(t,\alpha)-\mathcal U_H(t)\right\|_\diamond \ge \left\| \bigl(\mathcal D_m(t,\alpha)-\mathcal U_H(t)\bigr)(\rho_1) \right\|_1 = \Omega(\alpha).
\end{align}
\end{proof}

We note that the construction of a polynomial vanishing at the prescribed nodes in the proof above was inspired by a related argument in Ref.~\cite[Lemma~S3]{kim2026high}.

%\subsection{Comparison to standard product formulas}

\section{Randomized THRIFT}

In this section, we analyze randomized product formulas in the stronger access model. We first derive the leading second-order THRIFT error, then construct the randomized correction and prove the base bound in Eq.~\eqref{eq:randomized-thrift-bound}. 
We next prove the randomized high-order result in Theorem~\ref{thm:randomized-thrift}, and finally establish the deterministic lower bound in Proposition~\ref{prop:strong-access-barrier}.

\subsection{Derivation of Eq.~\eqref{eq:thrift-second-order-error}}

Throughout this section, let
\begin{align}
B=\sum_{\ell=1}^L B_\ell,\qquad  B_\ell^{(A)}(s):=e^{isA}B_\ell e^{-isA}.
\end{align}
For an interval $I\subset[0,\tau]$, define the interaction-picture evolution
\begin{align}
W_\ell(I):=\mathcal T\exp\left(-i\alpha\int_I B_\ell^{(A)}(s)ds\right).
\end{align}
In particular, for $I=[x,y]$,
\begin{align}
W_\ell([x,y]) = e^{iyA}e^{-i(y-x)(A+\alpha B_\ell)}e^{-ixA},
\end{align}
so each $W_\ell(I)$ is implementable in the stronger access model. A single THRIFT step in the interaction picture is
\begin{align}
S(\tau) := e^{i\tau A}\mathscr T(\tau) = \prod_{\ell=1}^L W_{\ell}([0,\tau]),
\end{align}
where $\prod_{\ell=1}^L X_\ell:=X_L\cdots X_1$. Expanding the ideal evolution in the interaction picture with respect to $A$,
\begin{align}
e^{-i\tau H} = e^{-i\tau A} U_I(\tau), \qquad  U_I(\tau) := \mathcal T\exp\left(-i\alpha\int_0^\tau\sum_{\ell=1}^L B_\ell^{(A)}(s)ds\right).
\end{align}
We now derive the second-order error of $S(\tau)$. To simplify the notation, define
\begin{align}
X_a:=\int_0^\tau B_{a}^{(A)}(s)ds,\qquad Y_a:=\int_{0<u<s<\tau} B_{a}^{(A)}(s)B_{a}^{(A)}(u)duds .
\end{align}
The Dyson series gives
\begin{align}
W_{a}([0,\tau]) = I-i\alpha X_a-\alpha^2Y_a+\mathcal O(\alpha^3\tau^3).
\end{align}
Hence
\begin{align}
\label{eq:thrift-step-expansion-sm}
S(\tau) = \prod_{\ell=1}^L W_{\ell}([0,\tau]) = I-i\alpha\sum_{a=1}^L X_a -\alpha^2\sum_{a=1}^L Y_a - \alpha^2\sum_{1\le a<b\le L}X_bX_a + \mathcal O(\alpha^3\tau^3).
\end{align}

On the other hand, again by the Dyson expansion, the ideal interaction-picture evolution becomes
\begin{align}
U_I(\tau) &= I -i\alpha\int_0^\tau \sum_{a=1}^L B_{a}^{(A)}(s)ds
-\alpha^2 \int_{0<u<s<\tau} \sum_{a,b=1}^L B_{a}^{(A)}(s)B_{b}^{(A)}(u)duds +\mathcal O(\alpha^3\tau^3) \nonumber\\
&= I -i\alpha\sum_{a=1}^L X_a -\alpha^2\sum_{a=1}^L Y_a -\alpha^2 \sum_{\substack{a,b=1\\a\neq b}}^L \int_{0<u<s<\tau} B_{a}^{(A)}(s)B_{b}^{(A)}(u)duds +\mathcal O(\alpha^3\tau^3) \nonumber\\
&= I -i\alpha\sum_{a=1}^L X_a -\alpha^2\sum_{a=1}^L Y_a -\alpha^2 \sum_{1\le a<b\le L}
\int_{0<u<s<\tau} \left[B_{a}^{(A)}(s)B_{b}^{(A)}(u) + B_{b}^{(A)}(s)B_{a}^{(A)}(u) \right]duds +\mathcal O(\alpha^3\tau^3) \nonumber \\
&= I -i\alpha\sum_{a=1}^L X_a -\alpha^2\sum_{a=1}^L Y_a -\alpha^2\sum_{1\le a<b\le L}X_bX_a
-\alpha^2\sum_{1\le a<b\le L} \int_{0<u<s<\tau} \left[B_{a}^{(A)}(s), B_{b}^{(A)}(u) \right]duds + \mathcal O(\alpha^3\tau^3), \label{eq:ideal-ip-expansion-sm}
\end{align}
where in the last equality we used
\begin{align}
X_bX_a &= \left(\int_0^\tau B_{b}^{(A)}(s)ds\right) \left(\int_0^\tau B_{a}^{(A)}(u)du\right) \nonumber\\
&= \int_0^\tau\int_0^\tau B_{b}^{(A)}(s)B_{a}^{(A)}(u)duds \nonumber\\
&= \int_{0<u<s<\tau} B_{b}^{(A)}(s)B_{a}^{(A)}(u)duds +  \int_{0<s<u<\tau} B_{b}^{(A)}(s)B_{a}^{(A)}(u)duds \nonumber\\
&= \int_{0<u<s<\tau} \left[B_{b}^{(A)}(s)B_{a}^{(A)}(u) + B_{b}^{(A)}(u)B_{a}^{(A)}(s) \right]duds.
\end{align}

Subtracting Eq.~\eqref{eq:ideal-ip-expansion-sm} from Eq.~\eqref{eq:thrift-step-expansion-sm} gives
\begin{align}
\label{eq:thrift-second-order-error-sm}
S(\tau)-U_I(\tau) = \alpha^2 \sum_{1\le a<b\le L} \int_{0<u<s<\tau} \left[B_{a}^{(A)}(s), B_{b}^{(A)}(u) \right]duds + \mathcal O(\alpha^3\tau^3).
\end{align}
This derives Eq.~\eqref{eq:thrift-second-order-error}.

\subsection{Randomized THRIFT construction and derivation of Eq.~\eqref{eq:randomized-thrift-bound}}

\subsubsection{Dyadic correction construction}
\label{sm:subsubsec:dyadic}

Randomized THRIFT appends to $S(\tau)$ a randomly sampled correction unitary built entirely from $W$-blocks, and hence implementable in the same access model. The sampling distribution is chosen so that the averaged second-order contribution of these corrections is the negative of the leading $\alpha^2$ error in Eq.~\eqref{eq:thrift-second-order-error-sm}, thereby canceling it. To define the corrections, for $n\ge0$ and $q=0,\ldots,2^n-1$, let
\begin{align}
Q_{n,q}^{L}:= \left[\frac{q\tau}{2^n},\frac{(2q+1)\tau}{2^{n+1}}\right], \qquad Q_{n,q}^{R}:= \left[\frac{(2q+1)\tau}{2^{n+1}},\frac{(q+1)\tau}{2^n}\right].
\end{align}
Up to measure-zero boundaries, these rectangles decompose the time-ordering triangle as
\begin{align}
\{(u,s):0<u<s<\tau\} = \bigsqcup_{n=0}^{\infty} \bigsqcup_{q=0}^{2^n-1} Q_{n,q}^{L}\times Q_{n,q}^{R}.
\end{align}

For each $1\le a<b\le L$, $n\ge0$, $q=0,\ldots,2^n-1$, and integer $m\ge1$, define the local correction unitary
\begin{align}
\label{eq:local-correction-unitary-sm}
C_{a,b,n,q}^{[m]} := W_{a}(Q_{n,q}^{R})^m W_{b}(Q_{n,q}^{L})^m W_{a}(Q_{n,q}^{R})^{-m} W_{b}(Q_{n,q}^{L})^{-m}.
\end{align}
Each $C_{a,b,n,q}^{[m]}$ is composed only of $W$-blocks and inverse $W$-blocks.

We now define the sampling distribution. Set
\begin{align}
\label{eq:base-construction-beta-sm}
1<\beta<2, \qquad c := \max\left\{1, \left(\frac{2\binom{L}{2}}{1-2^{1-2\beta}}
\right)^{1/2} \right\}, \qquad m_n:=\left\lceil c2^{\beta n}\right\rceil .
\end{align}
For each tuple $(a,b,n,q)$, define
\begin{align}
p_{a,b,n,q}:=\frac{1}{m_n^2}.
\end{align}
The remaining probability is assigned to applying no correction:
\begin{align}
p_0 := 1- \sum_{1\le a<b\le L} \sum_{n=0}^{\infty} \sum_{q=0}^{2^n-1} p_{a,b,n,q}.
\end{align}
With the above choice of $c$, notice that
\begin{align}
1 - p_0 = \sum_{1\le a<b\le L} \sum_{n=0}^{\infty} \sum_{q=0}^{2^n-1} p_{a,b,n,q} &= \sum_{1\le a<b\le L}
\sum_{n=0}^{\infty} 2^n\frac{1}{m_n^2} \nonumber\\
&\le \frac{\binom{L}{2}}{c^2} \sum_{n=0}^{\infty} 2^n2^{-2\beta n} \nonumber\\
&= \frac{\binom{L}{2}}{c^2} \frac{1}{1-2^{1-2\beta}} \le \frac12,
\end{align}
where we used $m_n\ge c2^{\beta n}$ and $\beta>1$. Hence $p_0\in[1/2,1]$; since every $p_{a,b,n,q}\geq0$ and $p_0+\sum_{a,b,n,q}p_{a,b,n,q}=1$ by construction, this defines a valid probability distribution.

\subsubsection{Explicit sampling procedure}
The probabilities defined above specify the weight $p_{a,b,n,q}=m_n^{-2}$ of each correction branch, together with the probability $p_0$ of applying no correction. For an explicit implementation, it is convenient to sample this distribution hierarchically rather than enumerate all tuples $(a,b,n,q)$.

Let
\begin{align}
    N_{\mathrm{pair}}:=\binom{L}{2},
    \qquad
    p_{\mathrm{corr}}
    := \sum_{n=0}^{\infty}
    \frac{N_{\mathrm{pair}}2^n}{m_n^2}
    = 1-p_0.
\end{align}
First, we apply no correction with probability $p_0$. Otherwise, conditioned on applying a correction, we sample the dyadic level $n$ according to
\begin{align}
\Pr(n\mid\mathrm{correction})=\frac{N_{\mathrm{pair}}2^n/m_n^2}{p_{\mathrm{corr}}}.
\end{align}
We then choose $(a,b)$ uniformly from the $N_{\mathrm{pair}}$ pairs satisfying $1\leq a<b\leq L$, and choose $q$ uniformly from $\{0,\ldots,2^n-1\}$. The resulting probability of selecting a particular tuple is
\begin{align}
p_{\mathrm{corr}}\frac{N_{\mathrm{pair}}2^n/m_n^2}{p_{\mathrm{corr}}}\frac{1}{N_{\mathrm{pair}}}\frac{1}{2^n}=\frac{1}{m_n^2}=p_{a,b,n,q},
\end{align}
which reproduces the distribution defined above.

\subsubsection{Error bound on the local correction unitary}

A single randomized THRIFT step in the interaction picture is
\begin{align}
S_{\rm rand}(\tau,\omega) :=
\begin{cases}
S(\tau), & \omega=0,\\[1mm]
C_{a,b,n,q}^{[m_n]}S(\tau), & \omega=(a,b,n,q),
\end{cases}
\end{align}
where $\Pr[\omega=0]=p_0$ and $\Pr[\omega=(a,b,n,q)]=p_{a,b,n,q}$. 

The following lemma shows that each local correction unitary produces, to second order in $\alpha$, the negative of the commutator contribution associated with one dyadic rectangle in the time-ordering triangle.

\begin{lemma}
\label{lem:local-group-commutator}
For every $1\le a<b\le L$, $n\ge0$, $q=0,\ldots,2^n-1$, and $m\ge1$, the local correction unitary in Eq.~\eqref{eq:local-correction-unitary-sm} satisfies
\begin{align}
C_{a,b,n,q}^{[m]} = I-\alpha^2m^2\Gamma_{a,b,n,q} + \mathcal O\left(\alpha^3m^3\ell_n^3\right),
\end{align}
where $\ell_n:=\tau/2^{n+1}$ and 
\begin{align}
\Gamma_{a,b,n,q} := \int_{s\in Q_{n,q}^R} \int_{u\in Q_{n,q}^L} \left[B_{a}^{(A)}(s),
B_{b}^{(A)}(u) \right]duds.
\end{align}
\end{lemma}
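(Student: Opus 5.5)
The plan is to reduce $C_{a,b,n,q}^{[m]}$ to a group commutator of two exponentials whose generators are small, of size $\mathcal O(\alpha m\ell_n)$, and then apply the standard second-order expansion of a group commutator (the identity of Ref.~\cite{kitaev2002classical} already invoked in Eq.~\eqref{eq:local-group-commutator-expansion}).

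\emph{Step 1: write each $W$-block as a single exponential.} Define the interval integrals
\begin{align}
X:=\int_{Q_{n,q}^R}B_a^{(A)}(s)ds,\qquad Y:=\int_{Q_{n,q}^L}B_b^{(A)}(u)du .
\end{align}
Both are Hermitian, with $\|X\|\le\ell_n\|B_a\|$ and $\|Y\|\le\ell_n\|B_b\|$. By the Magnus expansion of the time-ordered exponential over an interval of length $\ell_n$, we can write $W_a(Q_{n,q}^R)=e^{-i\alpha X'}$ with $X'$ Hermitian and $X'=X+\mathcal O(\alpha\ell_n^2)$. Similarly $W_b(Q_{n,q}^L)=e^{-i\alpha Y'}$ with $Y'=Y+\mathcal O(\alpha\ell_n^2)$. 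This holds when $\alpha\ell_n$ is below the Magnus convergence radius, which is the only regime that matters (see Step 3). Alternatively, one can avoid Magnus entirely and simply use the Dyson expansion $W=I-i\alpha X+\mathcal O(\alpha^2\ell_n^2)$, which gives the same leading terms. Consequently
\begin{align}
W_a(Q_{n,q}^R)^{\pm m}=e^{\mp i\alpha mX'},\qquad W_b(Q_{n,q}^L)^{\pm m}=e^{\mp i\alpha mY'} .
\end{align}

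\emph{Step 2: expand the group commutator.} Set $P:=-i\alpha mX'$ and $Q:=-i\alpha mY'$, so that $C_{a,b,n,q}^{[m]}=e^{P}e^{Q}e^{-P}e^{-Q}$. Expand each exponential to second order, or use BCH. The identity terms and the terms linear in $P$ and $Q$ cancel, the quadratic terms combine into $[P,Q]$, and the remainder is bounded by a constant times $(\|P\|+\|Q\|)^3$. This gives
\begin{align}
e^{P}e^{Q}e^{-P}e^{-Q}=I+[P,Q]+\mathcal O\bigl((\alpha m\ell_n)^3\bigr).
\end{align}
Now $[P,Q]=-\alpha^2m^2[X',Y']$. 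Replacing $X',Y'$ by $X,Y$ costs $\alpha^2m^2\cdot\mathcal O(\alpha\ell_n^2)\cdot\mathcal O(\ell_n)=\mathcal O(\alpha^3m^2\ell_n^3)$, which is absorbed into $\mathcal O(\alpha^3m^3\ell_n^3)$ because $m\ge1$. Finally, by bilinearity of the commutator, $[X,Y]=\int_{Q^R}\int_{Q^L}[B_a^{(A)}(s),B_b^{(A)}(u)]duds=\Gamma_{a,b,n,q}$. This yields the claimed expansion.

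\emph{Step 3: uniformity of the remainder.} The step needing most care is making the $\mathcal O(\alpha^3m^3\ell_n^3)$ bound hold uniformly in $m$ and $n$, since $m=m_n$ grows with $n$ and $\alpha m\ell_n$ need not be small. I would split into two cases.
\begin{itemize}
\item If $\alpha m\ell_n\le1$, the cubic remainder of the group-commutator expansion has an explicit constant. This follows from the integral form of the Taylor remainder for $t\mapsto e^{tP}e^{tQ}e^{-tP}e^{-tQ}$, using $\|e^{\pm tP}\|=1$.
\item If $\alpha m\ell_n>1$, the bound is trivial. Indeed, $\|C-I+\alpha^2m^2\Gamma\|\le2+2\alpha^2m^2\ell_n^2\|B_a\|\|B_b\|$, and this is at most a constant times $(\alpha m\ell_n)^3$ in that regime.
\end{itemize}
The implied constant therefore depends only on $\|B_a\|$ and $\|B_b\|$, as required.
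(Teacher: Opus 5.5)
Your proof is correct, and the leading-order computation is the same as the paper's: expand the group commutator, obtain $[P,Q]=-\alpha^2m^2[X,Y]$, and identify $[X,Y]=\Gamma_{a,b,n,q}$ by bilinearity. Where you differ is in how you control the remainder uniformly in $m$ and $n$.

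\emph{The paper's route.} It never takes logarithms. It Taylor-expands $C^{[m]}_{a,b,n,q}(\alpha)$ directly in $\alpha$ with integral remainder. It then bounds $\partial_\alpha^3 C$ using two ingredients: the Duhamel/Dyson formula $\|\partial_\alpha^p W_\ell(I;\alpha)\|\le \|B_\ell\|^p h^p$, and a Leibniz count for the $m$-fold power, $\|\partial_\alpha^p W^{m}\|=\mathcal O(m^ph^p)$. Because $W$ is unitary for every real $\alpha$, these bounds hold globally, so no case split is needed. The same derivative machinery is reused for the mixed $(\delta t,\alpha)$ bounds in the proof of Theorem~\ref{thm:randomized-thrift}.

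\emph{Your route.} You write $W^{\pm m}$ as an exact exponential $e^{\mp i\alpha m X'}$. This buys three things: the $m$-dependence enters only through the generator norm; the cubic remainder has the explicit constant $\tfrac43(\|P\|+\|Q\|)^3$; and you see that the $X'\neq X$ correction is only $\mathcal O(\alpha^3m^2\ell_n^3)$. The price is that the logarithm/Magnus representation needs $\alpha\ell_n$ small. That is why you need the case split, with a trivial bound for large $\alpha m\ell_n$.

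\emph{A small fix.} With the threshold $\alpha m\ell_n\le 1$ you only get $\alpha\ell_n\le 1$. This does not place $\alpha\ell_n\|B_a\|$ inside the Magnus convergence radius $\pi$ unless $\|B_a\|<\pi$. Use the threshold $\alpha m\ell_n\max(\|B_a\|,\|B_b\|)\le 1$ instead. The complementary trivial bound still goes through, with constants depending on the norms.

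\emph{On your Dyson aside.} Your ``alternatively, Dyson'' remark does not by itself control $W^m$ with a remainder $\mathcal O((\alpha m\ell_n)^3)$. That is exactly where the paper needs its Leibniz bound on derivatives of $F^m$. In your main argument, the exact exponential representation is what does that work for you.
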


\begin{proof}
We write the $\alpha$-dependence explicitly and set $\mathcal A(\alpha):=W_{a}(Q_{n,q}^R;\alpha)^m$ and $\mathcal B(\alpha):=W_{b}(Q_{n,q}^L;\alpha)^m$. Also let $Z_a:=\int_{Q_{n,q}^R}B_{a}^{(A)}(s)ds$ and $Z_b:=\int_{Q_{n,q}^L}B_{b}^{(A)}(u)du$. Then
$\mathcal A(0)=\mathcal B(0)=I$, $\mathcal A'(0)=-imZ_a$, and $\mathcal B'(0)=-imZ_b$. Writing $X:=-imZ_a$ and $Y:=-imZ_b$, we have, to second order,
\begin{align}
\mathcal A(\alpha)&=I+\alpha X+\alpha^2A_2+\mathcal O(\alpha^3), &
\mathcal B(\alpha)&=I+\alpha Y+\alpha^2B_2+\mathcal O(\alpha^3), \nonumber\\
\mathcal A(\alpha)^{-1} &=I-\alpha X+\alpha^2(X^2-A_2)+\mathcal O(\alpha^3), & \mathcal B(\alpha)^{-1} &=I-\alpha Y+\alpha^2(Y^2-B_2)+\mathcal O(\alpha^3).
\end{align}
Hence
\begin{align}
C_{a,b,n,q}^{[m]}(\alpha) &= \mathcal A(\alpha)\mathcal B(\alpha)
\mathcal A(\alpha)^{-1}\mathcal B(\alpha)^{-1} \nonumber\\
&= I+\alpha^2(XY-YX)+\mathcal O(\alpha^3) \nonumber\\
&= I-\alpha^2m^2[Z_a,Z_b]+\mathcal O(\alpha^3) \nonumber\\
&= I-\alpha^2m^2\Gamma_{a,b,n,q}+\mathcal O(\alpha^3).
\end{align}
This is the standard group-commutator identity~\cite{kitaev2002classical}. Equivalently, the expansion above identifies the Taylor coefficients through second order, so Taylor's theorem with integral remainder gives
\begin{align}
C_{a,b,n,q}^{[m]}(\alpha) = I-\alpha^2m^2\Gamma_{a,b,n,q} + \int_0^\alpha
\frac{(\alpha-s)^2}{2} \partial_s^3C_{a,b,n,q}^{[m]}(s)ds .
\end{align}

It remains to bound the remainder uniformly in $m,n,q$. For an interval $I=[x,y]$ of length $h$, the propagator $W_\ell([x,y];\alpha)$ satisfies $\partial_yW_\ell([x,y];\alpha) =-i\alpha B_\ell^{(A)}(y)W_\ell([x,y];\alpha)$. Differentiating this equation with respect to $\alpha$ and iterating gives, for $p=1,2,3$,
\begin{align}
\partial_\alpha^p W_\ell([x,y];\alpha) &=
(-i)^p p! \int_{x\le t_p\le\cdots\le t_1\le y}
W_\ell([t_1,y];\alpha)B_\ell^{(A)}(t_1)
W_\ell([t_2,t_1];\alpha) \cdots B_\ell^{(A)}(t_p)W_\ell([x,t_p];\alpha)
dt_1\cdots dt_p, \nonumber\\
\left\|\partial_\alpha^p W_\ell(I;\alpha)\right\| &\le p! \int_{x\le t_p\le\cdots\le t_1\le y}
\|B_\ell\|^p dt_1\cdots dt_p = \mathcal O(h^p).
\end{align}
where we used unitarity of $W_\ell$. The same estimate holds for $W_\ell(I;\alpha)^{-1}$. Therefore, for $F(\alpha)=W_\ell(I;\alpha)$ or $W_\ell(I;\alpha)^{-1}$,
\begin{align}
\left\|\partial_\alpha^pF(\alpha)^m\right\| &\le \sum_{\substack{r_1+\cdots+r_m=p\\ r_j\ge0}} \frac{p!}{r_1!\cdots r_m!} \prod_{j=1}^m \left\|\partial_\alpha^{r_j}F(\alpha)\right\| = \mathcal O(m^ph^p),
\qquad p=1,2,3.
\end{align}
where the factors with $r_j=0$ are bounded by unitarity. Applying this with $h=\ell_n$ to the four factors $\mathcal A$, $\mathcal B$, $\mathcal A^{-1}$, and $\mathcal B^{-1}$ yields
\begin{align}
\left\| \partial_\alpha^3C_{a,b,n,p}^{[m]}(\alpha) \right\| &\le
\sum_{\substack{p_1+p_2+p_3+p_4=3\\ p_j\ge0}}
\frac{3!}{p_1!p_2!p_3!p_4!}
\prod_{j=1}^4 \mathcal O(m^{p_j}\ell_n^{p_j})
= \mathcal O(m^3\ell_n^3).
\end{align}
Thus
\begin{align}
\left\| \int_0^\alpha \frac{(\alpha-s)^2}{2} \partial_s^3C_{a,b,n,q}^{[m]}(s)ds
\right\| \le \frac{\alpha^3}{6} \max_{s} \left\| \partial_s^3C_{a,b,n,q}^{[m]}(s) \right\| = \mathcal O\left(\alpha^3m^3\ell_n^3\right).
\end{align}
Combining this with the Taylor expansion proves the claim.

\end{proof}

\subsubsection{Derivation of Eq.~\eqref{eq:randomized-thrift-bound}}

In the Schr\"odinger picture, the one-step formula corresponds to
\begin{align}
\mathscr T_{\rm rand}(\tau,\omega) := e^{-i\tau A}S_{\rm rand}(\tau,\omega).
\end{align}
For $\delta t=t/r$ and independent samples
$\boldsymbol\omega=(\omega_1,\ldots,\omega_r)$, define the $r$-step randomized THRIFT formula
\begin{align}
\mathscr T_{\rm rand}^{(r)}(t,\boldsymbol\omega) := \mathscr T_{\rm rand}(\delta t,\omega_r)\cdots \mathscr T_{\rm rand}(\delta t,\omega_1).
\end{align}
Averaging over the samples gives the mixed-unitary channel
\begin{align}
\mathcal T_{\rm rand}^{(r)}(\rho) := \mathbb E_{\boldsymbol\omega} \left[\mathscr T_{\rm rand}^{(r)}(t,\boldsymbol\omega) \rho \mathscr T_{\rm rand}^{(r)}(t,\boldsymbol\omega)^\dagger \right].
\end{align}

We derive the base randomized-THRIFT bound stated in Eq.~\eqref{eq:randomized-thrift-bound}.

\begin{lemma}
\label{lem:base-randomized-thrift}
The randomized THRIFT protocol defined above satisfies
\begin{align}
\left\| \mathcal T_{\rm rand}^{(r)} - \mathcal U_H(t) \right\|_\diamond = \mathcal O\left(\frac{\alpha^3t^3}{r^2}\right),
\end{align}
where $\mathcal U_H(t)(\rho):=e^{-itH}\rho e^{itH}$. Moreover, the average number of gates per Trotter step is $\mathcal O(L)$.
\end{lemma}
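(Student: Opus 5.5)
The plan is to reduce everything to a one-step bias estimate in the interaction picture and then reuse the telescoping and mixing arguments from the proof of Theorem~\ref{thm:randomized-high-order-sm}. Since $\mathscr T_{\rm rand}(\delta t,\omega)=e^{-i\delta tA}S_{\rm rand}(\delta t,\omega)$ and $e^{-i\delta tH}=e^{-i\delta tA}U_I(\delta t)$, it suffices to show
\begin{align}
\left\|\mathbb E_\omega S_{\rm rand}(\tau,\omega)-U_I(\tau)\right\|=\mathcal O(\alpha^3\tau^3).
\end{align}
Independence of the samples then gives $\mathbb E_{\boldsymbol\omega}\mathscr T^{(r)}_{\rm rand}=M^r$ with $\|M\|\le1$. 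Telescoping gives an $r\cdot\mathcal O(\alpha^3\delta t^3)=\mathcal O(\alpha^3t^3/r^2)$ operator-norm bias, and Lemma~\ref{lem:mixing-lemma} converts this into the diamond-norm statement.

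For the one-step bias, write
\begin{align}
\mathbb E_\omega S_{\rm rand}=S(\tau)+\sum_{\omega\neq0}p_\omega\bigl(C^{[m_n]}_{a,b,n,q}-I\bigr)S(\tau).
\end{align}
By Eq.~\eqref{eq:thrift-second-order-error-sm}, we have $S(\tau)=U_I(\tau)+\alpha^2E_2+\mathcal O(\alpha^3\tau^3)$, where $E_2=\sum_{a<b}\int_{u<s}[B_a^{(A)}(s),B_b^{(A)}(u)]$. By Lemma~\ref{lem:local-group-commutator}, each $C^{[m_n]}-I$ equals $-\alpha^2m_n^2\Gamma_{a,b,n,q}+R_{a,b,n,q}$ with $\|R\|=\mathcal O(\alpha^3m_n^3\ell_n^3)$. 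Multiplying by $p=m_n^{-2}$ and summing, the leading terms give exactly $-\alpha^2\sum_{a<b,n,q}\Gamma_{a,b,n,q}=-\alpha^2E_2$, because the dyadic rectangles partition the triangle. The series converges absolutely, since $\|\Gamma_{a,b,n,q}\|\le 2\|B_a\|\|B_b\|\ell_n^2$ and $\sum_n 2^n\ell_n^2<\infty$. This term multiplies $S(\tau)=I+\mathcal O(\alpha\tau)$, so the cross terms are $\mathcal O(\alpha^3\tau^3)$ after using $\sum_\omega p_\omega\|\Gamma\|m_n^2=\mathcal O(\tau^2)$. The remainders contribute
\begin{align}
\sum_{n}\binom{L}{2}2^n m_n^{-2}\,\alpha^3m_n^3\ell_n^3\lesssim \alpha^3\tau^3\sum_n 2^n\,2^{\beta n}\,2^{-3n},
\end{align}
which converges exactly because $\beta<2$. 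The factor of $c$ and the ceiling are harmless constants. Hence the total is $\mathcal O(\alpha^3\tau^3)$.

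The main obstacle is this remainder sum. Its uniformity rests on the $m^3\ell_n^3$ form of the remainder in Lemma~\ref{lem:local-group-commutator}. One must also check that the $\mathcal O(\alpha^3\tau^3)$ terms in the THRIFT expansion are genuinely jointly bounded, via Taylor's theorem with integral remainder as in that lemma, rather than only asymptotic in $\alpha$. Care is also needed to confirm that the amplification $m_n^2$ never spoils the bound. The required balance is between $\beta>1$, which gives normalizability, and $\beta<2$, which gives remainder summability.

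For the gate count, a correction at level $n$ uses $4m_n$ $W$-blocks, each with $\mathcal O(1)$ gates, in addition to the $\mathcal O(L)$ gates of $S(\tau)$. The expected extra cost is therefore
\begin{align}
\sum_n\binom{L}{2}2^nm_n^{-2}\cdot 4m_n=\mathcal O\Bigl(\binom{L}{2}c^{-1}\sum_n2^{(1-\beta)n}\Bigr).
\end{align}
This is finite because $\beta>1$. Since $c\gtrsim\binom{L}{2}^{1/2}$, it gives an $\mathcal O(L)$ extra expected cost, so the total is $\mathcal O(L)$ per step.
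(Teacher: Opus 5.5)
Your proposal is correct and follows the paper's proof essentially step for step. You use the same one-step interaction-picture decomposition $\mathbb E_\omega S_{\rm rand}(\tau,\omega)-U_I(\tau)=\alpha^2G(\tau)\bigl(I-S(\tau)\bigr)+\mathcal O(\alpha^3\tau^3)$, with the group-commutator remainders summable because $\beta<2$, then telescoping and Lemma~\ref{lem:mixing-lemma}, and the same $\sum_n 2^n/m_n$ gate-count bound using $\beta>1$ and $c=\Theta(L)$.
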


\begin{proof}
Since the rectangles $Q_{n,q}^L\times Q_{n,q}^R$ partition the triangle $0<u<s<\tau$, Eq.~\eqref{eq:thrift-second-order-error-sm} gives 
\begin{align} 
S(\tau)-U_I(\tau)=\alpha^2G(\tau)+\mathcal O(\alpha^3\tau^3), \qquad G(\tau) := \sum_{1\le a<b\le L} \sum_{n=0}^{\infty} \sum_{q=0}^{2^n-1} \Gamma_{a,b,n,q}. 
\end{align} 
We now average one randomized step. Since $p_0+\sum_{a,b,n,q}p_{a,b,n,q}=1$, 
\begin{align} 
\mathbb E_\omega S_{\rm rand}(\tau,\omega) &=
p_0S(\tau) + \sum_{1\le a<b\le L} \sum_{n=0}^{\infty} \sum_{q=0}^{2^n-1} p_{a,b,n,q} C_{a,b,n,q}^{[m_n]}S(\tau) \nonumber\\
&= S(\tau) + \sum_{1\le a<b\le L} \sum_{n=0}^{\infty} \sum_{q=0}^{2^n-1} p_{a,b,n,q} \left(C_{a,b,n,q}^{[m_n]}-I\right)S(\tau) \nonumber\\ 
&= S(\tau) -\alpha^2 \sum_{1\le a<b\le L}
\sum_{n=0}^{\infty}
\sum_{q=0}^{2^n-1} \Gamma_{a,b,n,q}S(\tau) + \mathcal O\left(\alpha^3 \sum_{1\le a<b\le L}
\sum_{n=0}^{\infty} \sum_{q=0}^{2^n-1} m_n\ell_n^3 \right) \nonumber\\
&= S(\tau) - \alpha^2G(\tau)S(\tau) + \mathcal O\left( \alpha^3 \sum_{1\le a<b\le L} \sum_{n=0}^{\infty} \sum_{q=0}^{2^n-1} m_n\ell_n^3 \right), 
\end{align} 
where the third equality follows from Lemma~\ref{lem:local-group-commutator} with $m=m_n$ and the choice $p_{a,b,n,q}=m_n^{-2}$, while the last equality uses the definition of $G(\tau)$. The remaining sum is finite because $m_n=\lceil c2^{\beta n}\rceil$, $\ell_n=\tau/2^{n+1}$, and $\beta<2$: 
\begin{align} 
\sum_{1\le a<b\le L} \sum_{n=0}^{\infty} \sum_{q=0}^{2^n-1} m_n\ell_n^3 &= \mathcal O\left( \sum_{n=0}^{\infty} 2^n2^{\beta n}\left(\frac{\tau}{2^{n+1}}\right)^3 \right) \nonumber\\ 
&= \mathcal O\left( \tau^3\sum_{n=0}^{\infty}2^{(\beta-2)n} \right) = \mathcal O(\tau^3). 
\end{align} 
Therefore 
\begin{align} 
\mathbb E_\omega S_{\rm rand}(\tau,\omega)-U_I(\tau) &= S(\tau)-U_I(\tau)-\alpha^2G(\tau)S(\tau) +\mathcal O(\alpha^3\tau^3) \nonumber\\ 
&= \alpha^2G(\tau)\left(I-S(\tau)\right) +\mathcal O(\alpha^3\tau^3) = \mathcal O(\alpha^3\tau^3), 
\end{align} 
where we used $\|G(\tau)\|=\mathcal O(\tau^2)$ and $\|I-S(\tau)\|=\mathcal O(\alpha\tau)$. By unitary invariance of the operator norm, this gives
\begin{align}
\left\|\mathbb E_\omega \mathscr T_{\rm rand}(\tau,\omega) - e^{-i\tau H} \right\| &=
\left\| e^{-i\tau A} \left(\mathbb E_\omega S_{\rm rand}(\tau,\omega)-U_I(\tau) \right) \right\| \nonumber\\
&= \left\| \mathbb E_\omega S_{\rm rand}(\tau,\omega)-U_I(\tau) \right\| = \mathcal O(\alpha^3\tau^3).
\end{align}

Setting $\tau=\delta t=t/r$ and repeating the randomized step independently $r$ times, let
$M:=\mathbb E_\omega\mathscr T_{\rm rand}(\delta t,\omega)$ and $U_\delta:=e^{-i\delta tH}$. Then
$\mathbb E_{\boldsymbol\omega}\mathscr T_{\rm rand}^{(r)}(t,\boldsymbol\omega)=M^r$, and the telescoping identity again gives
\begin{align}
\left\|\mathbb E_{\boldsymbol\omega}\mathscr T_{\rm rand}^{(r)}(t,\boldsymbol\omega) - e^{-itH} \right\| &= \|M^r-U_\delta^r\| \le r\|M-U_\delta\| = \mathcal O\left(\frac{\alpha^3t^3}{r^2}\right).
\end{align}
Therefore, Lemma~\ref{lem:mixing-lemma} yields
\begin{align}
\left\|\mathcal T_{\rm rand}^{(r)}-\mathcal U_H(t)\right\|_\diamond = \mathcal O\left(\frac{\alpha^3t^3}{r^2}\right).
\end{align}

Finally, we bound the expected gate count. The uncorrected THRIFT step uses $L$ $W$-blocks, and a correction indexed by $(a,b,n,q)$ uses $4m_n$ additional $W$-blocks. Hence the expected number of additional $W$-blocks is
\begin{align}
\label{eq:expected-gate-count-rthrift}
\sum_{1\le a<b\le L} \sum_{n=0}^{\infty} \sum_{q=0}^{2^n-1} p_{a,b,n,q} 4m_n &= 4\sum_{1\le a<b\le L} \sum_{n=0}^{\infty}
2^n\frac{1}{m_n} \nonumber\\
&\le \frac{4\binom{L}{2}}{c} \sum_{n=0}^{\infty}2^{(1-\beta)n} & \nonumber \\
& = \mathcal O \left(\frac{L^2}{c}\right)= \mathcal O(L),
\end{align}
where we used $m_n\ge c2^{\beta n}$, $\beta>1$, and, for fixed $\beta\in(1,2)$,
\begin{align}
c = \max\left\{ 1, \left(\frac{2\binom{L}{2}}{1-2^{1-2\beta}} \right)^{1/2} \right\} = \Theta(L).
\end{align} 
Therefore the average number of $W$-blocks per step is $L+\mathcal O(L)=\mathcal O(L)$. Since each $W$-block is implemented using a constant number of elementary exponentials in the stronger access model, this also implies that the expected total gate count per step is $\mathcal O(L)$.

\end{proof}

\paragraph{Tail of the gate count.} The expected gate count analysis guarantee does not imply concentration of the gate count. Let $G_\omega$ denote the number of additional $W$-blocks in one randomized step.  On a correction branch $(a,b,n,q)$, $G_\omega=4m_n$, and this branch is sampled with probability $m_n^{-2}$.  Hence Eq.~\eqref{eq:expected-gate-count-rthrift} gives $\mathbb{E}[G_\omega]<\infty$, while
\begin{align}
\mathbb{E}[G_\omega^2] = \sum_{1\leq a<b\leq L} \sum_{n=0}^{\infty} \sum_{q=0}^{2^n-1} \frac{(4m_n)^2}{m_n^2} = 16\binom{L}{2}\sum_{n=0}^{\infty}2^n = \infty.
\end{align}
Thus the gate count has finite mean but infinite variance.  Truncating the correction level at a finite $N$ makes the gate count bounded and introduces only a tail bias controlled by the omitted probability, as discussed in Sec.~\ref{subsec:gate-cost-evaluation}.

\subsection{Proof of Theorem~\ref{thm:randomized-thrift}}

We first specify the order-dependent sampling distribution that is suppressed in the main text. Fix $k\geq1$ and choose
\begin{align}
1 < \beta_k < 1 + \frac{1}{2k+2}.
\label{eq:high-order-rthrift-beta-sm}
\end{align}
Define 
\begin{align}
c_k := \max\left\{ 1, \left( \frac{2\binom{L}{2}}{1-2^{1-2\beta_k}} \right)^{1/2} \right\}, \qquad 
m_{k,n} := \left\lceil c_k2^{\beta_k n}\right\rceil, \qquad 
p_{a,b,n,q}^{(k)} := \frac{1}{m_{k,n}^2}. 
\label{eq:high-order-rthrift-distribution-sm}
\end{align}
The remaining probability is assigned to the no-correction branch,
\begin{align}
p_0^{(k)} := 1- \sum_{1\leq a<b\leq L} \sum_{n=0}^{\infty} \sum_{q=0}^{2^n-1}
p_{a,b,n,q}^{(k)}.
\end{align}
Note that this is essentially the same as the base construction in Eq.~\eqref{eq:base-construction-beta-sm}, with $\beta_k$ in place of $\beta$ and $c_k$ in place of $c$. The same calculation as in the preceding subsection shows that this is a valid probability distribution. Throughout this subsection, $\mathscr T_{\rm rand}(\tau,\omega)$ denotes the base randomized-THRIFT step constructed using the distribution in Eq.~\eqref{eq:high-order-rthrift-distribution-sm}.

For $\delta t=t/r$ and independent correction labels $\boldsymbol\omega=(\omega_1,\ldots,\omega_r)$, we defined 
\begin{align}
\mathscr T_{{\rm rand},2k}^{(r)}(t,\boldsymbol\omega) &:= \mathscr T_{{\rm rand},2k}(\delta t,\omega_r) \cdots \mathscr T_{{\rm rand},2k}(\delta t,\omega_1),
\nonumber\\
\mathcal T_{{\rm rand},2k}^{(r)}(\rho) &:= \mathbb E_{\boldsymbol\omega} \left[ \mathscr T_{{\rm rand},2k}^{(r)}(t,\boldsymbol\omega) \rho \mathscr T_{{\rm rand},2k}^{(r)}(t,\boldsymbol\omega)^\dagger \right].
\label{eq:high-order-rthrift-channel-sm}
\end{align}

\begin{theorem}[Restatement of Theorem~\ref{thm:randomized-thrift}]
For every fixed $k\geq1$, the randomized $2k$-th order THRIFT formula
satisfies
\begin{align}
\left\| \mathcal T_{{\rm rand},2k}^{(r)} -\mathcal U_H(t) \right\|_\diamond = \mathcal O\left( \frac{\alpha^3t^{2k+1}}{r^{2k}} \right).
\label{eq:high-order-rthrift-global-sm}
\end{align}
Moreover, its expected number of exponentials per step is $\mathcal O(L)$.
\end{theorem}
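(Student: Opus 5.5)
We reduce to a one-step estimate exactly as in the proofs of Theorem~\ref{thm:randomized-high-order-sm} and Lemma~\ref{lem:base-randomized-thrift}. Set $M_k:=\mathbb E_\omega\mathscr T_{{\rm rand},2k}(\delta t,\omega)$. Independence of $\omega_1,\ldots,\omega_r$ gives $\mathbb E_{\boldsymbol\omega}\mathscr T^{(r)}_{{\rm rand},2k}=M_k^r$ and $\|M_k\|\le1$. The telescoping identity then bounds the global bias by $r\|M_k-e^{-i\delta tH}\|$, and Lemma~\ref{lem:mixing-lemma} converts this into the diamond-norm statement. So it suffices to show
\begin{align}
F_{2k}(\tau,\alpha):=\mathbb E_\omega\mathscr T_{{\rm rand},2k}(\tau,\omega)-e^{-i\tau H}=\mathcal O(\alpha^3\tau^{2k+1}).
\end{align}
The gate count is the easy part. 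For fixed $k$, one step contains $2\cdot5^{k-1}$ base randomized-THRIFT substeps, all sharing the same $\omega$. The expected number of extra $W$-blocks is therefore a $k$-dependent constant times the sum in Eq.~\eqref{eq:expected-gate-count-rthrift} with $(\beta_k,c_k)$ in place of $(\beta,c)$. That sum is $\mathcal O(L^2/c_k)=\mathcal O(L)$ because $\beta_k>1$.

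\emph{Separating the $\alpha$-orders.} I would first treat the error branch by branch, mirroring parts (ii)--(iii) of Theorem~\ref{thm:randomized-high-order-sm}. Fix $\omega$. Every substep of $\mathscr T_{{\rm rand},2k}(\tau,\omega)$ has the form $e^{-iyA}S_{\rm rand}(y,\omega)$ or its symmetrized inverse counterpart with signed length $y$. At order $\alpha^0$ all $W$-blocks are trivial, and the $A$-times sum to $\tau$, so there is no error.

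At order $\alpha^1$ the group commutator $C^{[m]}$ contributes nothing. Each substep then contributes exactly $\int_{x_{j-1}}^{x_j}B^{(A)}(s)\,ds$ over its oriented interval, by the shift identity used in Theorem~\ref{thm:randomized-high-order-sm}. These contributions telescope to $\int_0^\tau B^{(A)}(s)\,ds$.

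At order $\alpha^2$ there are two kinds of contributions. The first kind are cross terms of two linear terms; these are $\omega$-independent. The second kind are the single-substep $\alpha^2$ terms, namely the THRIFT commutator error of Eq.~\eqref{eq:thrift-second-order-error-sm} plus $-m_n^2\Gamma$ from Lemma~\ref{lem:local-group-commutator}. These are applied on the dyadic rectangles of each substep's own oriented interval; for negative $y$ I expect to have to check that the identity holds with orientation.

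After averaging with $p^{(k)}_{a,b,n,q}=m_{k,n}^{-2}$, each substep's second-order coefficient equals that of the exact interaction-picture propagator over its interval, exactly as in Lemma~\ref{lem:base-randomized-thrift}. The product of exact propagators over consecutive oriented intervals is the exact $U_I(\tau)$. Because the corrections enter only at order $\alpha^2$, reusing the same $\omega$ across substeps creates correlations only at order $\alpha^3$. Hence the averaged coefficients of $\alpha^0,\alpha^1,\alpha^2$ all vanish.

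\emph{Order in $\tau$ and summation over branches (main obstacle).} For fixed $\omega$, the base step is first-order consistent, since $C^{[m]}=I+\mathcal O(\tau^2)$. The symmetrized step satisfies $\mathscr T_{{\rm rand},2}(-\tau,\omega)=\mathscr T_{{\rm rand},2}(\tau,\omega)^{-1}$, so the BCH/Suzuki induction of part (i) of Theorem~\ref{thm:randomized-high-order-sm} gives an error $\mathcal O(\tau^{2k+1})$ for each branch. The difficulty is that neither the per-branch analytic expansion nor the per-branch Suzuki remainder is uniform in $n$. The correction $C^{[m_n]}$ has effective generator size $\sim\alpha m_n\ell_n\sim\alpha 2^{(\beta_k-1)n}\tau$, and $\beta_k>1$ means this grows with $n$. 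So the joint-analyticity argument of part (iii) cannot be applied naively to the infinite average.

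My plan is to separate the $\omega=0$ branch from the corrections. This splits $F_{2k}$ into the averaged-order bookkeeping above plus a correction remainder $R$, and for $R$ I would bound each branch's third-and-higher-order Taylor remainder in $\alpha$ directly. I would use the integral-remainder estimates of Lemma~\ref{lem:local-group-commutator}, extended through the Suzuki product by the Leibniz rule. This should bound a level-$n$ branch by something like $\alpha^3\tau^{2k+1}\,m_n^{2k+2}2^{-(2k+2)n}$ up to constants; more generally, its $\alpha^3\tau^{q}$ coefficients for $q\le 2k$ should cancel, by symmetry and consistency of the $\omega$-dependent product. These bounds are then weighted by $2^n m_n^{-2}$ and summed over $n$.

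The admissible range $\beta_k<1+1/(2k+2)$ in Eq.~\eqref{eq:high-order-rthrift-beta-sm} is precisely what should make $\sum_n 2^n m_{k,n}^{-2}(m_{k,n}2^{-n})^{2k+2}$, and its companion sums, converge. Verifying this exponent count carefully is the step I expect to be the real obstacle. Once the sum converges, the average is again jointly analytic near $(0,0)$ with vanishing coefficients for $p\le2$ or $q\le2k$, which gives $F_{2k}=\mathcal O(\alpha^3\tau^{2k+1})$.
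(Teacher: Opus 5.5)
Your skeleton matches the paper's proof. It has the mixing-lemma and telescoping reduction, and per-branch Suzuki order in $\tau$, using that reusing $\omega$ keeps $\mathscr T_{{\rm rand},2}$ symmetric. It has cancellation of the averaged $\alpha^0,\alpha^1,\alpha^2$ coefficients, because each base factor is the exact propagator plus a mean-zero $\alpha^2$ defect. It has a branchwise bound summed against the weights $2^n m_{k,n}^{-2}$, and the same gate count.

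The problem is your closing step. You note correctly that the joint-analyticity argument of Theorem~\ref{thm:randomized-high-order-sm} cannot be applied to the infinite average. You then conclude anyway that, once the weighted sum converges, ``the average is again jointly analytic near $(0,0)$.'' That is false in general, for two reasons.
\begin{itemize}
\item On any complex neighbourhood of $(0,0)$ the repeated blocks $W_\ell(I)^{m_{k,n}}$ are no longer unitary. They can grow like $\exp(c|\alpha\tau|\,m_{k,n}h_n)$ with $m_{k,n}h_n\sim 2^{(\beta_k-1)n}\to\infty$, and merely exponentially decaying weights cannot absorb that.
\item Even for real parameters, the $p$-th $\alpha$-derivative of a level-$n$ branch is generically of size $(m_{k,n}h_n)^p$. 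The weighted sum $\sum_n 2^{(1-2\beta_k)n}2^{(\beta_k-1)pn}$ then diverges once $p>(2\beta_k-1)/(\beta_k-1)$.
\end{itemize}
Convergence of one weighted sum therefore buys only finitely many derivatives, never a convergent Taylor series.

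The paper closes with a finite-order argument instead.
\begin{itemize}
\item It uses dominated convergence to differentiate under $\mathbb E_\omega$ through total order $D=2k+4$.
\item Parts (i) and (ii) then give $\partial_\tau^rF(0,\alpha)=0$ for $r\le 2k$ and $\partial_\alpha^sF(\tau,0)=0$ for $s\le 2$.
\item Taylor's theorem with integral remainder, applied first in $\tau$ and then in $\alpha$, writes $F$ as $\alpha^3\tau^{2k+1}$ times an average of $\partial_\tau^{2k+1}\partial_\alpha^3F$, which is uniformly bounded.
\end{itemize}
Equivalently, if you had per-branch remainder bounds $K_n\alpha^3\tau^{2k+1}$ with $\sum_n 2^n m_{k,n}^{-2}K_n<\infty$, you could simply sum them. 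Either way, the analyticity claim must go.

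Your exponent count is also off. The branchwise factor $\alpha^3\tau^{2k+1}$ comes from the mixed derivative $\partial_\tau^{2k+1}\partial_\alpha^3$, which is $2k+4$ derivatives in total. Each of them costs a factor $m_{k,n}h_n$ when it lands on a repeated block, and this includes the $\tau$-derivatives, because $W_\ell(I_{n,q}(\tau))$ depends on $\alpha\tau$ over an interval of relative length $h_n$. So $\|\partial_\tau^r\partial_\alpha^sW_\ell\|\le K_Dh_n$ for every $1\le r+s\le D$. The Leibniz rule therefore gives $K\bigl[1+(m_{k,n}h_n)^{2k+4}\bigr]$, not $(m_{k,n}2^{-n})^{2k+2}$.

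The stated range $\beta_k<1+1/(2k+2)$ is exactly $(2k+2)\beta_k<2k+3$. That is the convergence condition for $\sum_n 2^n m_{k,n}^{-2}(m_{k,n}h_n)^{2k+4}$. Your sum with exponent $2k+2$ would already converge for $\beta_k<1+1/(2k)$, so it is not ``precisely'' the stated range. The paper's range does work, but it works for the $(m_{k,n}h_n)^{2k+4}$ bound that the Leibniz estimate actually produces.
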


\begin{proof}
By Lemma~\ref{lem:mixing-lemma} and the same telescoping argument as in the preceding subsection, it suffices to establish the following one-step estimate for the averaged unitary:
\begin{align}
\left\| \mathbb E_{\omega}\!\left[ \mathscr T_{{\rm rand},2k}(\delta t,\omega) \right] -e^{-i\delta t H} \right\| = \mathcal O\left( \alpha^3\delta t^{2k+1} \right).
\label{eq:randomized-high-order-one-step-bound}
\end{align}

\emph{(i) Scaling in $\delta t$.} Fix a correction label $\omega$. Since every $W$-block over an interval of length $\mathcal O(\delta t)$ is $I+\mathcal O(\delta t)$, the group commutator structure of them gives $C_\omega(\delta t)=I+\mathcal O(\delta t^2)$. Hence the randomized base formula remains first-order consistent:
\begin{align}
\mathscr T_{\rm rand}(\delta t,\omega) = I-i\delta t H+\mathcal O(\delta t^2).
\end{align}
Moreover, because the same sample $\omega$ is reused throughout each high-order step, $\mathscr T_{{\rm rand},2}(-\delta t,\omega) = \mathscr T_{{\rm rand},2}(\delta t,\omega)^\dagger$. Thus $\mathscr T_{{\rm rand},2}$ is symmetric and first-order consistent, and hence has local error $\mathcal O(\delta t^3)$. The standard Suzuki-recursion argument~\cite{suzuki1990fractal} (e.g., the one used in the proof of Theorem~\ref{thm:randomized-high-order-sm}) therefore applies here, giving $\mathscr T_{{\rm rand},2k}(\delta t,\omega) = e^{-i\delta t H} + \mathcal O\left(\delta t^{2k+1}\right)$. 

\emph{(ii) Scaling in $\alpha$.} For a (signed) step length $x$, let $U(x;\alpha):=e^{-ix(A+\alpha B)}$. Equation~\eqref{eq:thrift-second-order-error-sm} and Lemma~\ref{lem:local-group-commutator} imply that, for each fixed $\omega$,
\begin{align}
\mathscr T_{\rm rand}(x,\omega;\alpha) = U(x;\alpha)+\alpha^2D_\omega(x)+\mathcal O(\alpha^3), \qquad \mathbb E_\omega[D_\omega(x)]=0.
\label{eq:randomized-base-alpha-expansion}
\end{align}
The mean-zero property follows because $p_{a,b,n,q}^{(k)}m_{k,n}^2=1$ and the dyadic rectangles partition the time-ordering triangle. Taking the adjoint at signed time $-x$ shows that the adjointed base factor $\mathscr T_{\rm rand}(-x,\omega)^\dagger$ has the same expansion with a mean-zero second-order error.

By construction, $\mathscr T_{{\rm rand},2k}(\delta t,\omega;\alpha)$ is a finite product of scaled copies of $\mathscr T_{{\rm rand},2}$. Since each $\mathscr T_{{\rm rand},2}$ consists of one base factor and one adjointed base factor, fully expanding the Suzuki recursion gives
\begin{align}
\mathscr T_{{\rm rand},2k}(\delta t,\omega;\alpha) = \prod_{\nu=1}^{N} \left[ U(x_\nu;\alpha)+\alpha^2D_{\nu,\omega} +\mathcal O_\omega(\alpha^3) \right], \qquad \sum_{\nu=1}^N x_\nu=\delta t.
\end{align}
Since all exact factors are generated by the same Hamiltonian, $\prod_{\nu=1}^{N}U(x_\nu;\alpha)=U(\delta t;\alpha)$. Substituting the expansion above and keeping terms through order $\alpha^2$ gives
\begin{align}
\mathscr T_{{\rm rand},2k}(\delta t,\omega;\alpha)-U(\delta t;\alpha) = \alpha^2\sum_{\nu=1}^N U_{>\nu}(0)D_{\nu,\omega}U_{<\nu}(0) +\mathcal O_\omega(\alpha^3),
\end{align}
where $U_{>\nu}(0):=U(x_N;0)\cdots U(x_{\nu+1};0)$ and $U_{<\nu}(0):=U(x_{\nu-1};0)\cdots U(x_1;0)$. Since $\mathbb E_\omega[D_{\nu,\omega}]=0$ for every $\nu$, each term in the sum vanishes after averaging:
\begin{align}
\mathbb E_\omega \left[ U_{>\nu}(0)D_{\nu,\omega}U_{<\nu}(0) \right] = U_{>\nu}(0)\mathbb E_\omega[D_{\nu,\omega}]U_{<\nu}(0) =0.
\end{align}
Thus the entire contribution at order $\alpha^2$ vanishes after averaging. Reusing the same $\omega$ in every factor does not affect this cancellation, since each term at order $\alpha^2$ contains only one $D_{\nu,\omega}$; terms containing two such error coefficients first appear at order $\alpha^4$. Consequently, for each fixed $\delta t$, $\mathbb E_\omega[\mathscr T_{{\rm rand},2k}(\delta t,\omega;\alpha)] -U(\delta t;\alpha)=\mathcal O(\alpha^3)$. It remains to show that the implied constant in this $\mathcal O(\alpha^3)$ scales as $\mathcal O(\delta t^{2k+1})$, which we do next.

\emph{(iii) Bounding the joint remainder.} 
Let
\begin{align}
F_k(\delta t,\alpha) := \mathbb E_\omega \left[ \mathscr T_{{\rm rand},2k}(\delta t,\omega;\alpha) \right] -U(\delta t;\alpha).
\end{align}
Parts~(i) and~(ii) show, at the level of Taylor coefficients, that the first possible joint contribution to $F_k$ is of order $\alpha^3\delta t^{2k+1}$. Taylor's theorem therefore reduces the desired error bound to controlling $\partial_{\delta t}^{2k+1}\partial_\alpha^{3}F_k$ uniformly near $(\delta t,\alpha)=(0,0)$.

Set $D:=2k+4$ and $h_n:=2^{-(n+1)}$. Denote a level-$n$ interval as $I_{n,q}(\delta t)=[\lambda_{n,q}\delta t,\mu_{n,q}\delta t]$, where $|\mu_{n,q}-\lambda_{n,q}|=\mathcal{O}(h_n)$. After the change of variables $s=\delta t u$, the corresponding $W$-block can be written as
\begin{align}
W_\ell(I_{n,q}(\delta t)) = \mathcal T\exp\left( -i\alpha\delta t \int_{\lambda_{n,q}}^{\mu_{n,q}} B_\ell^{(A)}(\delta t u)du \right).
\end{align}
Since the integration interval has length $\mathcal{O}(h_n)$, differentiating the Dyson series gives, for $1\leq r+s\leq D$,
\begin{align}
\left\| \partial_{\delta t}^{r}\partial_\alpha^{s} W_\ell(I_{n,q}(\delta t)) \right\| \leq K_Dh_n 
\label{eq:single-W-derivative-bound} 
\end{align}
uniformly for $|\delta t| \leq \delta t_0$ and $|\alpha| \leq \alpha_0$, for some fixed $\delta t_0, \alpha_0 > 0$. The same bound holds for the inverse $W$-block.

Now consider $W_\ell(I_{n,q}(\delta t))^{m_{k,n}}$. If a derivative of total order $r+s$ acts on $p$ distinct factors, there are at most $\mathcal O(m_{k,n}^p)$ choices of those factors, while Eq.~\eqref{eq:single-W-derivative-bound} contributes a factor $\mathcal O(h_n^p)$. Hence
\begin{align}
\max_{r+s\leq D} \left\| \partial_{\delta t}^{r}\partial_\alpha^{s} W_\ell(I_{n,q}(\delta t))^{m_{k,n}} \right\| \leq K_D\sum_{p=0}^{D}(m_{k,n}h_n)^p \leq K_D\left[1+(m_{k,n}h_n)^D\right]. \label{eq:repeated-W-derivative-bound}
\end{align}
The same estimate holds for $W_\ell(I_{n,q}(\delta t))^{-m_{k,n}}$.

A correction contains four such repeated $W$-blocks, and $\mathscr T_{{\rm rand},2k}$ contains only a fixed number of corrections for fixed $k$. Applying the product rule once more therefore gives
\begin{align}
\max_{r+s\leq D} \left\| \partial_{\delta t}^{r}\partial_\alpha^{s} \mathscr T_{{\rm rand},2k} (\delta t,(a,b,n,q);\alpha) \right\| \leq K_{k,L,D}\left[1+(m_{k,n}h_n)^D\right]. 
\label{eq:high-order-branch-derivative-bound}
\end{align}
The no-correction branch contains only a fixed number of $W$-blocks and is bounded independently of $n$.

We now average Eq.~\eqref{eq:high-order-branch-derivative-bound} over the correction labels. Using $p_{a,b,n,q}^{(k)}=m_{k,n}^{-2}$ and the fact that there are $2^n$ choices of $q$ at level $n$, we obtain
\begin{align}
\sum_{1\leq a<b\leq L} \sum_{n=0}^{\infty} \sum_{q=0}^{2^n-1} p_{a,b,n,q}^{(k)} \left[1+(m_{k,n}h_n)^D\right] \leq K_{k,L,D} \left[ 1+ \sum_{n=0}^{\infty}2^{(1-2\beta_k)n} + \sum_{n=0}^{\infty} 2^{[1+(D-2)\beta_k-D]n} \right],
\label{eq:averaged-derivative-summability}
\end{align}
where we used $m_{k,n}=\Theta(2^{\beta_k n})$ and $h_n=\Theta(2^{-n})$.
Since $D=2k+4$, the exponent in the last sum is
\begin{align}
1+(D-2)\beta_k-D = (2k+2)\beta_k-(2k+3)<0.
\end{align}
The first series converges because $\beta_k>1$, while the second converges by the assumption $\beta_k<1+1/(2k+2)$. Therefore, Eqs.~\eqref{eq:high-order-branch-derivative-bound} and \eqref{eq:averaged-derivative-summability} imply that the branchwise mixed derivatives through total order $D$ are summable over $\omega$. Dominated convergence theorem then allows differentiation under the expectation through total order $D$, in particular through $\partial_{\delta t}^{2k+1}\partial_\alpha^3$. Parts~(i) and~(ii) therefore imply $\partial_{\delta t}^{r}F_k(0,\alpha)=0$ for $r=0,\ldots,2k$, and $\partial_\alpha^{s}F_k(\delta t,0)=0$ for $s=0,1,2$.
Applying Taylor's theorem with integral remainder successively in $\delta t$ and $\alpha$ gives
\begin{align}
F_k(\delta t,\alpha) = \frac{\alpha^3\delta t^{2k+1}}{2!(2k)!} \int_0^1\int_0^1 (1-u)^{2k}(1-v)^2 \partial_{\delta t}^{2k+1} \partial_\alpha^{3} F_k(u\delta t,v\alpha) dvdu.
\label{eq:joint-taylor-remainder}
\end{align}
By Eqs.~\eqref{eq:high-order-branch-derivative-bound} and
\eqref{eq:averaged-derivative-summability}, there exists a constant
$M_{k,L}<\infty$ such that
\begin{align}
\sup_{|x|\leq\delta_0,~|y|\leq\alpha_0} \left\| \partial_x^{2k+1}\partial_y^3F_k(x,y) \right\|
\leq M_{k,L}.
\label{eq:uniform-mixed-derivative-bound}
\end{align}
Taking the norm of Eq.~\eqref{eq:joint-taylor-remainder} and using Eq.~\eqref{eq:uniform-mixed-derivative-bound} gives
\begin{align}
\|F_k(\delta t,\alpha)\| 
&\leq \frac{M_{k,L}|\alpha|^3|\delta t|^{2k+1}}{2!(2k)!} \int_0^1(1-u)^{2k}du \int_0^1(1-v)^2dv \nonumber\\ 
&= \frac{M_{k,L}}{6(2k+1)!} |\alpha|^3|\delta t|^{2k+1} \nonumber\\ 
&= \mathcal O\left(\alpha^3\delta t^{2k+1} \right),
\end{align}
which shows Eq.~\eqref{eq:randomized-high-order-one-step-bound}.

\emph{(iv) Expected gate count.}
Finally, a $2k$-th order step contains $N_k=2\cdot5^{k-1}$ base randomized-THRIFT factors. Each base factor uses $L$ uncorrected $W$-blocks and, on a level-$n$ correction branch, an additional $4m_{k,n}$ blocks. Therefore,
\begin{align}
\mathbb E_\omega[G_{\rm base}] &= L+ 4\binom{L}{2} \sum_{n=0}^{\infty}\frac{2^n}{m_{k,n}} \nonumber\\ 
&\leq L+ \frac{4\binom{L}{2}}{c_k} \sum_{n=0}^{\infty}2^{(1-\beta_k)n} = \mathcal O(L), 
\end{align} 
where we used $\beta_k>1$ and $c_k=\Omega(L)$. 
Hence $\mathbb E_\omega[G_{2k}] = \mathcal O(5^{k-1}L)$, which is $\mathcal O(L)$ for every fixed $k$.
\end{proof}

\subsection{Proof of Proposition~\ref{prop:strong-access-barrier}}

\begin{proposition}[Restatement of Proposition~\ref{prop:strong-access-barrier}]
Fix $t\ne0$, $L\ge2$, and a finite $M\in\mathbb N$. Set $B_0:=0$. For real coefficients $c_1,\ldots,c_M$ and labels $\ell_1,\ldots,\ell_M\in\{0,1,\ldots,L\}$, define
\begin{align}
\mathscr D_M(t,\alpha):= e^{-ic_Mt(A+\alpha B_{\ell_M})}\cdots
e^{-ic_1t(A+\alpha B_{\ell_1})}.
\end{align}
Let $\mathcal D_M(t, \alpha)(\rho) := \mathscr D_M(t,\alpha) \rho \mathscr D_M(t,\alpha)^\dagger$ and $\mathcal U_H(t)(\rho):=e^{-itH}\rho e^{itH}$ where $H:=A+\alpha\sum_{\ell=1}^L B_\ell$.

For every choice of coefficients and labels independent of $A,B_1,\ldots,B_L$, and $\alpha$, there exist Hermitian matrices $A,B_1,\ldots,B_L$ such that
\begin{align}
\left\|\mathcal D_M(t,\alpha)- \mathcal{U}_H(t)\right\|_\diamond =\Omega(\alpha^2).
\label{eq:strong-access-quantitative-sm}
\end{align}
\end{proposition}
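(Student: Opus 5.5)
The plan follows the template of the proof of Proposition~\ref{prop:deterministic-barrier-standard}. I will first rule out all coefficient choices that already fail at lower order. Then I will show that any formula passing those checks has a nonvanishing $\alpha^2$ coefficient for a suitable witness $A,B_1,\ldots,B_L$.

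\emph{Reductions.} Set $c:=\sum_j c_j$. If $c\neq1$, take $B_\ell=0$ and $A=\operatorname{diag}(0,\pi/((c-1)t))$, exactly as before; this gives diamond distance $2$.

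So assume $c=1$. Put $x_0:=0$ and $x_j:=\sum_{i\le j}c_i t$, and write each factor as $e^{-ix_jA}W_{\ell_j}([x_{j-1},x_j])e^{ix_{j-1}A}$. Here $W_\ell$ is the interaction-picture block of Eq.~\eqref{eq:W-interval-implementation}, with oriented intervals when $c_j<0$. This gives
\begin{align}
\mathscr D_M(t,\alpha)=e^{-itA}\,W_{\ell_M}([x_{M-1},x_M])\cdots W_{\ell_1}([x_0,x_1]).
\end{align}
For each label $\ell\ge1$, define the signed multiplicity $f_\ell(s):=\sum_{j:\ell_j=\ell}\sigma_j\mathbf 1_{I_j}(s)$, where $I_j$ is the unoriented interval and $\sigma_j=\operatorname{sgn}(c_j)$. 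The $\alpha$-linear term of the product is then $-i\int f_\ell(s)B^{(A)}_\ell(s)ds$ for each $\ell$.

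Suppose $f_\ell\neq\mathbf 1_{[0,t]}$ on a set of positive measure for some $\ell$. Then I take $A=\operatorname{diag}(0,\lambda/t)$, $B_\ell=\sigma_x$ and the other $B$'s zero. The $(2,1)$ entry of the first-order mismatch is the Fourier transform of $f_\ell-\mathbf 1_{[0,t]}$, a nonzero compactly supported $L^1$ function, so it is nonzero at some $\lambda$. The input $|1\rangle\langle1|$ argument from before then gives $\Omega(\alpha)$, which implies $\Omega(\alpha^2)$ for small $\alpha$. Hence we may assume $f_\ell=\mathbf 1_{[0,t]}$ a.e. for every $\ell\ge1$. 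Label-$0$ factors only shift the clock and contribute nothing.

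\emph{Second order.} Fix labels $a\neq b$; since $L\ge2$ such a pair exists. Expanding the product to order $\alpha^2$, the cross terms with $B_a$ to the left of $B_b$ give $-\int\!\!\int K^{\rm D}(s,u)B_a^{(A)}(s)B_b^{(A)}(u)\,du\,ds$. The deterministic kernel is
\begin{align}
K^{\rm D}(s,u)=\sum_{i<j,\ \ell_j=a,\ \ell_i=b}\sigma_j\sigma_i\mathbf 1_{I_j}(s)\mathbf 1_{I_i}(u).
\end{align}
The ideal Dyson series gives the kernel $\mathbf 1_{0<u<s<t}$ instead.

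I pick a witness that isolates this single ordering. Let $A=\operatorname{diag}(0,\omega_1,\omega_1+\omega_2)$ on $\mathbb C^3$, with $B_b=|1\rangle\langle0|+|0\rangle\langle1|$, $B_a=|2\rangle\langle1|+|1\rangle\langle2|$, and all other $B_\ell=0$. Then the $(2,0)$ matrix element receives order-$\alpha^2$ contributions only from the product $B_a(\cdot)B_b(\cdot)$. Same-label products and the reversed order cannot connect $|0\rangle$ to $|2\rangle$. That entry equals
\begin{align}
-\int\!\!\int \bigl(K^{\rm D}(s,u)-\mathbf 1_{0<u<s<t}\bigr)e^{i\omega_2 s}e^{i\omega_1 u}\,du\,ds,
\end{align}
up to a unimodular factor, using $e^{isA}|2\rangle\langle1|e^{-isA}=e^{i\omega_2 s}|2\rangle\langle1|$ and similarly for $B_b$. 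This is a two-dimensional Fourier transform.

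\emph{Main obstacle: the kernel mismatch is nonzero.} The function $K^{\rm D}$ is a finite signed sum of products of interval indicators. It is therefore constant on each cell of the finite grid generated by the endpoints $\{x_j\}$ in both coordinates. The triangle indicator $\mathbf 1_{u<s}$ is not constant on any grid cell that meets the diagonal, and some cell meets it in its interior because the grid covers $[0,t]^2$. Hence $K^{\rm D}-\mathbf 1_{\triangle}$ is nonzero on a set of positive measure. Its Fourier transform is then nonzero at some $(\omega_1,\omega_2)$ by injectivity on $L^1$.

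The care needed is in the bookkeeping: the signs $\sigma_j$, intervals traversed backwards, and checking that orientation does not affect the grid argument. It does not, since the indicators are taken on unoriented intervals.

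\emph{Conclusion.} With these frequencies, the first-order terms cancel by the reduction step. For the input $|0\rangle\langle0|$, the output difference then has a $(2,0)$ entry of size $|\alpha|^2|\kappa|$ with $\kappa\neq0$, plus $\mathcal O(\alpha^3)$. The off-diagonal trace-norm computation used in the earlier proof gives $\|(\mathcal D_M-\mathcal U_H)(\rho)\|_1\ge 2|\kappa|\alpha^2-\mathcal O(\alpha^3)$. This yields Eq.~\eqref{eq:strong-access-quantitative-sm}.
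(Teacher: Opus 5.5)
Your proposal is correct and follows the paper's proof essentially step for step. It uses the same zeroth-order reduction, the same $W$-block factorization, the same mixed-label kernel $K_{ab}$, and the same three-level witness isolating the ordered product $B_a^{(A)}(s)B_b^{(A)}(u)$, closed off by Fourier injectivity.

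The differences are minor:
\begin{itemize}
\item You keep two frequencies and detect $K_{ab}-\mathbf 1_{\{0<u<s<t\}}\neq0$ on a diagonal grid cell. The paper instead sets $\omega_1=0$ and compares the step-function marginal $k_{ab}(s)$ with $s$, which is the same obstruction after integrating out $u$.
\item Your first-order reduction is harmless but not needed. On input $|0\rangle\langle0|$, the $(2,0)$ entry of the output difference $X$ has no $\mathcal O(\alpha)$ term, so $\|X\|_1\ge|\langle2|X|0\rangle|$ already gives $\Omega(\alpha^2)$.
\item As written, your $\sigma_j$ and ideal kernel are only right for $t>0$. For $t<0$, either reduce to $t>0$ as the paper does, or use orientation signs $\operatorname{sgn}(c_jt)$ and the ideal kernel $\mathbf 1_{\{t<s<u<0\}}$; the grid argument is unchanged.
\end{itemize}
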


\begin{proof}
It suffices to consider $t>0$. (The case $t<0$ follows by replacing $t$ by $|t|$ and all Hamiltonians by their negatives.)

\emph{(i) Zeroth-order consistency.}
Set
\begin{align}
x_0:=0, \qquad
x_j:=t\sum_{q=1}^{j}c_q,
\quad j=1,\ldots,M.
\label{eq:strong-access-cumulative-times-sm}
\end{align}
At $\alpha=0$,
\begin{align}
\mathscr D_M(t,0)=e^{-ix_MA}.
\end{align}
If $x_M \neq t$, choose $A:=\operatorname{diag}(0,\pi/(x_M-t))$ and set $B_1=\cdots=B_L=0$. Then $e^{itA}e^{-ix_MA}=\operatorname{diag}(1,-1)$, so the corresponding unitary channels have diamond distance $2$ for every $\alpha$, which is stronger than Eq.~\eqref{eq:strong-access-quantitative-sm}. Hence it remains only to consider $x_M=t$. 

\emph{(ii) Interaction-picture factorization.}
We extend the $W$-block notation to any oriented interval $[x,y]\subset \mathbb R$ by
\begin{align}
W_\ell([x,y]) :=e^{iyA}e^{-i(y-x)(A+\alpha B_\ell)}e^{-ixA}.
\label{eq:strong-access-oriented-W-sm}
\end{align}
The same definition applies when $y<x$, and $W_0([x,y])=I$. For $I_j=[x_{j-1},x_j]$, the identity
\begin{align}
e^{-ic_jt(A+\alpha B_{\ell_j})}e^{-ix_{j-1}A}
=e^{-ix_jA}W_{\ell_j}(I_j)
\end{align}
gives, upon iteration,
\begin{align}
\mathscr D_M(t,\alpha) =e^{-itA}S_D(t,\alpha), \qquad
S_D(t,\alpha) :=W_{\ell_M}(I_M)\cdots W_{\ell_1}(I_1).
\label{eq:strong-access-ip-factorization-sm}
\end{align}
For each $j$, define the oriented indicator
\begin{align}
\chi_j(s):=
\begin{cases}
1, & x_{j-1}<s<x_j,\\
-1, & x_j<s<x_{j-1},\\
0, & \text{otherwise}.
\end{cases}
\label{eq:strong-access-oriented-indicator-sm}
\end{align}
The Dyson expansion of the interaction-picture propagator $W_{\ell_j}(I_j)$ gives
\begin{align}
W_{\ell_j}(I_j) = I-i\alpha\int_{\mathbb R} \chi_j(s)B_{\ell_j}^{(A)}(s) ds + \mathcal O(\alpha^2).
\label{eq:strong-access-W-first-order-sm}
\end{align}
For $x_j<x_{j-1}$, the same expansion follows from $W_\ell([x_{j-1},x_j])=W_\ell([x_j,x_{j-1}])^{-1}$, with the sign encoded by $\chi_j$.

Fix two distinct labels $a,b\in\{1,\ldots,L\}$, which is possible because $L\ge2$. We compare the $\mathcal O(\alpha^2)$ terms proportional to the operator word $B_a^{(A)}(s)B_b^{(A)}(u)$. Since each block $W_{\ell_j}$ contains only one perturbation $B_{\ell_j}$, such a mixed-label term cannot arise from the second-order expansion of a single block. It can only arise by taking the first-order term from two different blocks.

Recall that $S_D(t,\alpha)=W_{\ell_M}(I_M)\cdots W_{\ell_1}(I_1)$. Thus, $B_a^{(A)}(s)$ appears to the left of $B_b^{(A)}(u)$ precisely when it is chosen from a block $p$ lying to the left of a block $q$, namely when $p>q$, $\ell_p=a$, and $\ell_q=b$. Multiplying the two first-order terms therefore gives the following contribution:
\begin{align}
-\alpha^2\int_{\mathbb R^2} K_{ab}(s,u) B_a^{(A)}(s)B_b^{(A)}(u)dsdu,
\label{eq:strong-access-circuit-kernel-term-sm}
\end{align}
where
\begin{align}
K_{ab}(s,u) := \sum_{\substack{p>q\\ \ell_p=a,\ell_q=b}} \chi_p(s)\chi_q(u).
\label{eq:strong-access-circuit-kernel-sm}
\end{align}

On the other hand, the exact interaction-picture evolution is
\begin{align}
U_I(t) &:=e^{itA}e^{-it(A+\alpha\sum_{\ell=1}^{L}B_\ell)} =\mathcal T\exp\left(-i\alpha\int_0^t \sum_{\ell=1}^{L}B_\ell^{(A)}(s)ds\right).
\end{align}
In its second-order Dyson expansion, the operator $B_a^{(A)}(s)B_b^{(A)}(u)$ occurs when the time $s$ of the left operator is later than the time $u$ of the right operator. Hence its coefficient is described by the time-ordering kernel
\begin{align}
T_t(s,u):=\mathbf 1_{\{0<u<s<t\}}.
\label{eq:strong-access-ideal-kernel-sm}
\end{align}
Accordingly, the corresponding exact contribution is obtained from Eq.~\eqref{eq:strong-access-circuit-kernel-term-sm} by replacing $K_{ab}$ with $T_t$. (See Fig.~\ref{fig:strong-access-marginal-obstruction-sm}(a,b) for the concrete four-block example $\ell=(b,a,b,a)$.)

\emph{(iii) Kernel mismatch.} We now show that these two kernels cannot agree. Integrating $K_{ab}(s,u)$ over $u$ gives
\begin{align}
k_{ab}(s) := \int_{\mathbb R}K_{ab}(s,u)du = \sum_{\substack{p>q\\ \ell_p=a, \ell_q=b}} (x_q-x_{q-1})\chi_p(s).
\label{eq:strong-access-circuit-marginal-sm}
\end{align}
Because this is a finite linear combination of interval indicators, $k_{ab}$ is a finite step function. By contrast, the corresponding marginal of the ideal kernel is 
\begin{align}
\int_{\mathbb R}T_t(s,u)du=s, \qquad 0<s<t.
\label{eq:strong-access-ideal-marginal-sm}
\end{align}
Since the set of endpoints $\{x_0,\ldots,x_M\}$ is finite, there exists a nonempty open interval $J\subset(0,t)$ containing no endpoint. The function $k_{ab}$ is constant on $J$, whereas $s\mathbf 1_{(0,t)}(s)=s$ varies strictly there, as illustrated for the same four-block example in Fig.~\ref{fig:strong-access-marginal-obstruction-sm}(c). Consequently, 
\begin{align}
g_{ab}(s) := \int_{\mathbb R} \bigl(K_{ab}(s,u)-T_t(s,u)\bigr) du = k_{ab}(s)-s\mathbf 1_{(0,t)}(s)
\label{eq:strong-access-marginal-difference-sm}
\end{align}
does not vanish almost everywhere.

\begin{figure}[t!] 
\centering 
\includegraphics[width=\linewidth]{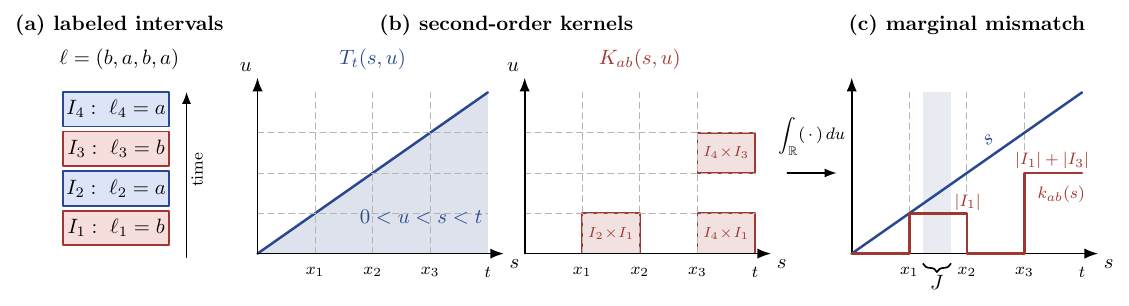} 
\caption{ Kernel and marginal mismatch for the four-block example $\ell=(b,a,b,a)$. (a) The labeled intervals $I_1,\ldots,I_4$ in product order. (b) The ideal time-ordering kernel $T_t(s,u)=\mathbf 1_{\{0<u<s<t\}}$ and the corresponding circuit kernel $K_{ab}(s,u) = \mathbf 1_{I_2\times I_1}(s,u) + \mathbf 1_{I_4\times I_1}(s,u) + \mathbf 1_{I_4\times I_3}(s,u)$. (c) Integrating over $u$ gives the strictly varying ideal marginal $s$ and the circuit marginal $k_{ab}(s)$, which is constant on the shaded open interval $J$. More generally, every finite circuit produces a finite step function $k_{ab}$, so it cannot agree with $s$ throughout $(0,t)$. } 
\label{fig:strong-access-marginal-obstruction-sm} 
\end{figure}

\emph{(iv) Three-dimensional witness.} It remains to turn this kernel mismatch into an operator-norm lower bound. We show that the difference $K_{ab}-T_t$ has a nonzero marginal and then construct a three-dimensional system whose matrix element probes a nonzero Fourier component of this marginal. The construction isolates the ordered term $B_aB_b$, preventing cancellation by the remaining second-order terms. To begin, from Eq.~\eqref{eq:strong-access-marginal-difference-sm} and the uniqueness of the Fourier transform, there exists $\omega\in\mathbb R$ such that
\begin{align}
C_\omega := \int_{\mathbb R}g_{ab}(s)e^{i\omega s} ds \ne0.
\label{eq:strong-access-nonzero-fourier-sm}
\end{align}

On the three-dimensional space with basis $\{\ket{0},\ket{1},\ket{2}\}$, choose
\begin{align}
A =\omega\ket{2} \bra{2}, \qquad
B_a =\ket{2} \bra{1}+\ket{1} \bra{2}, \qquad
B_b=\ket{1} \bra{0}+\ket{0} \bra{1},
\label{eq:strong-access-witness-sm}
\end{align}
and set $B_\ell=0$ for $\ell\notin\{a,b\}$. %These matrices are Hermitian,
For this choice, a single application of either perturbation cannot connect $\ket{0}$ to $\ket{2}$. Hence all first-order contributions vanish in the matrix element between these two states. At second order, the only ordered product that connects $\ket{0}$ to $\ket{2}$ is $B_a^{(A)}(s)B_b^{(A)}(u)$: the right operator first maps $\ket{0}$ to $\ket{1}$, and the left operator then maps $\ket{1}$ to $\ket{2}$. Indeed,
\begin{align}
& B_b^{(A)}(u)\ket{0} =\ket{1}, \qquad
B_a^{(A)}(s)\ket{1} =e^{i\omega s}\ket{2},
\nonumber\\
& \bra{2}B_a^{(A)}(s)B_b^{(A)}(u)\ket{0}
=e^{i\omega s}.
\label{eq:strong-access-surviving-word-sm}
\end{align}
The same-label products $B_aB_a$ and $B_bB_b$, as well as the reverse ordering $B_bB_a$, have zero matrix element between $\ket{0}$ and $\ket{2}$. Therefore, this matrix element isolates precisely the second-order contribution governed by $K_{ab}-T_t$.

Using the definitions of $g_{ab}$ and $C_\omega$, we therefore obtain
\begin{align}
\bra{2}\bigl(S_D(t,\alpha)-U_I(t)\bigr)\ket{0} &= -\alpha^2\int_{\mathbb R^2} \bigl(K_{ab}(s,u)-T_t(s,u)\bigr)
\bra{2}B_a^{(A)}(s)B_b^{(A)}(u)\ket{0}dsdu +\mathcal O(\alpha^3)
\nonumber\\
&= -\alpha^2\int_{\mathbb R^2} \bigl(K_{ab}(s,u)-T_t(s,u)\bigr)e^{i\omega s}dsdu +\mathcal O(\alpha^3)
\nonumber\\
&= -\alpha^2\int_{\mathbb R} g_{ab}(s)e^{i\omega s}ds +\mathcal O(\alpha^3) \nonumber\\
&= -\alpha^2C_\omega+\mathcal O(\alpha^3).
\label{eq:strong-access-witness-error-sm}
\end{align}
%The remainder is $\mathcal O(\alpha^3)$ because $M$ is finite and all matrices are finite-dimensional.
Since $C_\omega\ne0$, there exists $\alpha_0>0$ such that, for every
$0<\alpha<\alpha_0$,
\begin{align}
\left| \bra{2}\bigl(S_D(t,\alpha)-U_I(t)\bigr)\ket{0} \right|
\ge \frac{|C_\omega|}{2}\alpha^2.
\end{align}

%The operator norm is bounded below by any matrix element between normalized vectors. Also, operator norm is unitarily invariant. Hence
%\begin{align} \left\| \mathscr D_M(t,\alpha) - e^{-it(A+\alpha\sum_{\ell=1}^{L}B_\ell)} \right\| &=  \left\|e^{-itA}(S_D(t,\alpha)-U_I(t))\right\| \nonumber\\ &= \left\|S_D(t,\alpha)-U_I(t)\right\| \nonumber\\ &\ge \left| \bra{2}\bigl(S_D(t,\alpha)-U_I(t)\bigr)\ket{0} \right| \nonumber\\ &\ge \frac{|C_\omega|}{2}\alpha^2. \end{align}
%Thus Eq.~\eqref{eq:strong-access-quantitative-sm} holds with $\kappa=|C_\omega|/2$ in the remaining case $x_M=t$. Together with the zeroth-order consistency condition of $x_M\ne t$, this completes the proof.

We now obtain a lower bound on the diamond-norm error by evaluating the two channels on the fixed input state $\rho_0:=|0\rangle\langle0|$. For the witness constructed above, the second-order unitary mismatch appears as a nonzero $(2,0)$ coherence in the difference of the corresponding output states. Define $X_\alpha:=S_D(t,\alpha)\rho_0S_D(t,\alpha)^\dagger-U_I(t)\rho_0U_I(t)^\dagger$. For the witnesses in Eq.~\eqref{eq:strong-access-witness-sm}, neither $B_a$ nor $B_b$ directly connects $|0\rangle$ to $|2\rangle$. Consequently,
\begin{align}
\langle2|S_D(t,\alpha)|0\rangle,
\ \langle2|U_I(t)|0\rangle
&=\mathcal O(\alpha^2), \nonumber \\
\langle0|S_D(t,\alpha)|0\rangle,
\ \langle0|U_I(t)|0\rangle
&=1+\mathcal O(\alpha^2).
\end{align}
It follows from the preceding expansion that $\langle2|X_\alpha|0\rangle = -\alpha^2C_\omega+\mathcal O(\alpha^3)$.  Hence, for all sufficiently small $\alpha>0$,
\begin{align}
|\langle2|X_\alpha|0\rangle|
\ge \frac{|C_\omega|}{2}\alpha^2.
\end{align}
Therefore, we obtain
\begin{align}
\left\|\mathcal D_M(t,\alpha)-\mathcal U_H(t)\right\|_\diamond &\ge \left\|\bigl(\mathcal D_M(t,\alpha)-\mathcal U_H(t)\bigr)(\rho_0) \right\|_1 \nonumber \\
&= \|e^{-itA}X_\alpha e^{itA}\|_1 \nonumber \\
&=\|X_\alpha\|_1 \nonumber \\
&\ge|\langle2|X_\alpha|0\rangle| \nonumber \\
&\ge\frac{|C_\omega|}{2}\alpha^2.
\end{align}

\end{proof}

\section{Numerical simulation details and additional remarks}
\label{subsec:gate-cost-evaluation}

\subsection{Error bounds and gate-cost evaluation}

Here we describe how the error threshold and minimum number of exponentials are evaluated for Figs.~\ref{fig:tfim-numerics} and \ref{fig:randomized_thrift_t9_phase_map}. Throughout, $\delta t=t/r$ denotes one simulation step.

\paragraph{Error bounds and minimum gate count estimates.}
Computing the diamond distance between two quantum channels is computationally too heavy for the system sizes considered in our simulations. We thus use an upper bound on the diamond distance to determine the minimum number of exponentials required to achieve a specified error tolerance.

For a deterministic $r$-step unitary $V_{\rm det}^{(r)}(t)$ and a randomized unitary $V_\omega^{(r)}(t)$, the standard unitary-channel bound (e.g., see Lemma~12 in Ref.~\cite{aharonov1998quantum}) and Lemma~\ref{lem:mixing-lemma} give
\begin{align} 
\epsilon_{\rm det}(r):= \left\| \mathcal{U}_{H}(t)-\mathcal{V}_{\rm det}^{(r)}(t) \right\|_{\diamond} \leq 2\left\|e^{-itH}-V_{\rm det}^{(r)}(t)\right\|, \\
\epsilon_{\rm rnd}(r) := \left\| \mathcal{U}_{H}(t)-\mathcal{V}_{\rm rnd}^{(r)}(t) \right\|_{\diamond} \leq 4\left\|e^{-itH} -\mathbb E_\omega[V_\omega^{(r)}(t)]\right\|,
\end{align}
with both right-hand sides capped at $2$. For a prescribed target precision $\epsilon$, we choose the smallest integer $r$ for which the corresponding upper bound satisfies $\epsilon_{\rm det}(r)\leq\epsilon$ or $\epsilon_{\rm rnd}(r)\leq\epsilon$, respectively.

For the standard-access formulas, independence between steps gives $\mathbb E_{\mathbf u}[\mathscr R_{2k}^{(r)}(t,\mathbf u)] =M_{2k}(\delta t)^r$, where $M_{2k}(\delta t)=\mathbb E_u[\mathscr R_{2k}(\delta t,u)]$. We evaluate this average using a positive Gauss--Legendre quadrature, $M_{2k,Q}(\delta t)=\sum_{j=1}^{Q}w_j\mathscr R_{2k}(\delta t,u_j)$, and increase $Q$ until the reported bounds and minimizing values of $r$ are unchanged to the displayed precision.

For randomized THRIFT, define $M_\infty(\delta t) =\mathbb E_\omega[\mathscr T_{\rm rand}(\delta t,\omega)]$. We explicitly include correction levels $n\leq N$ in $M_N(\delta t)$ and assign the omitted probability to the uncorrected THRIFT branch. The omitted tail satisfies $\|M_\infty(\delta t)-M_N(\delta t)\|\leq2q_N$, where
\begin{align}
q_N=\sum_{n>N}\frac{2^n}{m_n^2} \leq \bar q_N:= \frac{2^{(1-2\beta)(N+1)}}{c^2(1-2^{1-2\beta})}.
\end{align}
Hence, the ideal-ensemble error is certified by
\begin{align}
\epsilon_{\rm rnd}(r) \leq 4\left\|e^{-itH}-M_N(\delta t)^r\right\| +8r\bar q_N.
\end{align}
We use $N=12$, for which $\bar q_{12}=2.03\times10^{-10}$, and report the smallest integer $r$ satisfying the corresponding target error.

\paragraph{Expected randomized-THRIFT gate counts.}
We use the infinite correction ensemble in Eq.~\eqref{eq:local-correction-unitary-sm} with $L=2$ and $\beta=1.7$. Thus $c=\sqrt{\frac{2}{1-2^{-2.4}}}$ and $m_n=\left\lceil c\,2^{1.7n}\right\rceil$. Each branch $(n,q)$, with $q=0,\ldots,2^n-1$, has probability $p_{n,q}=m_n^{-2}$, while the remaining probability $p_{\varnothing}=1-\sum_{n=0}^{\infty}2^n/m_n^2$ is assigned to the uncorrected branch. After merging adjacent exponentials, the expected number of accessible exponentials per randomized step is
\begin{align}
3+\sum_{n=0}^{\infty} \left( \frac{8\times 2^n}{m_n}-\frac{1}{m_n^2} \right) =14.350515\ldots.
\end{align}
Consequently, for this example with $L=2$, deterministic first-order, deterministic second-order, and randomized first-order THRIFT use $3r$, $4r+1$, and approximately $14.35r$ exponentials, respectively.

\begin{figure}[t!]
    \centering
    \includegraphics[width=1\linewidth]{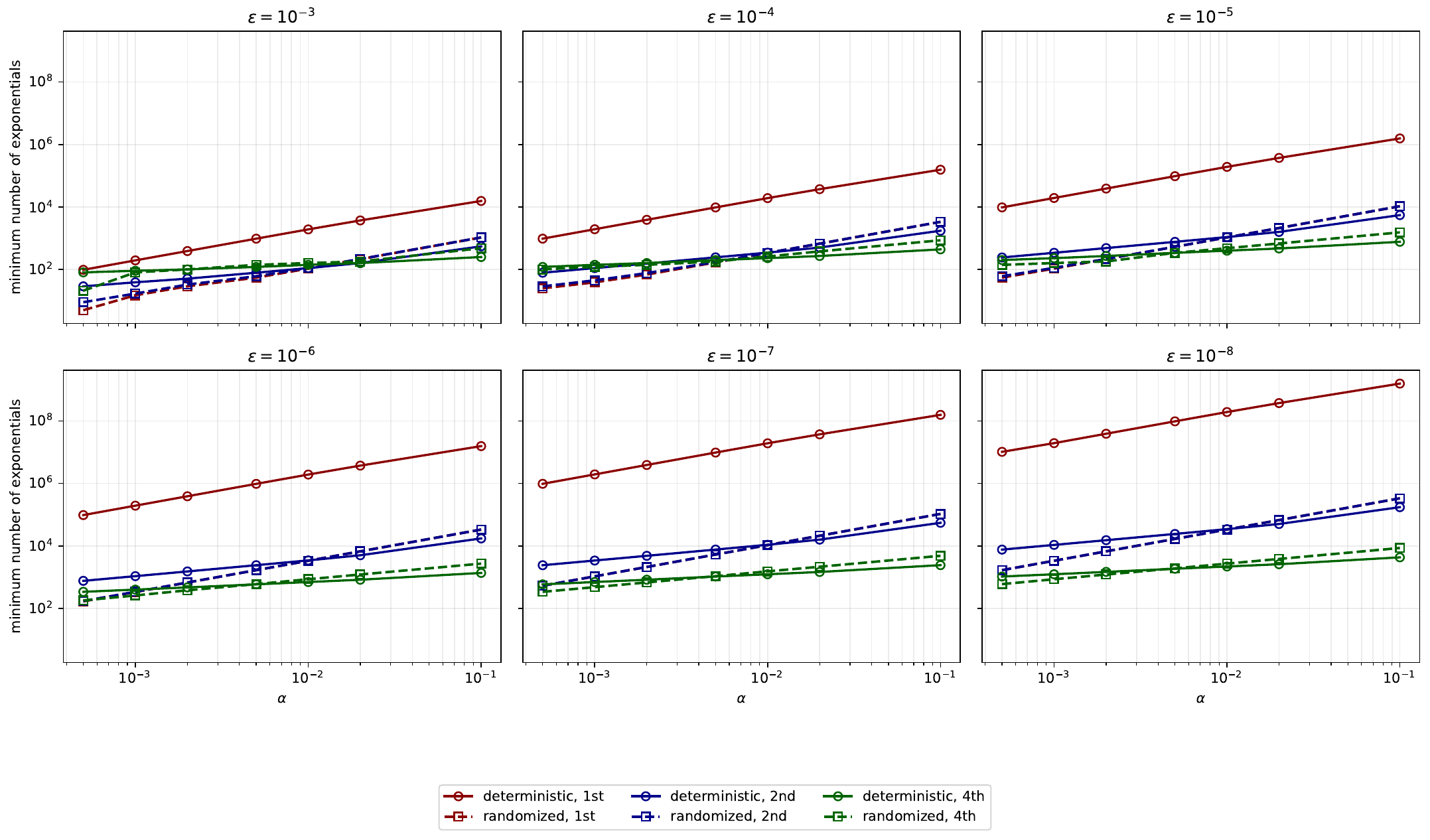}
    \caption{Minimum number of elementary exponentials required for the 9-qubit TFIM at $t=9$. Each panel corresponds to a different target error $\epsilon$ and compares deterministic and randomized formulas of first, second, and fourth order.}
    \label{fig:tfim-precision-gate-counts}
\end{figure}

\subsection{Gate-count crossovers across target precisions}

Figure~\ref{fig:tfim-numerics} in the main text reports the gate-count comparison at the fixed target precision $\epsilon=10^{-4}$. To examine the precision dependence, we repeat the same 9-qubit TFIM benchmark at $t=9$ for $\epsilon\in\{10^{-3},10^{-4},\ldots,10^{-8}\}$. For each pair $(\alpha,\epsilon)$, we minimize the number of exponentials/gates over the integer step number $r$, subject to the certified error bound described in the previous subsection.

\begin{figure}[t!]
    \centering
    \includegraphics[width=0.8\linewidth]{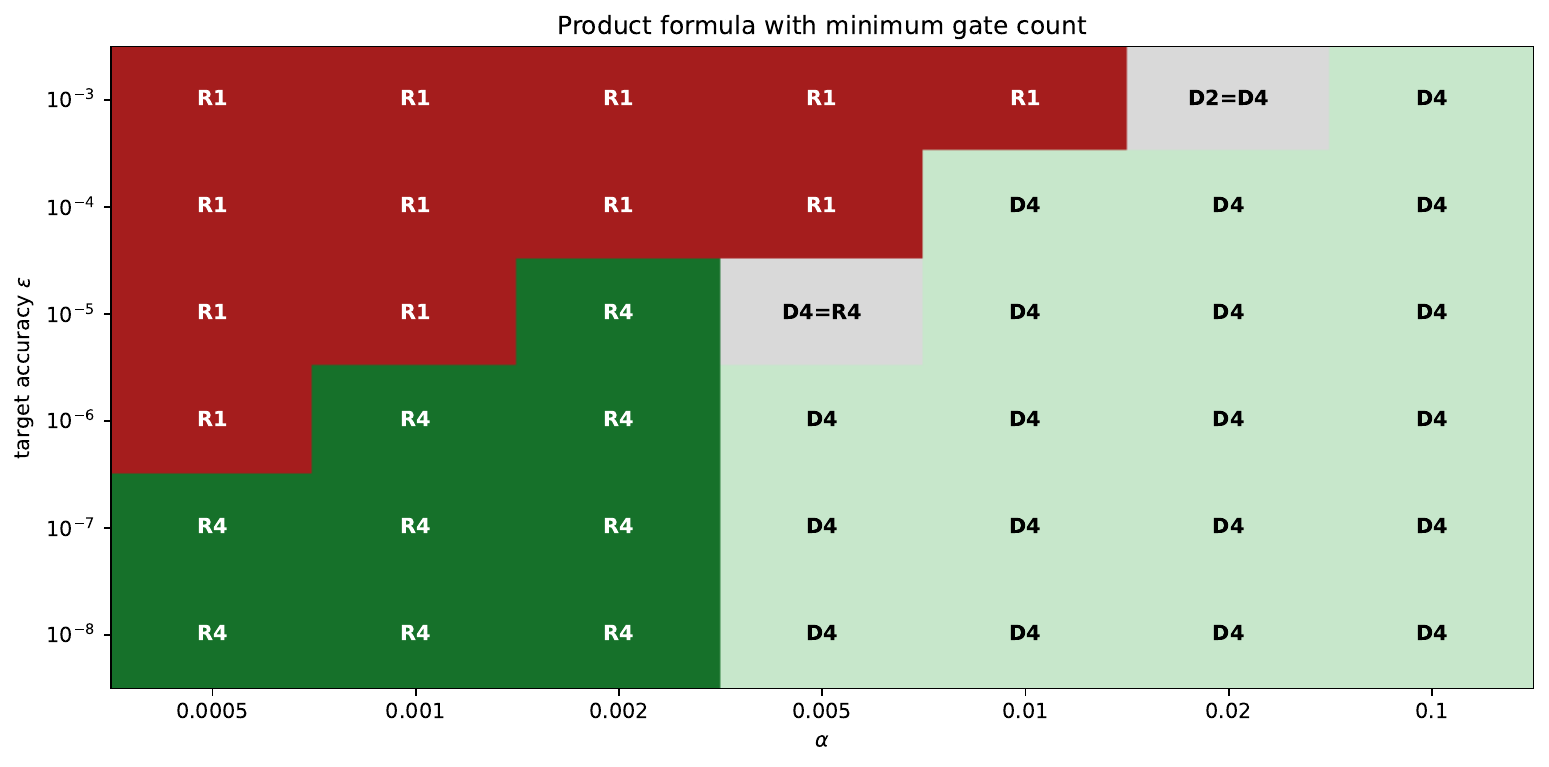}
    \caption{Gate-optimal product formula at each scanned pair $(\alpha,\epsilon)$. Here, $\mathrm{D}_k$ and $\mathrm{R}_k$ denote deterministic and randomized formulas of order $k$ (up to fourth-order here), respectively. An equality sign denotes a tie in the compiled exponential count.}
    \label{fig:tfim-gate-optimal-map}
\end{figure}

The detailed gate counts are shown in Fig.~\ref{fig:tfim-precision-gate-counts} and summarized by the gate-optimal formula map in Fig.~\ref{fig:tfim-gate-optimal-map}. A simple scaling argument explains the observed crossovers. At fixed $t$ and up to method-dependent prefactors,
\begin{align}
    N_{\mathrm{R}1},N_{\mathrm{R}2}
    \propto \left(\frac{\alpha^2}{\epsilon}\right)^{1/2},
    \qquad
    N_{\mathrm{R}4}
    \propto \left(\frac{\alpha^2}{\epsilon}\right)^{1/4},
    \qquad
    N_{\mathrm{D}4}
    \propto \left(\frac{\alpha}{\epsilon}\right)^{1/4}.
\end{align}
where $N_{\mathrm{X}}$ denotes the minimum number of exponentials required by formula $\mathrm{X}$ to achieve the target error $\epsilon$. Tightening the target precision therefore favors higher-order over the lower-order randomized formulas because of its weaker dependence on $\epsilon$. However, $\mathrm{R}_4$ also has a larger per-step cost, so its improved dependence on $\alpha$ is beneficial only when $\alpha$ is sufficiently small. This agrees with the figures: $\mathrm{R}_4$ becomes optimal in the high-precision, small-$\alpha$ regime, while $\mathrm{D}_4$ dominates at larger $\alpha$ and $\mathrm{R}_1$ remains preferable at lower precision. Although we considered only formulas up to fourth order, we therefore expect higher-order formulas to become advantageous at sufficiently high target precision.

\subsection{Finite-sampling error}

The error bounds above assume exact expectation values and, for randomized formulas, exact ensemble averages. In practice, let $X_s\in[-1,1]$ denote the outcome of the $s$-th measurement of an observable $O$ with $\|O\|\leq1$, and define
\begin{align}
\widehat{\mu}_S
:=\frac{1}{S}\sum_{s=1}^{S}X_s,
\qquad
\mu_{\rm alg}
:=\operatorname{Tr}\left[O\mathcal V^{(r)}(\rho)\right],
\qquad
\mu_{\rm ideal}
:=\operatorname{Tr}\left[O\mathcal U_H(t)(\rho)\right].
\end{align}
For randomized formulas, the circuit realization is sampled independently in each execution. Since $\mathbb E[\widehat{\mu}_S]=\mu_{\rm alg}$, Hoeffding's inequality implies that, with probability at least $1-\delta$,
\begin{align}
\left|\widehat{\mu}_S-\mu_{\rm alg}\right|
\leq
\sqrt{\frac{2\ln(2/\delta)}{S}}.
\end{align}
On the other hand, the algorithmic error satisfies
\begin{align}
\left|\mu_{\rm alg}-\mu_{\rm ideal}\right|
&\leq
\|O\|
\left\|
\bigl(\mathcal V^{(r)}-\mathcal U_H(t)\bigr)(\rho)
\right\|_1 \leq
\left\|\mathcal V^{(r)}-\mathcal U_H(t)\right\|_\diamond
= \epsilon_{\rm alg}(r),
\end{align}
where $\epsilon_{\rm alg}(r)$ denotes the corresponding deterministic or randomized error bound derived in the main text. Therefore, by the triangle inequality, with probability at least $1-\delta$,
\begin{align}
\left|\widehat{\mu}_S-\mu_{\rm ideal}\right|
\leq
\epsilon_{\rm alg}(r)
+{\sqrt{\frac{2\ln(2/\delta)}{S}}}.
\end{align}
Thus, finite sampling contributes a universal additive $\mathcal O(S^{-1/2})$ error to each algorithmic error bound.

\end{document}